\documentclass[12pt]{article}
\usepackage{latexsym}
\usepackage{amssymb}
\usepackage{graphicx}
\usepackage{enumerate}
\usepackage{amssymb}
\usepackage{amsmath}
\usepackage{color}
\usepackage{tikz-cd}
\usepackage{hyperref}
\hypersetup{hypertex=true,
	colorlinks=true,
	linkcolor=blue,
	anchorcolor=blue,
	citecolor=blue}
\usepackage{amsthm}
\usepackage{cleveref}
\crefformat{section}{\S#2#1#3} 
\crefformat{subsection}{\S#2#1#3}
\crefformat{subsubsection}{\S#2#1#3}
\def\Z{\mathbb{Z}}
\def\p{\partial}
\def\C{\mathbb{C}}

\def\bfL{{\mathbf{L}}}
\def\ccH{{\check{\mathbb{H}}}}

\def\Res{{\mathrm{Res}}}

\def\bfs{{\mathbf{s}}}
\def\sm{{\mathrm{reg}}}

\def\tE{{\widetilde{E}}}
\def\bfZ{{\mathbf{Z}}}

\def\bfX{\mathbf{X}}
\def\bfJ{{\mathbf{J}}}
\def\pr{{\mathrm{pr}}}
\def\bfI{{\mathbf{I}}}

\def\P{{\mathbb{P}}}
\def\top{{\mathrm{top}}}
\def\Im{\mathrm{Im}}
\def\vG{{\vec{\Gamma}}}
\def\bfx{{{\mathbf{x}}}}

\def\bfx{{\mathbf{x}}}
\def\bfu{{\mathbf{u}}}

\def\bfy{{\mathbf{y}}}
\def\Conf{{\mathrm{Conf}}}
\def\Flat{{\mathrm{Flat}}}
\def\bp{\bar{\p}}

\def\G{{\mathcal{G}}}
\def\cH{{\mathbb{H}}}
\def\id{{\mathrm{id}}}

\def\dbs{{{\bar{\partial}}^{\mathrm{sign}}}}

\def\a{\alpha}
\def\H{{\mathcal{H}}}

\def\bfw{{\mathbf{w}}}
\def\Re{\mathrm{Re}}

\def\bfi{{\mathbf{i}}}
\def\bft{{\mathbf{t}}}

\def\bfk{{\mathbf{k}}}

\def\bfl{{\mathbf{l}}}
\def\b{\beta}

\def\bfz{{\mathbf{z}}}

\def\bfi{{\mathbf{i}}}
\def\bfj{{\mathbf{j}}}
\def\ep{\epsilon}
\def\resch{{\widetilde{[0,\infty)^{|\vG_1|}}}}
\def\l{\lambda}

\def\cpsch{\widetilde{\C^{|\vG_1|}}}
\newcommand{\tmop}[1]{\ensuremath{\operatorname{#1}}}
\numberwithin{equation}{section}

\theoremstyle{plain}
\newtheorem{thm}{Theorem}[section]
\newtheorem*{theorem*}{Theorem}
\newtheorem{lem}[thm]{{{L}}emma}
\newtheorem{cor}[thm]{{C}orollary}
\newtheorem{prop}[thm]{{P}roposition}

\newtheorem{rem}[thm]{Remark}
\newtheorem{exm}[thm]{Example}
\newtheorem{defn}[thm]{{D}efinition}
\def\R{\mathbb{R}}
\def\Xint#1{\mathchoice
	{\XXint\displaystyle\textstyle{#1}}
	{\XXint\textstyle\scriptstyle{#1}}
	{\XXint\scriptstyle\scriptscriptstyle{#1}}
	{\XXint\scriptscriptstyle\scriptscriptstyle{#1}}
	\!\int}
\def\XXint#1#2#3{{\setbox0=\hbox{$#1{#2#3}{\int}$}
		\vcenter{\hbox{$#2#3$}}\kern-.5\wd0}}

\def\dashint{\Xint-}
\newcommand{\be}{\begin{equation}}
	\newcommand{\ee}{\end{equation}}
\newcommand{\ba}{\begin{aligned}}
	\newcommand{\ea}{\end{aligned}}

\begin{document}
	
	\title{Renormalizations in holomorphic field theories on K\"ahler manifolds}
	\author{Minghao Wang\footnote{School of Mathematical Sciences, East China Normal University, 200241, Shanghai, China, wmh18@tsinghua.org.cn}
		\and
		Junrong Yan\footnote{Department of Mathematics, Northeastern University, 02115, Boston, USA, j.yan@northeastern.edu} 
	}
    \date{}
	\maketitle
	\begin{abstract}
    The divergence of Feynman graph integrals is one of the central issues in the study of perturbative quantum field theories. A rigorous formulation of these integrals usually requires renormalization. In this paper, we prove that the Feynman graph integrals arising from holomorphic field theories on closed real-analytic K\"ahler manifolds are convergent with respect to heat-kernel renormalization, Cauchy principal value renormalization, and zeta-function renormalization. Moreover, these three renormalization procedures produce the same value. Our proof is based on the theory of wonderful compactifications in algebraic geometry, which provides a geometric understanding of these integrals. As a consequence, we establish a gauge anomaly formula for these graph integrals.
\end{abstract}
	\tableofcontents
	
	\section{Introduction}
    Quantum field theory originated in particle physics as a framework for describing the interactions of fundamental particles. It has since become a standard tool in physics. Beyond its remarkable success in physics, quantum field theory has also stimulated important developments in geometry, algebra, and analysis, with applications to knot invariants \cite{Witten:1988hf,kontsevich1994feynman}, 3-manifold invariants \cite{Reshetikhin:1991tc,bott2018integral,bott1999integral}, smooth structures on fiber bundles \cite{kontsevich1994feynman}, operad theory \cite{getzler1994operads,kontsevich1999operads}, deformation quantization of Poisson manifolds \cite{kontsevich2003deformation}, and many other topics. In this paper, we focus on the perturbative aspects of quantum field theory.

In perturbative quantum field theory, the quantities produced by the theory are often described in terms of Feynman weights or Feynman graph integrals; see \cref{Feynman weights for finite-dimensional field spaces} for a brief introduction. These integrals often take the form
\begin{equation}\label{divergence issue}
\int_{M^{|\vG_0|}}\check{W}(\vG,\Phi),
\end{equation}
where $M$ is a manifold, called the spacetime of the field theory; $\vG$ is a directed graph with $|\vG_{0}|$ vertices; $\Phi$ is a smooth differential form on $M^{|\vG_0|}$; and $\check{W}(\vG,\Phi)$ is a possibly singular differential form depending on the graph $\vG$ and on $\Phi$.

One of the central issues in perturbative quantum field theory is the possible divergence of the integral \eqref{divergence issue}, caused by singularities of the integrand. To resolve this, one introduces a procedure called renormalization. The basic philosophy can be described as follows:
\begin{enumerate}[(1)]
    \item Consider a continuous family of differential forms $\check{W}_{s}(\vG,\Phi)$ on $M^{|\vG_0|}$, 
    where $s$ is an element of a topological space, such that $\check{W}_{s_{0}}(\vG,\Phi)=\check{W}(\vG,\Phi)$, and $\check{W}_{s}(\vG,\Phi)$ has no singularities when $s\neq s_{0}$.
    \item Choose a decomposition of the integral
    \[
    \int_{M^{|\vG_0|}}\check{W}_{s}(\vG,\Phi)
    =
    W_{s}^{\mathrm{singular}}(\vG,\Phi)+W_{s}^{\mathrm{regular}}(\vG,\Phi),
    \]
    such that $\lim_{s\rightarrow s_{0}}W_{s}^{\mathrm{regular}}(\vG,\Phi)$ exists. Such a choice is called a renormalization scheme.
    \item Define $\lim_{s\rightarrow s_{0}}W_{s}^{\mathrm{regular}}(\vG,\Phi)$ to be the renormalized value of the Feynman graph integral.
\end{enumerate}

The renormalization theory of Feynman graph integrals has attracted considerable interest, including the classical BPHZ approach \cite{Bogoliubov:1957gp,Hepp:1966eg,Zimmermann:1969jj}, the Hopf-algebraic formulation of Connes-Kreimer \cite{Kreimer:1997dp,Connes:1998qv}, and Costello's approach to perturbative renormalization \cite{costellorenormalization}.

The literature contains many renormalization procedures, including momentum cutoffs, heat-kernel renormalization, dimensional regularization, and zeta-function renormalization, among others. Each procedure typically involves an additional choice of renormalization scheme, making it difficult to compare different approaches.

Despite the complexity of quantum field theory in general, for topological field theories such as Chern-Simons theory, which are often of particular interest in mathematics, the following result is well known and has motivated many applications in knot theory and 3-manifold topology; see Axelrod-Singer \cite{axelrod1994chern,Axelrod1991ChernSimonsPT} and Kontsevich \cite{kontsevich1994feynman}.

\begin{theorem*}[Axelrod-Singer and Kontsevich]
Feynman graph integrals in Chern-Simons theory are Lebesgue integrable. In particular, all reasonable renormalization procedures give the same value.
\end{theorem*}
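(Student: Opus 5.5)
The plan is to realize the Feynman graph integral as an integral of a smooth form over a compact manifold with corners, namely the Fulton--MacPherson/Axelrod--Singer compactification $M^{|\vG_0|}[\Delta]$ of the configuration space $\Conf_{|\vG_0|}(M)$. Here $M$ is a closed oriented $3$-manifold with a Riemannian metric.

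In Chern--Simons theory the propagator $P$ is a $2$-form on $M\times M\setminus\Delta$. It is built from the kernel of $d^*G$, where $G$ is the Green operator of the Hodge Laplacian on forms. Near the diagonal, in normal coordinates, it has the form $P(x,y)=\frac{1}{4\pi}\,\omega\!\left(\frac{x-y}{|x-y|}\right)+(\text{terms of order }|x-y|^{-1})$, where $\omega$ is the $SO(3)$-invariant area form on $S^2$. Its singularity is of order $|x-y|^{-2}$ in $\R^3$, which is locally integrable. For a graph $\vG$, the integrand $\check W(\vG,\Phi)$ is the wedge product of $\pi_{e}^*P$ over edges $e$, times the smooth form $\Phi$ on $M^{|\vG_0|}$.

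First I will show that $P$ extends to a smooth form on the real blow-up $\mathrm{Bl}_\Delta(M\times M)$, whose boundary is the unit sphere bundle $S(TM)$. This follows from the parametrix construction of the Green kernel: its Taylor expansion in polar coordinates $(r,\theta)$ around the diagonal has leading term homogeneous of degree $0$ in $r$ (the pullback of $\omega$ in $\theta$). The remaining terms, written in $dr$, $r\,d\theta$ and their products, are smooth up to $r=0$.

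Next I will use the Axelrod--Singer/Fulton--MacPherson compactification. This is a compact manifold with corners $\overline{\Conf}_n(M)$ containing $\Conf_n(M)$ as a dense open subset of full measure. For every pair $i\neq j$, the projection $\Conf_n(M)\to\Conf_2(M)$ extends smoothly to $\overline{\Conf}_n(M)\to \mathrm{Bl}_\Delta(M\times M)$. Pulling back along these maps, the integrand becomes the restriction of a smooth top form on a compact manifold with corners. Hence it is absolutely integrable: $\int_{M^n}|\check W|=\int_{\overline{\Conf}_n}|\check W|<\infty$, since the diagonals have measure zero. This gives Lebesgue integrability.

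For the second claim, every reasonable regularization $\check W_s$ converges to $\check W$ almost everywhere as $s\to s_0$. Examples are heat-kernel cutoff $P_t=\int_t^\infty d^*e^{-uL}du$, point-splitting, and $\zeta$-regularization. On $\overline{\Conf}_n$ these regularizations are uniformly bounded, for instance via Gaussian bounds for the heat kernel, which give $|P_t(x,y)|\lesssim |x-y|^{-2}$ uniformly in $t$. Dominated convergence then shows that all of them converge to the same Lebesgue integral, with no counterterms needed.

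The main obstacle is verifying smoothness of the pulled-back propagators at the deeper corner strata, where several points collide at different rates. There I would use the local description of $\overline{\Conf}_n$ by iterated blow-ups in nested coordinates. In such coordinates each difference $x_i-x_j$ equals a product of the scale parameters times a smooth nonvanishing vector. Its direction is therefore smooth, and $\omega$ evaluated on that direction is smooth.
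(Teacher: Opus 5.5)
The paper does not prove this statement; it quotes it from Axelrod--Singer and Kontsevich. Your argument for Lebesgue integrability is the standard one from those references: extend $P$ smoothly to $\mathrm{Bl}_\Delta(M\times M)$, pull back along $\overline{\Conf}_n(M)\to\mathrm{Bl}_\Delta(M\times M)$, and use compactness. However, your justification of the second sentence does not work, and one step of the first part is asserted too quickly.

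\textbf{The gap: the dominating function is not integrable.} The Gaussian bound gives $\prod_{e}|x_{h(e)}-x_{t(e)}|^{-2}$, and this is not integrable. For the theta graph it is $|x-y|^{-6}$ on $M\times M$ with $\dim M=3$. For any connected trivalent graph with $k$ vertices it is of order $r^{-3k}$ near the small diagonal, against a volume element of order $r^{3k-4}\,dr$. The integrability of $\check W$ does not come from the size of the individual propagators. It comes from cancellation in the wedge product: the leading term of $P$ is the pullback of the area form $\omega$ of $S^2$, and $\omega\wedge\omega=0$. Norm estimates cannot see this.

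For the same reason, a Euclidean bound $|P_t|\lesssim r^{-2}$ does not give boundedness on $\overline{\Conf}_n(M)$. In the basis $dr, d\theta, dy$ it allows coefficients of size $r^{-1}$ or $r^{-2}$. So dominated convergence must be run on $\overline{\Conf}_n(M)$, with bounds on the \emph{regularized} propagators as forms on the blow-up that are uniform in the regularization parameter. That uniformity does not follow from smoothness of the unregularized $P$. Nothing in your argument excludes a correction such as $r^{-1}\bigl(g(r^2/\epsilon)-g(\infty)\bigr)\,dr\wedge dy$. It vanishes pointwise as $\epsilon\to0$ but has size $\epsilon^{-1/2}$ where $r\sim\sqrt{\epsilon}$.

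A workable route is to compactify the regularization parameter together with the configuration variables. You could blow up $M\times M\times[0,\infty)_t$ along $\Delta\times\{0\}$, or use a compactified Schwinger space as this paper does (the authors note their strategy also applies to topological theories). Then show the regularized integrand extends smoothly there.

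\textbf{The soft spot: smoothness of $P$ on the blow-up.} This does not follow from homogeneity of the parametrix terms. Write $u=x-y=r\theta$, so $du^i=\theta^i\,dr+r\,d\theta^i$. A subleading term $k(u)\,du^i\wedge dy^j$ with $k$ homogeneous of degree $-1$ then contributes $r^{-1}k(\theta)\theta^i\,dr\wedge dy^j$, which is singular at $r=0$. You must show that such radial components cancel, and that no $\log r$ terms occur; Axelrod--Singer verify this for the Hodge propagator. Graphs with self-loops must also be excluded or handled separately, since $P$ has no value on the diagonal.
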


In this paper, we study renormalization in holomorphic field theories, which have attracted considerable interest in recent years and arise, for example, as holomorphic twists of supersymmetric field theories \cite{Johansen:1994aw,Costello2011NotesOS}. In complex dimension one, holomorphic field theories provide an underlying field-theoretic framework for vertex algebras \cite{Li:2016gcb}.

\subsection*{Main results}

Unlike in the topological case, Feynman graph integrals in holomorphic field theories may fail to be Lebesgue integrable. However, recent studies \cite{Li:2011mi,li2021regularized,Costello2015QuantizationOO,Williams:2018ows,budzik2023feynman} suggest that these integrals are nevertheless convergent in the sense that
\begin{equation}\label{convergence property5}
\lim_{s\rightarrow s_{0}}\int_{M^{|\vG_0|}}\check{W}_{s}(\vG,\Phi)
=
\lim_{s\rightarrow s_{0}}W_{s}^{\mathrm{regular}}(\vG,\Phi).
\end{equation}
When the spacetime $M$ is an affine space, the convergence property \eqref{convergence property5} was proved for heat-kernel renormalization by the first named author of this paper \cite{wang2025feynman}. Later, we proved the convergence property \eqref{convergence property5} for real-analytic K\"ahler spacetimes in the setting of Cauchy principal value renormalization \cite{WYFeynman}. In this paper, we generalize these previous results by proving the following theorem.

\begin{thm}[\Cref{thm:CPV HK equivalence} and \Cref{thm:zeta fubini theorem}]\label{main result1}
For holomorphic field theories on real-analytic K\"ahler manifolds, the Feynman graph integrals are convergent under the following renormalization procedures:
\begin{enumerate}[(1)]
    \item heat-kernel renormalization;
    \item Cauchy principal value renormalization;
    \item zeta-function renormalization.
\end{enumerate}
Moreover, these three procedures give the same renormalized value, and the assignment
\[
\Phi \longmapsto \lim_{s\rightarrow s_{0}}\int_{M^{|\vG_0|}}\check{W}_{s}(\vG,\Phi)
\]
defines a current on $M^{|\vG_0|}$, which we denote by $W(\vG,-)$.
\end{thm}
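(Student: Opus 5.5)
The plan is to reduce all three procedures to a single model: a meromorphic-type form on a wonderful compactification, paired against a smooth form. I will take the Cauchy principal value result of \cite{WYFeynman} as the anchor and show that heat-kernel and zeta-function renormalization converge to the same limit.

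First I pass to the Fulton--MacPherson type wonderful compactification $\widetilde{M^{|\vG_0|}}$ of $M^{|\vG_0|}$ along the diagonals indexed by connected subgraphs. Real-analyticity of the Kähler metric is used here: the propagator $P$ (the $\bar\partial^*$ applied to the Green kernel, or $\bar\partial^* e^{-t\Delta}$ integrated in $t$) has a real-analytic expansion near the diagonal. Its leading term is the flat Bochner--Martinelli kernel in local holomorphic coordinates, with analytic corrections. After pulling back, the integrand $\check W(\vG,\Phi)$ becomes locally, near each corner $\{u_1=\dots=u_k=0\}$ in normal-crossing coordinates, a sum of terms
\[
\prod_i \frac{\bar u_i^{a_i}}{|u_i|^{2b_i}}\cdot(\text{smooth}).
\]
The non-integrable pieces are controlled by the angular cancellation $\int_{|u|=r}\bar u^a u^{-b}\,d\theta=0$ unless the powers match, which is the mechanism already behind the CPV result.

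Next I treat the heat-kernel regularization. The regularized propagator is $P_{\ep}^{L}=\int_\ep^L \bar\partial^* e^{-t\Delta}dt$. Using the analytic parametrix for $e^{-t\Delta}$, I would write it on the blow-up as $P$ times a cutoff function $\chi(|u|^2/\ep)$ of the radial variables, up to terms that are uniformly integrable as $\ep\to0$. The key point is that $\chi$ is radial in each $u_i$. The same angular averaging therefore kills the divergent terms for every $\ep$, and the remaining terms converge by dominated convergence. Comparing with CPV, which is the same construction with $\chi$ replaced by the indicator $1_{|u|>\ep}$, the difference is a radial-profile change. It vanishes in the limit because the surviving (angle-invariant) terms are integrable.

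For zeta-function renormalization, the regularized propagator is $\bar\partial^*\Delta^{-1-s}$, and I would use the analytic expansion of its kernel near the diagonal, $\sim |x-y|^{2s}$ times the Bochner--Martinelli kernel. On the blow-up the integrand picks up factors $\prod_i |u_i|^{2s c_i}$ with $c_i>0$. The integral is then holomorphic in $s$ for $\Re s$ large. After angular averaging removes the negative-power terms, the remaining Mellin-type integrals $\int_0 r^{2sc-1+m}dr$ with $m\ge1$ are continuous at $s=0$. So the limit exists, with no pole, and equals the CPV value. This is the content of the zeta Fubini-type statement.

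Linearity in $\Phi$ is clear. Continuity in $\Phi$ follows because every bound above involves finitely many derivatives of $\Phi$ on the compact space, so $W(\vG,-)$ is a current.

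The main obstacle is showing that the analytic corrections to the leading Bochner--Martinelli kernel are uniformly compatible with the angular cancellation across nested corners, where several $u_i$ vanish simultaneously. The correction terms break exact $U(1)$-invariance. I would first try to show that each correction gains at least one extra power of some $|u_i|$, enough to become integrable, using the Kähler normal-coordinate expansion and the fact that the metric deviates from flat only at second order.
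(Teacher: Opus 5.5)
The gap is in the heat-kernel step, exactly at the obstacle you flag, and your proposed fix does not close it. Your comparison with CPV rests on writing the regularized integrand on the Fulton--MacPherson space as the CPV integrand times cutoffs that are radial in each $u_i$, plus a remainder that is uniformly integrable as $\ep\to0$. That picture is accurate in the flat model. There $\rho^2=|x|^2$, and $P_\ep$ is exactly $P$ times an incomplete Gamma function of $|x|^2/\ep$, so every cutoff is invariant under rotating the $u_i$. On a curved K\"ahler manifold it breaks in two places.

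First, the cutoff attached to an edge $e$ is a function of $\rho^2(z_{h(e)},z_{t(e)})/\ep=|\bfu^{\bfj_e}|^2F_e/\ep$. Only the Hermitian leading part of $F_e$ is rotation invariant; the cubic and higher terms of $\rho^2$ are not. Moreover, the monomials $\bfu^{\bfj_e}$ and the functions $F_e$ differ from edge to edge. The holomorphic derivatives in the $I_v$ also fall on these cutoffs, producing terms concentrated on the shells $\rho_e^2\sim\ep$.

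Second, the subleading terms of the curved heat kernel are not uniformly integrable. They are again kernels with holomorphic poles, and after multiplication by the other propagators and the vertex derivatives, the pole order $i$ along an exceptional divisor can be arbitrarily large. So gaining one power of $|u|$ does not give integrability. A non-invariant term $u^{-i}\cdot O(|u|)$ supported on a shell $|u|^2\sim\ep$ is only bounded by $\ep^{(3-i)/2}$, which blows up for $i\ge4$.

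Following your route would require a principal-value theorem for $\ep$-dependent, non-invariant, multi-scale smooth cutoffs. Concretely, you would expand each $F_e$ to order at least $i$ and match Fourier modes shell by shell at nested corners. The proposal does not supply this, so neither the existence of $W^{\mathrm{HK}}(\vG,\Phi)$ nor its equality with $W^{\mathrm{CPV}}(\vG,\Phi)$ is established.

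The paper never compares cutoffs on configuration space. It proves that $\int_{M^{|\vG_0|}}\widetilde W(\vG,\Phi)$ extends smoothly to the compactified Schwinger space $\resch$, in these steps:
\begin{enumerate}
\item complexify $t=|\tau|^2$ and show that $\tau^N P_{|\tau|^2}$ extends to $\mathrm{Bl}_{\triangle\times\{0\}}(M\times M\times\C)$, flat along the off-diagonal part of $\{\tau=0\}$;
\item pass to the wonderful compactification of $M^{|\vG_0|}\times\C^{|\vG_1|}$ along the $\triangle_S\times\mathcal Z_{S'}$;
\item show its map to $\cpsch$ is semistable and the integrand is flat on the critical locus;
\item push forward and descend to $\resch$.
\end{enumerate}
Heat-kernel convergence is then Lebesgue integrability on $\resch$ plus large-time decay. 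The three procedures are tied together through the zeta parameter. Both $\int_{\resch}\prod_e t_e^{s_e}\Gamma(s_e+1)^{-1}\widehat W(\vG,\Phi)$ and $W^{\mathrm{CPV}}(\vG,\Phi,\bfs)$ are holomorphic on $\cH^{|\vG_1|}$, and they agree with $W^{\mathrm{ZF}}(\vG,\Phi,\bfs)$ for $\Re s_e\gg0$ by Fubini.

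Your zeta step is close to the paper's CPV-side argument. It works there because the regulator enters as $\rho^{2s}=|u|^{2s}F^s$. The non-invariant factor $F^s$ is an $\ep$-independent smooth coefficient that can be Taylor expanded with everything else, and $|u|^{2s}$ is rotation invariant. This is exactly what the heat-kernel cutoff lacks. Two corrections on that side:
\begin{itemize}
\item Near $s\in\Z_{\ge0}$, colliding Mellin poles produce terms $\ln(\rho^2,s)=(\rho^{2s}-1)/s$, so the kernel is not simply $\rho^{2s}$ times the Bochner--Martinelli kernel.
\item $W^{\mathrm{ZF}}$ requires independent parameters $s_e$ for the edges, with continuity on all of $\cH^{|\vG_1|}$, not a single $s$.
\end{itemize}
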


Thus, although the naive Feynman graph integrals in holomorphic field theories may fail to be Lebesgue integrable, they nevertheless define canonical currents, independent of the renormalization procedure used to construct them. 

For an introduction to these three renormalization procedures, see \cref{Feynman graph integrands}.

One particularly important case of \Cref{main result1} is the convergence of heat-kernel renormalization. The idea can be described as follows. We rewrite the integral \eqref{divergence issue} as an integral over a larger space:
\[
\int_{(0,\infty)^{|\vG_{1}|}\times M^{|\vG_0|}}\widetilde W(\vG,\Phi),
\]
where $|\vG_{1}|$ is the number of edges of $\vG$. The convergence then follows from below.

\begin{thm}[\Cref{thm:schwinger smooth extension0}, \Cref{interability of Feynman graph integrand}, and \Cref{smooth extension to schwinger3}]\label{main result2}
There exists a partial compactification of $(0,\infty)^{|\vG_{1}|}$, denoted by $\resch$, such that the differential form
\[
\int_{M^{|\vG_0|}}\widetilde W(\vG,\Phi)
\]
extends smoothly to $\resch$. In particular, it is Lebesgue integrable over $\resch$. We call $\resch$ the partially compactified Schwinger parameter space.
\end{thm}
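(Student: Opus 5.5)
Concretely, I will build $\resch$ as a real blow-up of the corner of $[0,\infty)^{|\vG_1|}$ at $t=0$. The model is the wonderful compactification of the arrangement of coordinate subspaces $\{t_e=0,\ e\in\vG'_1\}$, indexed by the connected subgraphs $\vG'\subseteq\vG$. Near a stratum indexed by a nested family $\vG^{(1)}\subset\cdots\subset\vG^{(k)}$, I use coordinates $t_e=\lambda_1\cdots\lambda_j\,u_e$ for $e$ in $\vG^{(j)}\setminus\vG^{(j-1)}$, with the $u$'s constrained by normalizations such as $\sum u_e=1$ in each layer. The goal is to show that $\int_{M^{|\vG_0|}}\widetilde W(\vG,\Phi)$, written in the coordinates $(\lambda,u)$, is smooth up to $\lambda_j=0$.

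First I expand the heat-kernel propagator. On a real-analytic K\"ahler manifold the propagator is $\bar\partial^*$ applied to the heat kernel $K_t$. Near the diagonal, in real-analytic local coordinates, it has an asymptotic form
\[
P_t(x,y)\sim \frac{1}{t^{n+1}}\,e^{-\phi(x,y)/t}\sum_{k\ge0} t^k a_k(x,y),
\]
where $\phi\sim|x-y|^2$ is the analytic Calabi diastasis and $a_k$ are analytic forms that carry one factor $(\bar x-\bar y)\,d\bar{(\cdot)}$. Away from the diagonal, $P_t$ decays exponentially and is harmless. Using a partition of unity, I localize the integral over $M^{|\vG_0|}$ to neighborhoods of the diagonals $\Delta_{\vG'}$ where the vertices of a connected subgraph collide.

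The main step is a scaling argument on each such neighborhood. I rescale the relative vertex positions inside $\vG^{(j)}$ by $\sqrt{\lambda_1\cdots\lambda_j}$. The Gaussian factors then become $e^{-Q_u(z)}$, where $Q_u$ is a quadratic form that is nondegenerate on the transverse directions for $u$ in the compact normalized simplex. Power counting gives the exponent of each $\lambda_j$: each edge contributes $-(n+1)$, each $dt_e$ contributes $+1$, and each integrated vertex contributes $+n$ from the real dimension $2n$ divided by two. This naive count can be negative.

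The holomorphic structure has to supply the cancellation. The form degree forces only the top $d\bar z$ components to survive, and each propagator carries $\bar z$ factors. After integrating the Gaussian in $z$, each $\bar z$ must pair with a $z$ that comes from the Taylor expansion of $\Phi$ and of the analytic coefficients $a_k$, $\phi$. This pairing produces extra positive powers of $\lambda$. Equivalently, the Gaussian integral of the leading homogeneous term is a Wick-type contraction in which the purely antiholomorphic terms vanish.

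I expect the main obstacle to be proving that the resulting exponent is always nonnegative and integral, so that the expansion in $\lambda$ is a genuine power series. In particular, odd half-integer powers must cancel. I would try to establish this by showing that, after the $z$-integration, the leading term is a polynomial in the $\lambda$'s. Real-analyticity then lets me control all higher Taylor terms uniformly, through convergent expansions with remainder estimates in $(\lambda,u)$. This gives smoothness of the extension along every stratum, and compatibility of the charts shows that the extension is global on $\resch$.

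Integrability then follows from smoothness. For the $t\to\infty$ end, I use exponential decay of $K_t$ minus its harmonic projection. Convergence of the heat-kernel renormalization is then the statement that the boundary limit exists.
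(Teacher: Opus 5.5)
You correctly locate the difficulty, but the proposal does not resolve it, and that step carries essentially all the content of the theorem. Everything rests on two claims. First, after rescaling and integrating over $M^{|\vG_0|}$, each $\lambda_j$ occurs only with nonnegative integer exponents. Second, the result is smooth in all variables (the $\lambda$'s and the $u$'s), uniformly up to the corners where strata meet. You state these as an expectation, and the Wick-pairing heuristic does not prove them.

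The inputs you list cannot close this step, for four reasons.
\begin{enumerate}
\item The expansion $P_t\sim t^{-(n+1)}e^{-\phi/t}\sum_k t^k a_k$ is in general only asymptotic in $t$; real-analyticity does not make it converge. Since $\Phi$ is merely smooth, the ``convergent expansions with remainder'' you invoke are not available.
\item More importantly, this expansion loses the structural fact that makes any power count work. For small $t$, $e^{\rho^2/2t}P_t$ is locally \emph{exactly} a polynomial in $\bfy=(\bar z-\bar w)/t$ and $d\bfy$, times a form smooth up to $t=0$ (\Cref{thm: Propogator has regular expression}). So every negative power of $t$ is tied to an antiholomorphic difference or to a $d\bfy$.
\item The cancellation is not a pure $U(1)$-charge count. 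Form degree also enters, including the $dt$-components of the $d\bfy_e$.
\item After your rescaling, the non-quadratic part of $\rho^2$ and the Taylor remainders of $\Phi$ produce $\lambda^{1/2}$-type terms, and you only hope these cancel. In any case, an expansion in $\lambda$ at fixed $u$ does not by itself give smoothness on a manifold with corners.
\end{enumerate}

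Your route resembles the analytic argument available in the affine case, where the Gaussian is exactly quadratic. On a curved $M$, the paper avoids computing Gaussian integrals altogether:
\begin{enumerate}
\item It complexifies $t=|\tau|^2$, so that the scale $\sqrt t$ becomes the holomorphic coordinate $\tau$.
\item The regular-expression structure then shows that $\tau^N$ times the propagator is smooth on $\mathrm{Bl}_{\triangle\times\{0\}}(M\times M\times\mathbb C)$. It is also flat along the strict transform of the off-diagonal locus (\Cref{propogator is meromorphic}).
\item Hence $\prod_e\tau_e^N$ times the integrand extends to the wonderful compactification of $M^{|\vG_0|}\times\mathbb C^{|\vG_1|}$ along the $\triangle_S\times\mathcal Z_{S'}$.
\item The projection $p^{\vG}$ to $\cpsch$ is semistable. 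Its critical locus lies where two components of $(p^{\vG})^*D_S$ meet, and the integrand is flat along all but one of these components (\Cref{vanishing at critical points}).
\item The pushforward theorem (\Cref{pushforward is smooth}) then gives a smooth form on $\cpsch$.
\item The $S^1$-equivariant descent (\Cref{descent smooth function}) is exactly what removes the half-integer powers you worry about.
\end{enumerate}

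There are also smaller issues.
\begin{itemize}
\item Your chart is reversed. In a nest $\vG^{(1)}\subset\cdots\subset\vG^{(k)}$, the innermost subgraph's edges must carry the most $\lambda$-factors.
\item For a connected-subgraph building set, strata are indexed by nested sets that need not be chains, so chains alone do not cover everything.
\item Because $\resch$ is noncompact, integrability does not follow from smoothness alone. Consider the mixed regime, with $t_e\le 1$ on some edges and $t_e\ge1$ on others. There you need continuity of $\Phi\mapsto\widehat W(\vG,\Phi)$ in $C^k$-seminorms, combined with $C^k$ exponential decay of $\iota_{\partial_t}P_t$, as in \Cref{interability of Feynman graph integrand}.
\end{itemize}
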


We refer the reader to \cref{Feynman graph integrands} for the relevant notions.

We point out that \Cref{main result2} extends the affine-space result of \cite{wang2025feynman} to real-analytic K\"ahler manifolds. The method is different: the proof of \Cref{main result2} uses the theory of wonderful compactifications \cite{de1995wonderful,macpherson1998making,hu2003compactification,li2009wonderful}, whereas the affine-space case relies on a more analytic argument.

\Cref{main result2} formulates the convergence problem in terms of a partial compactification of the integration domain. This is reminiscent of the Fulton-MacPherson compactification used in the topological case \cite{axelrod1994chern,kontsevich1994feynman}, with the important distinction that the space to be compactified is different. In fact, our strategy also applies to topological field theories, whereas the compactification strategy used in the topological case does not apply directly to the holomorphic case \cite{wang2024factorization}. 

In topological field theories, integrals over the boundaries of Fulton-MacPherson compactification spaces describe the gauge anomalies of the theories. In the holomorphic setting, a similar phenomenon occurs. In particular, we prove that the failure of the current $W(\vG,-)$ to be closed under the Dolbeault differential $\bar\partial$ is given by integrals over the boundaries of compactified Schwinger spaces:
\begin{prop}[\Cref{prop: anomaly IBP}]
    Let ${\vec{\Gamma}}$ be a holomorphic decorated directed graph, and let $\Phi \in \Omega^{*}(M^{|{\vec{\Gamma}}_0|})$. Then
    \begin{equation}\label{anomaly integral general}
    (\bar{\partial} W)({\vec{\Gamma}}, -)
    =
    \lim_{L\to\infty}
    (-1)^{|{\vec{\Gamma}}_1|}
    \int_{\partial \widetilde{[0, L]^{|{\vec{\Gamma}}_1|}}}
    \int_{M^{|\vG_0|}}\widetilde W(\vG,-).
    \end{equation}
\end{prop}

See \Cref{holomorphic gauge anomaly} for an explanation of this result. We also prove a useful characterization of the boundary integrals in \Cref{cor:kahler anomaly quotient graphs}.

The boundary integrals appearing on the right-hand side of \eqref{anomaly integral general} turn out to be significant in the study of higher-dimensional analogues of vertex algebras, even when $M$ is an affine space. In \cite{minghaoBrianZhengping2025}, these boundary integrals are used to construct homotopy chiral algebras on affine spaces. In \cite{wang2024factorization}, the vanishing of these boundary integrals allows one to construct factorization algebras from topological-holomorphic field theories in the sense of Costello-Gwilliam \cite{costello_gwilliam_2016,costello_gwilliam_2021}.

Finally, we point out potential applications of our results to mirror symmetry. In \cite{bershadsky1994kodaira}, a string-field-theoretic framework for higher-genus topological B-model invariants, which are conjecturally mirror to Gromov-Witten invariants, was introduced under the name Kodaira-Spencer gravity. This is a holomorphic field theory on Calabi-Yau manifolds. In \cite{costello2012quantum}, Costello and Li provided a mathematical framework for Kodaira–Spencer gravity, and its quantization problem was resolved for spacetime $M$ given by an affine space or an elliptic curve \cite{Costello2015QuantizationOO,li2011bcov}. Our theorems show that, once the quantization problem has been solved, the resulting B-model invariants are independent of the choice of renormalization scheme. We hope that \Cref{main result2} will provide new insight into the quantization of Kodaira-Spencer gravity.

\subsection*{Acknowledgments}
	We thank Keyou Zeng, Zhengping Gui, Si Li, Ezra Getzler, Maciej Szczesny, Brian Williams, and Kevin Costello for valuable discussions. Part of this work was completed while the first named author was a postdoctoral associate in the Department of Mathematics and Statistics at Boston University. He gratefully acknowledges the department for its support and stimulating research environment.
 	
	\section{Feynman graph integrals and main results}
	\label{section propagators}
	In this section, we first recall the motivation for and definition of Feynman
weights for finite-dimensional field spaces; see
\cref{Feynman weights for finite-dimensional field spaces}. We then
formulate their generalization to holomorphic field theories in terms of
Feynman graph integrals in \cref{Propagators in holomorphic field theories}
and \cref{Feynman graph integrands}. In \cref{Feynman graph integrands}, we also review three
renormalization methods in the context of holomorphic field theories: heat
kernel renormalization, Cauchy principal value renormalization, and zeta
function renormalization. We also state the main results of this paper there.

	\subsection{Feynman weights for finite-dimensional field spaces}\label{Feynman weights for finite-dimensional field spaces}
	
	In the physical formulation of perturbative quantum field theory, one often
	starts with a heuristic oscillatory integral of the form
	\begin{equation}
		\int_F e^{\frac{i}{\hbar}(S_0+I)}\,d\mu_F .
		\label{eq:Feynman-integral}
	\end{equation}
	Here $F$ is a linear space, possibly infinite-dimensional. The function
	$S_0$ is a nondegenerate quadratic polynomial, while $I$ is a polynomial whose terms have degree at
	least $3$. The symbol $d\mu_F$ denotes the heuristic Lebesgue measure on
	$F$, and $\hbar$ is a positive parameter.
	
	The integral in \eqref{eq:Feynman-integral} is usually only formal. One reason
	is that, when $F$ is infinite-dimensional, there is in general no Lebesgue
	measure on $F$. Nevertheless, motivated by the stationary phase approximation,
	one is led to a schematic Feynman graph expansion of the form
	\begin{equation}
		\int_F e^{\frac{i}{\hbar}(S_0+I)}\,d\mu_F
		\propto
		\exp\Big(
		\sum_{\Gamma\in G_{\mathrm{conn}}}
		\frac{
			i^{|\Gamma_1|+|\Gamma_0|}
			\hbar^{|\Gamma_1|-|\Gamma_0|}
		}{
			|\operatorname{Aut}(\Gamma)|
		}
		W(\Gamma)
		+
		O(\hbar^\infty)
		\Big).
		\label{eq:Feynman-graph-expansion}
	\end{equation}
	Here $G_{\mathrm{conn}}$ denotes the set of connected graphs. For a connected
	graph $\Gamma$, we denote by $\Gamma_0$ and $\Gamma_1$ the sets of vertices
	and edges, respectively, and by $\operatorname{Aut}(\Gamma)$ its automorphism
	group. We write $|\Gamma_0|$, $|\Gamma_1|$, and
	$|\operatorname{Aut}(\Gamma)|$ for their cardinalities. The number
	$W(\Gamma)$ is the Feynman weight assigned to $\Gamma$. We will describe this
	weight in detail below.
	
	The formula \eqref{eq:Feynman-graph-expansion} should be understood
	schematically. The exponential is a formal exponential, and the expression in
	the exponent is an asymptotic expansion rather than a convergent series. The
	general principle of perturbative quantum field theory is that, although the
	left-hand side of \eqref{eq:Feynman-graph-expansion} may be ill-defined, the
	graph expansion in the exponent may still be well-defined as a formal power
	series in $\hbar$. In that case, one studies this formal power series instead
	of the original ill-defined oscillatory integral.
	
	We now recall the definition of the Feynman weight $W(\Gamma)$, assuming that
	$F$ is a finite-dimensional complex vector space. For later use, we define the
	weight for directed graphs, although the resulting number $W(\Gamma)$ does not
	depend on the orientations of the edges.
	\begin{defn}
		A \textbf{directed graph} $\vec{\Gamma}$ consists of a finite ordered set of vertices
		$\vec{\Gamma}_0$, a finite ordered set of edges $\vec{\Gamma}_1$, and two maps
		\[
		t,h:\vec{\Gamma}_1\longrightarrow \vec{\Gamma}_0,
		\]
		assigning to each edge its tail and head. For a vertex $v\in\vec{\Gamma}_0$, we
		denote by $\deg(v)$ the number of half-edges incident to $v$.
	\end{defn}
	
	Let $n=\dim F$, and choose linear coordinates $x^1,\ldots,x^n$ on $F$.
	Assume that
	\[
	S_0=\frac{1}{2}\sum_{i,j=1}^{n} x^i A_{ij}x^j,
	\]
	where $A=(A_{ij})$ is a symmetric nondegenerate $n\times n$ matrix. Equivalently,
	$A$ determines an element of $\operatorname{Hom}(F\otimes F,\mathbb C)$, and
	its inverse determines an element
	\[
	A^{-1}\in F\otimes F.
	\]
	The tensor $A^{-1}$ is often called the propagator.
	
	For each $k\geq 3$, the $k$-th derivative of $I$ at $0$ defines an
	$S_k$-invariant multilinear map
	\[
	D^kI:F^{\otimes k}\longrightarrow \mathbb C .
	\]
	
	Given a directed graph $\vec{\Gamma}$, we place a copy of the propagator
	$A^{-1}\in F\otimes F$ on each edge $e\in\vec{\Gamma}_1$. This gives an element
	\[
	\otimes_{e\in \vec{\Gamma}_1} A^{-1}
	\in
	\otimes_{e\in \vec{\Gamma}_1}(F\otimes F).
	\]
	We also place $D^{\deg(v)}I$ at each vertex $v\in\vec{\Gamma}_0$. This gives an
	element
	\[
	\otimes_{v\in \vec{\Gamma}_0} D^{\deg(v)}I
	\in
	\operatorname{Hom}\Big(
	\otimes_{v\in \vec{\Gamma}_0} F^{\otimes \deg(v)},\mathbb C
	\Big).
	\]
	The incidence relation of the graph, together with the chosen orderings of
	$\vec{\Gamma}_0$ and $\vec{\Gamma}_1$, determines a natural reordering isomorphism
	\begin{equation}
		\otimes_{e\in \vec{\Gamma}_1}(F\otimes F)
		\cong
		\otimes_{v\in \vec{\Gamma}_0} F^{\otimes \deg(v)} .
		\label{eq:reordering-map}
	\end{equation}
	
	\begin{defn}\label{Feynman weight}
		The \textbf{Feynman weight} of $\vec{\Gamma}$, denoted $W(\vec{\Gamma})$, is the contraction
		of
		\[
		\otimes_{e\in \vec{\Gamma}_1}A^{-1}
		\]
		with
		\[
		\otimes_{v\in \vec{\Gamma}_0}D^{\deg(v)}I
		\]
		using the reordering isomorphism \eqref{eq:reordering-map}.
	\end{defn}
	
	\begin{rem}
		Since $A^{-1}$ and the tensors $D^kI$ are symmetric, the number $W(\vec{\Gamma})$
		is independent of the chosen orderings of $\vec{\Gamma}_0$ and $\vec{\Gamma}_1$. It is
		also independent of the orientations of the edges.
	\end{rem}
	
	\begin{exm}
		Let $\vec{\Gamma}$ be the graph
		\[
		\vec{\Gamma}=
		\begin{tikzpicture}[baseline=-0.5ex, scale=1.0]
			\node[circle, fill=black, inner sep=1.8pt, label=left:{\scriptsize $D^3I$}] (v1) at (0,0) {};
			\node[circle, fill=black, inner sep=1.8pt, label=right:{\scriptsize $D^3I$}] (v2) at (2.4,0) {};
			
			\draw (v1) to[out=55,in=125] node[above] {\scriptsize $A^{-1}$} (v2);
			\draw (v1) -- node[above] {\scriptsize $A^{-1}$} (v2);
			\draw (v1) to[out=-55,in=-125] node[below] {\scriptsize $A^{-1}$} (v2);
		\end{tikzpicture}.
		\]
		Then
		\begin{align*}
			W(\vec{\Gamma})
			&=
			\sum_{i_{1,1}, i_{1,2}, i_{1,3}, i_{2,1}, i_{2,2}, i_{2,3}=1}^{n}
			(A^{-1})^{i_{1,1}i_{2,1}}
			(A^{-1})^{i_{1,2}i_{2,2}}
			(A^{-1})^{i_{1,3}i_{2,3}} \\
			&\quad\quad
			\times
			D^3I(\partial_{i_{1,1}},\partial_{i_{1,2}},\partial_{i_{1,3}})
			D^3I(\partial_{i_{2,1}},\partial_{i_{2,2}},\partial_{i_{2,3}}).
		\end{align*}
	\end{exm}
	When the space of fields $F$ is infinite-dimensional, the Feynman weight
	$W(\vec{\Gamma})$ can often be represented by an integral over a finite-dimensional
	manifold, typically a configuration space of points indexed by the vertices of
	$\vec{\Gamma}$. In such cases, we call $W(\vec{\Gamma})$ the Feynman graph integral
	associated with $\vec{\Gamma}$. We will define this notion in the context of
	holomorphic field theories in the later subsections.
	
	\subsection{Propagators in holomorphic field theories}\label{Propagators in holomorphic field theories}
	In this subsection, we recall the propagator used in holomorphic field theories and
	introduce its regularized version. We keep the notation and conventions
	of \cite[\S 2.1]{WYFeynman}.
	
	Let $M$ be a connected real analytic closed K\"ahler manifold, let $E\to M$ be a real analytic Hermitian
	holomorphic vector bundle, and set
	\[
	\tE:=E\otimes \Lambda^\bullet T_{0,1}^*M .
	\]
	
	The space of smooth sections $\Gamma(\widetilde{E})$ forms the Dolbeault complex of $E$, equipped with the Dolbeault operator $\bar{\partial}_{\widetilde{E}}$ and its formal adjoint $\bar{\partial}_{\widetilde{E}}^*$ with respect to the natural Hermitian inner product. The associated Laplacian is given by
	\[
	\Delta_{\widetilde{E}} = \bar{\partial}_{\widetilde{E}} \circ \bar{\partial}_{\widetilde{E}}^* + \bar{\partial}_{\widetilde{E}}^* \circ \bar{\partial}_{\widetilde{E}}.
	\]
	
	Assume that $E$ is equipped with a non-degenerate holomorphic pairing
	\[
	\omega:E\otimes E\longrightarrow K_M .
	\]
	As in \cite[Definitions 2.4 and 2.5]{WYFeynman}, the pairing $\omega$
	defines the harmonic projection kernel $\H$, the delta distribution
	$\delta$, and the propagator $P$ is given by the following proposition:
	
	\begin{prop}\label{smoothness of propagator}
		There exists a unique distributional section $P$ of $\widetilde{E}\boxtimes\widetilde{E}$, such that
		\begin{equation}\label{ordinary propagator}
			\left\{
			\begin{aligned}
				(\bar\partial_{\tE}\otimes\operatorname{id}+\operatorname{id}\otimes\bar\partial_{\tE})P
				&=\delta-\H,\\
				P&\in \mathrm{Im}(\bar\partial_{\tE}^*\otimes\operatorname{id}).
			\end{aligned}
			\right.
		\end{equation}
		Moreover, let $H_t\in
		C^{\infty}((0,\infty);\Gamma(\tE\boxtimes\tE))$ be the heat kernel associated
		with $\Delta_{\tE}$, normalized by
		\begin{equation}\label{heateq1}
			\left\{
			\begin{aligned}
				\partial_t H_t&=-(\Delta_{\tE}\otimes\operatorname{id})H_t\\
				\lim_{t\to0}H_t&=\delta
			\end{aligned}
			\right.,
		\end{equation}
		then the distributional section $P$ is given by
		\begin{equation}\label{existence of propagator}
			P
			=
			\int_0^{+\infty}
			dt\wedge(\bar\partial_{\tE}^*\otimes\operatorname{id})H_t,
		\end{equation}
		and $P|_{M \times M \setminus \triangle}$ is smooth, where $\triangle := \{ (p, p) \in M \times M : p \in M \}$.
	\end{prop}
	\begin{proof}
		See \cite[Proposition 2.7]{WYFeynman} and \cite[Proposition 2.9]{WYFeynman}.
	\end{proof}
	
	We also introduce the following useful refinement of the propagator.
	\begin{defn}
		\textbf{The propagator in Schwinger space} is defined by
		\begin{equation}\label{propagator t}
			P_t := -dt \wedge \big( \bar{\partial}_{\widetilde{E}}^* \otimes \mathrm{id} \big) H_t + H_t \in \Omega^*\Big( (0,+\infty); \Gamma(\widetilde{E} \boxtimes \widetilde{E}) \Big),
		\end{equation}
		where $\Omega^\bullet\left( (0,+\infty); \Gamma(\widetilde{E} \boxtimes \widetilde{E}) \right)$ denotes the space of $\Gamma(\widetilde{E} \boxtimes \widetilde{E})$-valued smooth forms on $(0,\infty)$, and $H_t$ is the heat kernel. The parameter $t$ is called \textbf{the Schwinger parameter}.
	\end{defn}
	
	The following proposition shows the connection between $P_t$ and $P$:
	
	\begin{prop}
		The propagator $P$ can be represented by
		\begin{equation}
			P
			=
			-\int_0^{+\infty}
			P_t.
		\end{equation}
		Moreover, we have
		\begin{equation}
			(\bar\partial_{\tE}\otimes\operatorname{id}+\operatorname{id}\otimes\bar\partial_{\tE}+d^{(0,\infty)})P_{t}=0,
		\end{equation}
		where $d^{(0,\infty)}$ is the de Rham differential on $\Omega^\bullet\left( (0,+\infty); \Gamma(\widetilde{E} \boxtimes \widetilde{E}) \right)$.
	\end{prop}
	\begin{proof}
		We notice that the second term in \eqref{propagator t} is a zero form with respect to Schwinger parameter $t$, and we use the convention that the integral of a non-top form is zero, so the first assertion follows from \eqref{existence of propagator}. The second assertion follows from \eqref{heateq1}.
	\end{proof}
	
	The main difficulty in constructing Feynman weights in field theories is that
	the propagator $P$ is not a smooth section of
	$\widetilde{E} \boxtimes \widetilde{E}$. Therefore, one cannot use the
	algebraic contraction to define Feynman weights as in Definition
	\ref{Feynman weight}. To define Feynman weights, one usually needs to study
	the singularities of $P$. In fact, these singularities are encoded in the
	asymptotic behavior of $P_t$ as $t \to 0$.
	
	In our setting, for sufficiently small $t>0$, the propagator $P_t$ admits a
	special structure near the diagonal, which we call a regular expression. This
	structure is nontrivial: it is precisely what allows us to define the Feynman
	weights as Cauchy principal values in \cite{WYFeynman}. We first
	introduce the notion of regular expression.
	
	\begin{defn}
		Let $U\subset\mathbb{C}^{n}$ be an open subset, let $E$ be a trivial
		holomorphic vector bundle on $\mathbb{C}^{n}$, and set
		$\widetilde{E}=E\otimes \Lambda^\bullet T_{0,1}^*\mathbb{C}^{n}$. The linear space of \textbf{regular expressions} $S_{\sm}(U\times U)\subset\Omega^{*}\Big((0,\infty);\Gamma\big((\widetilde{E}\boxtimes \widetilde{E})|_{U\times U}\big)\Big)$ is generated by expressions of the form
		\[
		a(\mathbf{y}, d\mathbf{y})\, b,
		\]
		where $a(\mathbf{y}, d\mathbf{y})$ is a polynomial in the variables
		\[
		\mathbf{y} = (y_1, y_2, \ldots, y_n), \quad \text{and} \quad d\mathbf{y} = (dy_1, dy_2, \ldots, dy_n),
		\]
		with
		\[
		y_i = \frac{\overline{z_i} - \overline{w_i}}{t}.
		\]
		Here, $d$ denotes the de Rham differential on $(0,\infty) \times U \times U$, so
		\[
		dy_i = d\left( \frac{\overline{z_i} - \overline{w_i}}{t} \right) = \frac{d(\overline{z_i} - \overline{w_i})}{t} - \frac{(\overline{z_i} - \overline{w_i})\, dt}{t^2}.
		\]
		The factor $b \in \Omega^\bullet\Big((0,\infty); \Gamma\big((\widetilde{E} \boxtimes \widetilde{E})|_{U \times U}\big)\Big)$ is assumed to extend smoothly to a differential form in $\Omega^\bullet\Big([0,+\infty); \Gamma\big((\widetilde{E} \boxtimes \widetilde{E})|_{U \times U}\big)\Big)$.
	\end{defn}
	
	It follows from \cite{WYFeynman} that regular expressions are preserved
	under pullbacks by holomorphic maps. Consequently, this notion can be
	defined globally on any complex manifolds. 
	
	In \cite{WYFeynman}, we prove the following result: 
	\begin{thm}[Theorem 4.3 in \cite{WYFeynman}]\label{thm: Propogator has regular expression}
		Let $\rho$ be the distance function on $M$. Then
		$e^{\frac{\rho^2}{2t}}P_t$ admits a regular expression. i.e., for any $p\in M$, there exists a holomorphic coordinate chart $U$
		containing $p$ such that $ (e^{\frac{\rho^2}{2t}}P_t)|_{U\times U}$ admits a regular expression.
        
        Moreover, any (higher) holomorphic derivatives of $P_t$ satisfy the same property.
	\end{thm}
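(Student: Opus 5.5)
The plan is to establish a local parametrix for the heat kernel $H_t$ of $\Delta_{\tE}$ near the diagonal. In it, every term is a regular expression multiplied by $e^{-\rho^2/(2t)}$, up to an error that vanishes to infinite order as $t\to0$. I would then transfer this structure to $P_t=-dt\wedge(\bar\partial^*_{\tE}\otimes\id)H_t+H_t$.

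\textbf{Step 1: a Kähler normal form.} Fix $p\in M$ and use real-analyticity to choose holomorphic coordinates $z$ near $p$ in which the Kähler potential is $|z|^2+O(|z|^4)$; Bochner-type coordinates would be ideal. In these coordinates the Laplacian $\Delta_{\tE}=\frac12\Delta_d+(\text{curvature terms})$ (Bochner–Kodaira) acts on $\tE$ as a real-analytic second-order operator with scalar principal symbol $g^{i\bar j}\xi_i\bar\xi_j$.

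\textbf{Step 2: the ansatz.} I would seek a formal expansion
\[
H_t\sim (4\pi t)^{-n}e^{-\rho^2/(2t)}\sum_{k\ge0}t^k\Theta_k(z,w),
\]
with $\Theta_k$ real-analytic sections of $\tE\boxtimes\tE$. The key observation concerns the form-degree structure. The $(0,q)$-components of the kernel pair with $d\bar z_i\wedge d\bar w_j$ factors, and in the transport equations the $t$-dependence enters only through derivatives of $e^{-\rho^2/2t}$. Each $\bar\partial$-derivative of $\rho^2/2t$ produces a factor $\frac{\bar z-\bar w}{t}+O(|z-w|^2/t)$, while each $\partial$-derivative produces its conjugate. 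The heat operator applied to the ansatz should only generate factors of $y=(\bar z-\bar w)/t$ and $dy$. Conjugate factors $(z-w)/t$ should not appear, because $\bar\partial^*$ contracts with $\bar\partial$-type directions and the operator is of Hodge-type. Analyticity of $\rho^2$ in $(z,\bar z,w,\bar w)$ lets me reorganize the $O(|z-w|^2/t)$ corrections: monomials $(\bar z-\bar w)^a(z-w)^b/t$ with $a+b\ge2$ get written as $t^{a-1}y^a(z-w)^b$, which is smooth at $t=0$ when $a\ge1$. Terms with $a=0$ must be absorbed into the exponential prefactor, and this is exactly why the normalization is $e^{\rho^2/2t}$ rather than $e^{|z-w|^2/2t}$.

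\textbf{Step 3: from $H_t$ to $P_t$.} Summing the expansion, I would write $e^{\rho^2/2t}H_t$ as a regular expression plus a remainder $R_t$. Standard Duhamel estimates give $R_t=O(t^N)$ in $C^k$ for every $N$, and such a term is trivially a regular expression. Applying $\bar\partial^*_{\tE}\otimes\id$ and wedging with $dt$ keeps me in the class. The commutator $[\bar\partial^*,e^{-\rho^2/2t}]$ produces $y$-factors and $dy$-factors, which are allowed; I must check that it produces no $(z-w)/t$ factors, and this is again Step 2's type argument. Holomorphic derivatives $\partial_{z}$ of $P_t$ hit $e^{-\rho^2/2t}$ and yield $\partial_z\rho^2/(2t)$, which has leading part $(\bar z-\bar w)/t=y$ times a smooth factor. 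This gives the final claim, using that regular expressions are stable under holomorphic pullback as stated above.

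\textbf{Main obstacle.} The hardest step is Step 2: proving that the full asymptotic series involves only antiholomorphic difference quotients $y$, uniformly in the $t$-dependence, rather than general $|z-w|^2/t$-type expressions. I would try to prove this first by induction on $k$ in the transport equations, tracking a bigrading. Failing that, I would complexify the real-analytic data, treating $\bar z,\bar w$ as independent variables, and solve the heat equation holomorphically in the complexified variables. In that setting the needed structure should reduce to a homogeneity statement under the rescaling $\bar z-\bar w\mapsto t\,y$.
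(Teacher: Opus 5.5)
This paper does not reprove the statement; it imports it from \cite{WYFeynman}. So I judge your sketch on its own terms, and it has genuine gaps. The most serious one is the decomposition in Step 3, which fails already on flat $\C^n$. There, up to a normalizing constant $c_n$,
\[
e^{|z-w|^2/2t}H_t=c_n\prod_i\frac{d(\bar z_i-\bar w_i)}{t}=c_n\prod_i\Big(dy_i+y_i\frac{dt}{t}\Big),\qquad e^{|z-w|^2/2t}P_t=c_n\prod_i dy_i .
\]
Thus $e^{|z-w|^2/2t}H_t$ contains the terms $\pm y_i\frac{dt}{t}\wedge\prod_{j\neq i}dy_j$, and $e^{|z-w|^2/2t}\,dt\wedge(\bar\partial^*_{\tE}\otimes\id)H_t$ consists exactly of such terms. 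These terms are not regular. For $n=1$, take the curve $\bar z-\bar w=ct$ with $w$ fixed. Along it $\bfy$ is constant and $d\bfy=0$, so every $a(\bfy,d\bfy)\,b$ pulls back to something bounded as $t\to0$, whereas $y\,dt/t$ pulls back to $c\,dt/t$. So neither $e^{\rho^2/2t}H_t$ nor $e^{\rho^2/2t}dt\wedge(\bar\partial^*_{\tE}\otimes\id)H_t$ is a regular expression; only their combination $P_t$ is.

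The essential mechanism is absent from your plan. The $t^{-n}$ of the heat kernel must be absorbed into $\prod_i dy_i$, and the $dt$-component $-(\bar\partial^*_{\tE}\otimes\id)H_t$ supplies exactly the $-y_i\,dt/t$ parts of the $dy_i$. Your ansatz $t^{-n}e^{-\rho^2/2t}\sum_k t^k\Theta_k$ with arbitrary real-analytic $\Theta_k$ does not see this. In the curved case you must prove a joint statement about both components. For $k<n$, $\Theta_k$ must lie in the $(n-k)$-th power of the ideal generated by the components of $\bar z-\bar w$ and $d(\bar z-\bar w)$; factors of $z-w$ do not count, since $(z-w)/t$ is not regular. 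In addition, the $dt$-part must complete each resulting $t^{-1}d(\bar z_i-\bar w_i)$ to $dy_i$. This is precisely the ``main obstacle'' you name and leave open, so the core of the theorem is only conjectured.

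Two further steps are not justified as written.

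\emph{The remainder.} Because of the prefactor $e^{\rho^2/2t}$, you need $H_t$ minus the truncated parametrix to be $O(t^Ne^{-\rho^2/2t})$ near the diagonal. This requires the sharp constant $1/2$ and control of all $z,w,t$-derivatives. Standard Duhamel/parametrix bounds give only $O(t^N)$ uniformly, and $e^{\rho^2/2t}\cdot O(t^N)$ is unbounded as $t\to0$ at every off-diagonal point. So your claim that ``such a term is trivially a regular expression'' fails for the remainder you actually control. You need strong, Gaussian-weighted short-time asymptotics near the diagonal, as in \cite{ludewig2019strong}; this is also why $U$ must be small.

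\emph{The derivative $\partial_z\rho^2$.} For $\partial_z$ hitting $e^{-\rho^2/2t}$, inside $\bar\partial^*$ and in the higher holomorphic derivatives, a ``leading part $\bar z-\bar w$'' is not enough. A single term such as $(z-w)^2$ in $\partial_z\rho^2$ would produce the non-regular factor $(z-w)^2/t$. You need $\partial_z\rho^2$ to lie entirely in the ideal generated by $\bar z-\bar w$, i.e.\ the complexified $\rho^2$ must vanish on $\{\bar z=\bar w\}$. This is the genuinely Kähler, real-analytic input. In $M\times\overline M$, with $\bar z$ replaced by an independent variable $\zeta$, the leaves $\{\zeta=\mathrm{const}\}$ are totally geodesic and null for the complexified metric. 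The same property is encoded in the identity $\rho^2=|z_1-w_1|^2(1+H)$ quoted from \cite{WYFeynman} in \Cref{propogator is meromorphic}. Your remark that pure $(z-w)$ terms are ``absorbed into the exponential prefactor'' does not establish this; such terms must be shown not to occur.
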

	
	\subsection{Feynman graph integrals}\label{Feynman graph integrands}
	We briefly recall the graph-theoretic notation and the local functional used
	to define Feynman graph integrals. Our conventions are the
	same as in \cite[\S 2.2]{WYFeynman}.
	
	Let $E\to M$ be a holomorphic vector bundle. Locally, with respect to a
	holomorphic frame $e_1,\ldots,e_r$, every section $\varphi\in\Gamma(E)$
	can be written as
	\[
	\varphi=\sum_{i=1}^r \varphi^i e_i .
	\]
	For a multi-index $\bfj=(j_1,\ldots,j_n)\in\Z_{\ge0}^n$, where $n=\dim(M)$, we write
	\[
	\partial^{\bfj}
	:=
	\frac{\partial^{|\bfj|}}
	{\partial z_1^{j_1}\cdots \partial z_n^{j_n}},
	\quad 
	|\bfj|=j_1+\cdots+j_n .
	\]
	
	\begin{defn} A \textbf{holomorphic differential operator} from $E$ to the trivial holomorphic line $\mathcal{O}_{M}$ is a linear map
		\[
		D:\Gamma(E)\longrightarrow \Gamma(\mathcal{O}_M)
		\]
		which, in local holomorphic coordinates and a local holomorphic frame of
		$E$, is a finite sum of expressions of the form
		\begin{equation} c_{i,\bfj}\,\partial^{\bfj}(\varphi^i), \quad i\in\{1,2,\dots,r\},\quad \bfj=(j_1,\ldots,j_n)\in\Z_{\ge0}^n, \end{equation}
		where the coefficients $c_{i,\bfj}$ are
		holomorphic functions.
	\end{defn}
	
	\begin{defn}
		A \textbf{degree-$k$ holomorphic Lagrangian density} is a linear map
		\[
		I:\Gamma(E)^{\otimes k}\longrightarrow \Gamma(K_M)
		\]
		which, in local holomorphic coordinates and a local holomorphic frame of
		$E$, is a finite sum of expressions of the form
		\begin{equation}\label{defn-I}
			c_{i_1\cdots i_k,\bfj_1\cdots\bfj_k}\,
			dz_1\wedge\cdots\wedge dz_n\,
			\partial^{\bfj_1}(\varphi_1^{i_1})
			\cdots
			\partial^{\bfj_k}(\varphi_k^{i_k}),
		\end{equation}
		where the coefficients $c_{i_1\cdots i_k,\bfj_1\cdots\bfj_k}$ are
		holomorphic functions. 
		
		The density $I$ naturally extends to a linear map from $\Gamma(\widetilde{E}^{\otimes k})$ to $\Gamma(K_M\otimes \Lambda^\bullet T_{0,1}^*M)$ in a way compatible with \eqref{defn-I}, and we shall continue to denote it by $I$.
	\end{defn}
	
	\begin{rem}
		A coordinate-free version of this notion can be found in
		\cite[Definition 2.15]{Williams:2018ows}.
	\end{rem}
	\begin{defn}
		A \textbf{degree-$k$ Dolbeault holomorphic Lagrangian density} is a linear map
		\[
		I:\Gamma(\widetilde{E}^{\otimes k})\longrightarrow \Gamma(K_M\otimes \Lambda^\bullet T_{0,1}^*M)
		\]
		which, in local holomorphic coordinates and a local holomorphic frame of
		$E$, is a finite sum of expressions of the form
		\begin{equation}
			c_{i_1\cdots i_k,\bfj_1\cdots\bfj_k}\,
			dz_1\wedge\cdots\wedge dz_n\,
			\partial^{\bfj_1}(\varphi_1^{i_1})
			\cdots
			\partial^{\bfj_k}(\varphi_k^{i_k}),
		\end{equation}
		where the coefficients $c_{i_1\cdots i_k,\bfj_1\cdots\bfj_k}$ are
		Dolbeault differential forms. 
	\end{defn}

\begin{defn}\label{defn:directed-graph}
A \textbf{decorated directed graph} $\vG$ consists of finite ordered sets
of vertices and edges, denoted by $\vG_0$ and $\vG_1$, respectively, together
with two maps
\[
    t,h:\vG_1\longrightarrow \vG_0,
\]
assigning to each edge its tail and head. Each vertex $v\in\vG_0$ is
decorated by a Dolbeault holomorphic Lagrangian density $I_v$ of degree
$\deg(v)$.

The decorated directed graph $\vG$ is called a \textbf{holomorphic decorated directed graph}
if every vertex decoration $I_v$ is in fact a holomorphic Lagrangian density.

\end{defn}
	
	Given a decorated directed graph $\vG$, we write $$|\vG_0| \quad\text{and}\quad |\vG_1|$$ for the numbers of vertices and edges.
	For each edge $e\in\vG_1$, let $t_e$ denote the corresponding Schwinger
	parameter. We have
	\[
	\bigotimes_{e\in\vG_1}P_{t_e}
	\in
	\Omega^\bullet\left(
	(0,\infty)^{|\vG_1|};
	\Gamma\left(\boxtimes_{e\in\vG_1}(\widetilde E\boxtimes\widetilde E)\right)
	\right),
	\]
	and
	\[
	\bigotimes_{v\in\vG_0}I_v:
	\Gamma\Big(
	\prod_{v\in\vG_0}M^{\deg(v)};
	\boxtimes_{v\in\vG_0}\widetilde E^{\boxtimes\deg(v)}
	\Big)
	\longrightarrow
	\Gamma\left(
	\boxtimes_{v\in\vG_0}(K_M\otimes\Lambda^\bullet T_{0,1}^*M)
	\right).
	\]
	The directed graph determines a canonical reordering map
	\[
	\tau^\vG:(M\times M)^{|\vG_1|}
	\longrightarrow
	\prod_{v\in\vG_0}M^{\deg(v)},
	\]
	which assigns the two endpoints of each edge to the corresponding vertex
	factors. Hence
	\[
	\tau^\vG_*\Big(\bigotimes_{e\in\vG_1}P_{t_e}\Big)
	\in
	\Omega^\bullet\Big(
	(0,\infty)^{|\vG_1|};
	\Gamma\left(\boxtimes_{v\in\vG_0}\widetilde E^{\boxtimes\deg(v)}\right)
	\Big).
	\]
	
	For any $\Phi\in\Omega^\bullet(M^{|\vG_0|})$, set
	\begin{equation}\label{Feynman graph integrand}
	\widetilde W(\vG,\Phi)
	:=
	\Big(
	\big(\bigotimes_{v\in\vG_0}I_v\big)
	\circ \tau^\vG_*
	\Big)
	\Big(\bigotimes_{e\in\vG_1}P_{t_e}\Big)
	\wedge \Phi .
	\end{equation}
	This is a smooth form on
	\[
	(0,\infty)^{|\vG_1|}\times M^{|\vG_0|}.
	\]
	
	Formally, the Feynman weight of $\vec{\Gamma}$, which we also call the
	Feynman graph integral, is given by
	\begin{equation}\label{formal Feynman graph integral}
		\int_{(0,\infty)^{|\vG_1|}\times M^{|\vG_0|}}\widetilde W(\vG,\Phi).
	\end{equation}
	In general, this integral does not converge in the Lebesgue sense; see
	\cite[Example 1]{wang2025feynman}.
	
	In this subsection, we recall three different approaches to defining the
	integral in \eqref{formal Feynman graph integral}: heat kernel renormalization, Cauchy principal value
	renormalization, and zeta function
	renormalization. The main result of this paper is that
	\eqref{formal Feynman graph integral} converges under all three approaches,
	and that the resulting values agree.
	
	\subsubsection{Heat kernel renormalization}
	We first observe that
	\[
	\int_{M^{|\vG_0|}}\widetilde W(\vG,\Phi)
	\in \Omega^\bullet\big((0,\infty)^{|\vG_1|}\big)
	\]
	is a smooth form on $(0,\infty)^{|\vG_1|}$, since
	$M^{|\vG_0|}$ is compact and the integrand is smooth. The heat kernel
	renormalization of the Feynman graph integral is defined as follows.
	
	\begin{defn}\label{defn: heat formulation}
		Let $\vec{\Gamma}$ be a decorated directed graph and let
		$\Phi\in\Omega^\bullet(M^{|\vG_0|})$. If the limit
		\begin{equation}
			\label{heat kernel renormalization} \lim_{\substack{\epsilon\to 0\\ L\to \infty}}
			\int_{(\epsilon,L)^{|\vG_1|}}
			\int_{M^{|\vG_0|}}\widetilde W(\vG,\Phi)    
		\end{equation}
		exists, then we call it the \textbf{heat kernel renormalization of the
			Feynman graph integral}, and denote it by $W^{\mathrm{HK}}(\vG,\Phi)$.
	\end{defn}
	\begin{rem}
		For general quantum field theories, the limit in
		\eqref{heat kernel renormalization} may fail to exist because of
		singular behavior as $\epsilon\to 0$. In such cases, one usually studies
		the singular behavior as $\epsilon\to 0$ and subtracts the singular part.
		This requires a choice of renormalization scheme; see
		\cite{costellorenormalization} for details. Such an additional choice
		makes the construction less canonical. In holomorphic field theories,
		however, the convergence result proved in this paper shows that no such
		choice is needed.
	\end{rem}
	
	The existence of the heat kernel renormalization of Feynman graph integrals
	will be proved by using suitable partial compactifications of Schwinger
	parameter spaces. We now introduce the relevant notion.
	\begin{defn}\label{compactification}
		Given a directed graph $\vec{\Gamma}$, a \textbf{partial compactification}
		of the Schwinger parameter space is a manifold with corners, denoted by
		$\resch$, together with a smooth open embedding
		\[
		i:(0,\infty)^{|\vG_{1}|}\rightarrow\resch,
		\]
		and a proper smooth map
		\[
		\pi_\R^{\vG}:\resch\rightarrow [0,\infty)^{|\vG_{1}|},
		\]
		such that $\pi_\R^{\vG}\circ i$ is the canonical embedding from
		$(0,\infty)^{|\vG_{1}|}$ to $[0,\infty)^{|\vG_{1}|}$, and \[
		\resch=\partial \resch\cup i((0,\infty)^{|\vG_{1}|}).\]
	\end{defn}
	
	Definition
	\ref{compactification} is introduced in order to control the singular
	behavior of
	\[
	\int_{(\epsilon,L)^{|\vG_1|}}
	\int_{M^{|\vG_0|}}\widetilde W(\vG,\Phi)
	\]
	as $\epsilon\to 0$. The naive compactification
	$[0,\infty)^{|\vG_{1}|}$ is not sufficient for this purpose. Instead, we
	will prove the following result in \cref{sec: heat formulation},
	which is one of the main results of this paper.
	\begin{thm}\label{thm:schwinger smooth extension0}
		For any decorated directed graph $\vG$, there exists a partial
		compactification of the Schwinger parameter space $\resch$ such that, for
		every $\Phi\in\Omega^\bullet(M^{|\vG_0|})$, the form
		\[
		\int_{M^{|\vG_0|}}\widetilde W(\vG,\Phi)
		\]
		admits a smooth extension to the partially compactified Schwinger space
		$\resch$. We denote this extension by $\widehat W(\vG,\Phi)$. Moreover,
		the assignment
		\[
		\Phi\in\Omega^\bullet(M^{|\vG_0|})
		\longmapsto
		\widehat W(\vG,\Phi)\in\Omega^\bullet\big(\resch\big)
		\]
		defines a continuous linear map between topological vector spaces.
	\end{thm}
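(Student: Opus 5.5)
We prove the theorem by reducing it to a local model near the small diagonals, and then building $\resch$ as an iterated real blow-up of $[0,\infty)^{|\vG_1|}$ along the faces where subsets of Schwinger parameters vanish. This mirrors the wonderful compactification of the arrangement of coordinate subspaces indexed by subgraphs.

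\emph{Step 1: localization.} Singular behaviour arises only as some $t_e\to 0$. Large $t$ is harmless: $H_t$ converges exponentially fast to $\H$, so after a change of variable $u_e=1/t_e$ the form extends smoothly to $t_e=\infty$. For small $t$ we use a partition of unity on $M^{|\vG_0|}$ subordinate to products of holomorphic charts of the type in Theorem~\ref{thm: Propogator has regular expression}. Away from the diagonals, the factor $e^{-\rho^2/2t_e}$ makes $P_{t_e}$ vanish to infinite order as $t_e\to0$. Hence, for a subset $S\subset\vG_1$ of edges whose parameters tend to $0$, only configurations in which the endpoints of each edge of $S$ collide contribute. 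These are the configurations near the diagonal indexed by the quotient $\vG/S$.

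\emph{Step 2: local integrand.} On each such chart, Theorem~\ref{thm: Propogator has regular expression} writes every $P_{t_e}$ (and its holomorphic derivatives, which arise from the $I_v$) as $e^{-\rho^2/2t_e}$ times a polynomial in $y^e_i=(\bar z_{h(e),i}-\bar z_{t(e),i})/t_e$ and $dy^e_i$, multiplied by a factor smooth up to $t_e=0$. We replace $\rho^2$ by its Kähler normal form $\sum|z-w|^2$ plus higher-order real-analytic corrections, and absorb the corrections into the smooth factor. We then change variables on $M^{|\vG_0|}$: fix a spanning forest, and rescale the relative coordinates of each cluster by the appropriate $t$'s. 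After this, the fibre integral becomes a Gaussian-type integral of the form
\[
\int Q(\bfy)\,e^{-\sum_e |x_e|^2/2t_e}\,(\text{smooth}),
\]
where $Q$ is a polynomial. Its dependence on $(t_e)_{e\in S}$ is governed by the quadratic form $\sum_{e\in S} t_e^{-1}|x_e|^2$ restricted to the cycle space. Its determinant is the Kirchhoff (Symanzik) polynomial $\sum_{T}\prod_{e\notin T}t_e$, and the inverse matrix entries are ratios of such polynomials in the $t_e$.

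\emph{Step 3: blow-ups.} We define $\resch$ as the iterated real-oriented blow-up of $[0,\infty)^{|\vG_1|}$ along the faces $\{t_e=0,\ e\in S\}$, taken for the connected subgraphs $S$ in increasing order of dimension (De Concini–Procesi / Li). In each chart of the blow-up, the Symanzik polynomials of every subgraph factor as a monomial in the exceptional coordinates times a nonvanishing function. Therefore the Gaussian integral, together with the prefactors $t_e^{-k}$ coming from the $y$'s and $dy$'s, becomes a monomial times a smooth function.

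\emph{Main obstacle.} The main obstacle is to show that this monomial has nonnegative exponents. Equivalently, we need a power-counting estimate at each boundary stratum, namely that the form does not blow up. This is where holomorphicity enters. The propagator contains only antiholomorphic differences $\bar z-\bar w$ over $t$. Integration over the relative holomorphic coordinates of a cluster also forces exact degree matching between the $d\bar z$'s and the $dz$'s coming from $I_v$ and from $\Phi$. Together these should make each collapsing subgraph contribute a factor of scaling degree $\ge 0$: the $t^{-1}$ from each $y$ is compensated by a Gaussian factor $t^{1/2}\cdot t^{1/2}$ per complex dimension. The plan is to verify this stratum by stratum, using the fact that the odd moments of the Gaussian vanish and that the $dt$-components are exactly the ones that supply the needed degrees.

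Continuity in $\Phi$ then follows because all the estimates are bounded by finitely many $C^k$ seminorms of $\Phi$ on compact subsets of $\resch$.
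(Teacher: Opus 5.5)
There is a genuine gap. The step you call the ``main obstacle'' is the whole content of the theorem, and your proposal only gives a plan for it (``these should make each collapsing subgraph contribute\ldots'', ``the plan is to verify this stratum by stratum'').

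The reduction in Step~2 that the plan rests on also fails on a curved K\"ahler manifold. Write $e^{-\rho^2/2t}=e^{-Q_w(z-w)/2t}\,e^{-(\rho^2-Q_w(z-w))/2t}$, where $Q_w$ is the Hermitian quadratic part. The second factor is $e^{-O(|z-w|^3)/t}$, which is in general not even continuous at $\{t=0,\ z=w\}$. So it cannot be absorbed into a coefficient that is smooth up to $t=0$. After your rescaling $z-w=\sqrt t\,u$ it becomes smooth only in $\sqrt t$. The exact Gaussian/Kirchhoff computation is available only in the flat model, which is the affine-space argument the paper explicitly does not follow.

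Your cancellation mechanism is also too weak. Vanishing of odd Gaussian moments only removes half-integer powers. A vertex density containing $\partial_z^2$ multiplies a propagator by $y^2=(\bar z-\bar w)^2/t^2$, which after rescaling is $t^{-1}\bar u^2$, of even parity. What kills the resulting $t^{-1}$ term is the full $U(1)$-phase balance: each factor $\bar u$ must be matched by a $u$ from a Taylor expansion, and each such $u$ costs $t^{1/2}$. You would have to prove this balance, together with the absence of $\log t_e$ terms, uniformly on every corner of $\resch$ where nested subgraphs collapse at different rates. Nonnegative exponents alone would not give smoothness; $t^{1/2}$ and $t\log t$ are bounded but not smooth.

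The paper replaces all power counting by a geometric device that your proposal is missing.
\begin{enumerate}[(1)]
\item It complexifies the Schwinger parameter as $t=|\tau|^2$. In the chart $z-w=\lambda\tau$ of $\mathrm{Bl}_{\triangle\times\{0\}}(M\times M\times\C)$ one has $\rho^2=|\tau|^2G$ with $G$ smooth, and $(\bar z-\bar w)/|\tau|^2=\bar\lambda/\tau$. Hence $\tau^N\mathrm{sq}^*P_t$ is smooth there and flat along the closure of $(M\times M\setminus\triangle)\times\{0\}$ (\Cref{propogator is meromorphic}). This is exactly where the antiholomorphic structure of the regular expression is used.
\item It builds the wonderful compactification of $M^{|\vG_0|}\times\C^{|\vG_1|}$ for the arrangement $\triangle_S\times\mathcal Z_{S'}$. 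The induced map to $\cpsch$ is semistable, and its critical locus lies in intersections of divisor components along which the integrand is flat. The pushforward is therefore smooth (\Cref{pushforward is smooth2}); flatness is what excludes logarithms.
\item It descends to $\resch=\cpsch/(S^1)^{|\vG_1|}$ by \Cref{descent smooth function}. A smooth function of circle weight $N$ has only Taylor terms $\gamma^{l+N}\bar\gamma^{l}$, so dividing by $\gamma^N$ gives a smooth function of $|\gamma|^2$.
\end{enumerate}
That last lemma rules out both negative and half-integer powers of $t$ at every corner at once. This is precisely the stratum-by-stratum verification your proposal leaves open.
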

	The proof of \Cref{thm:schwinger smooth extension0} will be given in
	\Cref{sec: heat formulation}, more precisely, see
	\Cref{smooth extension to schwinger3}. 
	
	The following result can be derived from Theorem \ref{thm:schwinger smooth extension0}.
	
	\begin{cor}\label{interability of Feynman graph integrand}
		For every decorated directed graph $\vG$ and $\Phi\in\Omega^\bullet(M^{|\vG_0|})$, the differential form $\widehat W(\vG,\Phi)$ is Lebesgue integrable over $(0,\infty)^{|\vG_{1}|}$.
	\end{cor}
	\begin{proof}
Since only the top-degree component of $\widehat W(\vG,\Phi)$ enters the
integral, throughout the proof we discard the lower-degree components and write simply $\widehat W_{\top}(\vG,\Phi)$ for this
top-degree component. The estimates for the lower-degree components are more
subtle because of the possible contribution of harmonic forms. In the
top-degree component, however, harmonic forms do not contribute, since
$\bar\partial^*$ annihilates them.
    
		We notice the following decomposition:
		\[
		(0,\infty)^{|\vG_1|}
		=
		\bigcup_{S\subseteq\vG_1}
		(0,1]^{|\vG_1\setminus S|}
		\times[1,\infty)^{|S|}.
		\]
		It suffices to prove that  $\widehat W_{\top}(\vG,\Phi)$ is integrable over $(0,1]^{|\vG_1\setminus S|}\times[1,\infty)^{|S|}$ for any $S\subseteq\vG_1$.
		
		We fix $S\subseteq\vG_1$. Let $\vG'$ be the subgraph whose vertex set $\vG'_{0}=\vG_{0}$ and edge set $\vG'_{1}=\vG_1\setminus S$. Over $(0,1]^{|\vG'_1|}\times[1,\infty)^{|S|}$, we have
		\begin{equation}
			\label{eq:hat-W-decomposition}
			\begin{aligned}
				\widehat W(\vG,\Phi)
				&=\int_{M^{|\vG_0|}}
				\Big(
				\big(\bigotimes_{v\in\vG_0} I_v\big)
				\circ \tau^\vG_*
				\Big)
				\Big(
				\bigotimes_{e\in\vG_1} P_{t_e}
				\Big)
				\wedge \Phi                                      \\
				&=\pm\int_{M^{|\vG_0|}}
				\Big(
				\big(\bigotimes_{v\in\vG_0} I_v\big)
				\circ \tau^\vG_*
				\Big)
				\Big(
				\big(\bigotimes_{e'\in\vG_1'} P_{t_{e'}}\big)
				\otimes
				\big(\bigotimes_{e\in S} P_{t_e}\big)
				\Big)
				\wedge \Phi .
			\end{aligned}
		\end{equation}
		We notice that $I_{v}$ is a Dolbeault holomorphic Lagrangian density, and $P_{t_e}$ is smooth over $[1,\infty)\times M\times M$, so the last line in \eqref{eq:hat-W-decomposition} can be written as a finite sum of 
		\begin{equation}
			\widehat W\big(\vG',\Psi(\bft')\big), \quad
			\bft'=(t_e)_{e\in S}\in [1,\infty)^{|S|},  
		\end{equation}
		where the directed graph $\vG'$ is equipped with some decoration, and \[\Psi(\bft')\in \Omega^\bullet\Big([1,\infty)^{|S|}\times M^{|\vG'_0|})\Big)\]
		is locally of the form
		$
		\prod_{e\in S}\Big( (D_{e} P_{t_e})(\bfz_{h(e)},\bfz_{t(e)})\Big)\wedge \Phi, \quad (\bfz_{v})_{v\in \vG'_{0}}\in U^{|\vG'_{0}|},
		$
		where $D_{e}$ is a holomorphic differential operator from $E\boxtimes E\big|_{U\times U}$ to $\mathcal{O}_{M\times M}|_{U\times U}$.
		
		Since the $dt$-component contains no harmonic part, as the harmonic part is
annihilated by $\bar\partial^*$, we recall the following estimate; see
\cite[Proposition 2.37]{berline2003heat}. For every $k\ge0$, there exist
constants $C_k,c_k>0$ such that
\[
        \big\|
        \iota_{\frac{\partial}{\partial t}}P_{t_e}
        \big\|_{C^k(M\times M)}
        \le
        C_ke^{-c_kt_e},
        \quad
        t_e\in[1,\infty).
\]
		Consequently, for every $k\ge0$, there exist constants $C_k,c_k>0$ such
		that
		\[
		\|\Psi_{\top}(\bft')\|_{C^k(M^{|\vG_0|})}
		\le
		C_ke^{-c_k\sum_{e\in S}t_{e}},
		\]
		where $\Psi_{\top}(\bft')$ is the top form part of $\Psi$ as a differential form on Schwinger parameter space $[1,\infty)^{|S|}$ valued on $\Omega^\bullet\big([1,\infty)^{|S|}\times M^{|\vG'_0|}\big)$.
        
		By Theorem \ref{thm:schwinger smooth extension0}, there exists a partial compactification $\widetilde{[0,\infty)^{|\vG_1'|}} $, such that the linear map 
		\[\Psi\in \Omega^\bullet( M^{|\vG_0|}\big) \longmapsto
		\widehat W\big(\vG',\Psi\big)\in \Omega^\bullet\big(\widetilde{[0,\infty)^{|\vG_1'|}}\big)
		\]
		is continuous. Therefore, we have
		\[
		\widehat W\big(\vG',\Psi(\bft')\big)\in \Omega^\bullet\Big(\widetilde{[0,1]^{|\vG_1'|}}\times[1,\infty)^{|S|}\Big),
		\]
		where $\widetilde{[0,1]^{|\vG_1'|}}=(\pi_\R^{\vG})^{-1}([0,1]^{|\vG_1'|})$ and $\pi_\R^{\vG}:\widetilde{[0,\infty)^{|\vG_1'|}}\to [0,\infty)^{|\vG_1'|}$ is the canonical proper smooth map defined in Definition \ref{compactification}. Moreover, since $\widetilde{[0,1]^{|\vG_1'|}}$ is compact, after we choose a Riemannian structure on $\widetilde{[0,1]^{|\vG_1'|}}$, there exist $r\in\mathbb N$ and $C,C_{r},c_{r}>0$ such that 
		\begin{equation*}
			\|\widehat W_{\top}\big(\vG',\Psi(\bft')\big)\|_{C^{0}(\widetilde{[0,1]^{|\vG_1'|}})}\leq C\|\Psi_{\top}(\bft')\|_{C^r(M^{|\vG_0'|})}\le
			 C_re^{-c_r\sum_{e\in S}t_{e}}.   
		\end{equation*}
		
		Since $\widehat W(\vG,\Phi)$ is a finite sum of terms like $W\big(\vG',\Psi(\bft')\big)$, for some $C',c'>0$, 
		\begin{equation}\label{exponential decay}
			\|\widehat W_{\top}(\vG,\Phi)\|_{C^{0}(\widetilde{[0,1]^{|\vG_1'|}})}\le
			C'e^{-c'\sum_{e\in S}t_{e}},\quad \bft'=(t_e)_{e\in S}\in [1,\infty)^{|S|}.
		\end{equation}
		Therefore, $\widehat W(\vG,\Phi)$ is Lebesgue integrable over $\widetilde{[0,1]^{|\vG_1'|}}\times[1,\infty)^{|S|}$.
		
		Finally, We notice that $(0,1]^{|\vG_1'|}$ differs from $\widetilde{[0,1]^{|\vG_1'|}}$ by a zero measure set, so $\widehat W(\vG,\Phi)$ is Lebesgue integrable over $(0,1]^{|\vG_1\setminus S|}\times[1,\infty)^{|S|}$. 
	\end{proof}
	
	\begin{cor}\label{thm:graph integral current}
		Let $\vec{\Gamma}$ be a decorated directed graph and let
		$\Phi\in\Omega^\bullet(M^{|\vG_0|})$. Then the heat kernel renormalization of the
		Feynman graph integral $W^{\mathrm{HK}}(\vG,\Phi)$ exists. Moreover, the linear map
        \[
        \Phi\in\Omega^\bullet(M^{|\vG_0|})\longmapsto W^{\mathrm{HK}}(\vG,\Phi)
        \]
        is continuous.
	\end{cor}
	\begin{proof}
		We notice that 
		\[ \int_{(\epsilon,L)^{|\vG_1|}}
		\int_{M^{|\vG_0|}}\widetilde W(\vG,\Phi)=\int_{(0,\infty)^{|\vG_1|}}
		\mathbf{1}_{(\epsilon,L)^{|\vG_1|}}\int_{M^{|\vG_0|}}\widetilde W(\vG,\Phi), \]
		where $\mathbf{1}_{(\epsilon,L)^{|\vG_1|}}$ is the characteristic function.
		Then by Corollary \ref{interability of Feynman graph integrand} and dominated convergence theorem, we have 
        \[
        W^{\mathrm{HK}}(\vG,\Phi)=\lim_{\substack{\epsilon\to 0\\ L\to \infty}}
			\int_{(\epsilon,L)^{|\vG_1|}}
			\int_{M^{|\vG_0|}}\widetilde W(\vG,\Phi)= \int_{(0,\infty)^{|\vG_1|}}\widehat W(\vG,\Phi).
        \]

        Let $\{\epsilon_i\}_{i=1}^{\infty}$ and $\{L_i\}_{i=1}^{\infty}$ be two sequences of positive numbers such that
\[
    \lim_{i\to\infty}\epsilon_i=0,
    \quad
    \lim_{i\to\infty}L_i=\infty.
\]
For each $i$, define a continuous linear map
\[
    T_i:\Omega^\bullet(M^{|\vG_0|})\longrightarrow \mathbb{C}
\]
by
\[
    T_i(\Phi)
    =
    \int_{(\epsilon_i,L_i)^{|\vG_1|}}
    \int_{M^{|\vG_0|}}\widetilde W(\vG,\Phi).
\]
Then the sequence $\{T_i\}_{i=1}^{\infty}$ converges pointwise to
$W^{\mathrm{HK}}(\vG,-)$. By the Banach-Steinhaus theorem for Fréchet spaces,
$\{T_i\}_{i=1}^{\infty}$ is equicontinuous. Since the pointwise limit of an equicontinuous family of linear maps is continuous, it follows that
$W^{\mathrm{HK}}(\vG,-)$ is a continuous linear map.
	\end{proof}

    \subsubsection{Cauchy principal value
	renormalization}

The heat kernel renormalization provides a rigorous formulation of the
Feynman graph integral as an iterated Lebesgue integral
\[
    W^{\mathrm{HK}}(\vG,\Phi)
    =
    \int_{(0,\infty)^{|\vG_1|}}
    \int_{M^{|\vG_0|}}\widetilde W(\vG,\Phi).
\]
One might expect that the iterated integral with the opposite order of
integration,
\begin{equation}\label{Heuristic iterated integral}
    \int_{M^{|\vG_0|}}
    \int_{(0,\infty)^{|\vG_1|}}
    \widetilde W(\vG,\Phi),
\end{equation}
also exists as a Lebesgue integral. However, this is not true; see
\cite[Example 1]{wang2025feynman}. Instead, the Cauchy principal
value renormalization provides a rigorous formulation of
\eqref{Heuristic iterated integral}.

\begin{defn}
		Let $\vG$ be a directed graph. The configuration space of $M$ with respect to $\vG$ is defined by
		\[
		\Conf_{\vG_{0}}(M) = \big\{ (p_1, \dots, p_{|\vG_{0}|}) \in M^{|\vG_{0}|}:\, p_i \neq p_j \text{ for } i \neq j \big\}.
		\]
	\end{defn}

\begin{lem}
    Let $\vec{\Gamma}$ be a decorated directed graph and let
    $\Phi\in\Omega^\bullet(M^{|\vG_0|})$. The Lebesgue integral
    \be\label{eq: check W}
        \left.
        \check{W}(\vG,\Phi):=\int_{(0,\infty)^{|\vG_1|}}
        \widetilde W(\vG,\Phi)
        \right|_{\Conf_{\vG_0}(M)}
    \ee
    converges and defines a smooth differential form on
    $\Conf_{\vG_0}(M)$.
\end{lem}

\begin{proof}
    This follows from Proposition~\ref{smoothness of propagator}.
\end{proof}

\begin{defn}
		 A \textbf{fake distance function} $\tilde{\rho}:M\times M\rightarrow\mathbb{R}$ is a non-negative function satisfying:
		\begin{enumerate}[(1)]
			\item $\tilde{\rho}^{2}$ is smooth and $\tilde{\rho}^{-1}(0)=\triangle$, where $\triangle:=\{(p,p)\in M\times M : p\in M\}$.
			\item There exists an open neighborhood $U\subset M\times M$ of $\triangle$ such that \[\tilde{\rho}^{2}|_{U}=\rho^{2}|_{U},\]
			where $\rho$ is the distance function on $M$.
		\end{enumerate}
	\end{defn}
Let $\tilde{\rho}$ be a fake distance function on $M$. For $\epsilon>0$, we define
\[
    \Conf^{\epsilon}_{\vG_{0}}(M;\tilde{\rho})
    =
    \big\{
    (p_1,\dots,p_{|\vG_{0}|})\in M^{|\vG_{0}|}
    :
    \prod_{i<j\in \vec{\Gamma}_{0}}
    \tilde{\rho}^{2}(p_i,p_j)>\epsilon
    \big\}.
\]

\begin{defn}
    Let $\vec{\Gamma}$ be a decorated directed graph and let
    $\Phi\in\Omega^\bullet(M^{|\vG_0|})$. If the limit
    \begin{equation}\label{def:W CPV}
        \lim_{\epsilon\to 0}
        \int_{\Conf^{\epsilon}_{\vG_{0}}(M;\tilde{\rho})}
        \int_{(0,\infty)^{|\vG_1|}}
        \widetilde W(\vG,\Phi)
    \end{equation}
    exists and is independent of the choice of the fake distance function
    $\tilde{\rho}$, then we call it the \textbf{Cauchy principal value
    renormalization of the Feynman graph integral} and denote it by
    $W^{\mathrm{CPV}}(\vG,\Phi)$.
\end{defn}

In \cite[Theorem 2.21]{WYFeynman}, the following result has been proved:

\begin{thm}
    Let $\vec{\Gamma}$ be a decorated directed graph and let
		$\Phi\in\Omega^\bullet(M^{|\vG_0|})$. Then the Cauchy principal value renormalization of the
		Feynman graph integral $W^{\mathrm{CPV}}(\vG,\Phi)$ exists. Moreover, it defines a generalized Cauchy principal value in the sense of Definition \ref{defn:global generalized cpv}:
        \[
        W^{\mathrm{CPV}}(\vG,\Phi)
		=\dashint_{\widetilde{\Conf}_{\vG_0}(M)}
		\int_{(0,\infty)^{|\vG_1|}}
        \widetilde W(\vG,\Phi),
        \]
        where $\widetilde{\Conf}_{\vG_0}(M)$ is the Fulton-MacPherson compactification of $\Conf_{\vG_{0}}(M)$, see \cite{17c23791-d52f-3aa0-a5a9-5abc4d54669e}.
\end{thm}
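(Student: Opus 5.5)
This theorem is quoted from \cite[Theorem 2.21]{WYFeynman}. I would reprove it by reducing to local analysis near the diagonals and using the regular-expression structure of \Cref{thm: Propogator has regular expression}.

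First, I would localize. Choose a partition of unity on $M^{|\vG_0|}$ subordinate to a cover adapted to the collision strata. On each piece, a cluster of colliding points lies in a single holomorphic chart $U$. Away from all diagonals, $\check W(\vG,\Phi)$ is smooth and integrable, so only neighborhoods of the diagonals matter.

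Second, I would pull back to the Fulton--MacPherson compactification $\widetilde{\Conf}_{\vG_0}(M)$. There, near a boundary stratum, points collide at rates governed by blow-up coordinates $r_1,\dots,r_k$, one for each nested cluster.

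Third, I would determine the local form of the integrand. Substituting the regular expression for $e^{\rho^2/2t}P_t$ and integrating out each $t_e$ yields, for each edge, a kernel of the form
\[
\frac{\text{polynomial in }(\bar z-\bar w),\,d(\bar z-\bar w)}{\rho^{2m}}\times(\text{smooth}).
\]
Since $\rho^2$ agrees with $|z-w|^2$ to leading order in real-analytic Kähler normal coordinates, these kernels pull back in blow-up coordinates to
\[
\sum r^{-N}\,e^{i\ell\theta}\,(\text{smooth}),
\]
where $\theta$ is the angular coordinate of the $\C^*$-action rescaling a cluster. This step needs real-analyticity, so that $\rho^2$ has a convergent expansion and the remainders stay smooth.

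Fourth, I would show that the principal value exists, by an angular-cancellation argument. Integrate first over the fibers of the $U(1)$-rotations of each cluster. Every term in which the $\Phi$-independent singular part has negative total charge in the anti-holomorphic variables is killed by the angular integration: holomorphic derivatives and $\bar z$-polynomials shift the charge in the wrong direction, which is the holomorphic mechanism. The surviving terms are either integrable, or of the form
\[
r^{-N}\,dr\wedge(\text{smooth}).
\]
The latter, integrated over the region $\{\prod\tilde\rho^2>\epsilon\}$, produces a Laurent expansion in $\epsilon$ with constant term plus $\log$ terms. One must show that the divergent $\epsilon^{-j}$ and $\log\epsilon$ coefficients vanish, again by charge counting. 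This is exactly the structure axiomatized in Definition \ref{defn:global generalized cpv}, so the constant term is the generalized Cauchy principal value.

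Finally, I would prove independence of the fake distance function. Two choices satisfy $\tilde\rho_1^2=\tilde\rho_2^2$ near $\triangle$, so they agree in a neighborhood of every collision. The regions they cut out therefore differ only by a set on which the integrand is smooth, and that set shrinks with $\epsilon$.

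The main obstacle is the fourth step for nested collisions. The regularizing hypersurface $\prod\tilde\rho^2=\epsilon$ is not adapted to a single blow-up coordinate, so I would need to show that the $\log\epsilon$ and $\epsilon^{-j}$ coefficients cancel uniformly across corners. I would try an induction on the depth of the stratum, applying the one-cluster angular vanishing iteratively in the coordinates $r_1\cdots r_k$.
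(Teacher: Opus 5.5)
The analytic core of your argument, the fourth step, does not go through as stated, and the final step is wrong for $|\vG_0|\ge 3$.

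\textbf{The cutoff is not rotation invariant.} Pulled back to $\widetilde{\Conf}_{\vG_0}(M)$, the cutoff $\prod_{i<j}\tilde\rho^2$ is locally $f\,|\bfz^{\bfj}|^2$, where $f>0$ is smooth but not rotation invariant. This already happens for a single colliding pair: on the blow-up chart $\rho^2=|z_1-w_1|^2F$, and $F$ depends on $\arg(z_1-w_1)$. For $|\vG_0|\ge 3$ there is a second source. Near a partial collision (say $p_1\to p_2$ with $p_3$ far away), the factors $\tilde\rho^2(p_1,p_3)$ and $\tilde\rho^2(p_2,p_3)$ are evaluated away from $\triangle$. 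So you cannot integrate first over the $U(1)$-orbits, because $\{f|\bfz^{\bfj}|^2>\epsilon\}$ is not a union of orbits. The obstruction is the factor $f$, not the depth of nesting.

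\textbf{The independence argument fails.} Two fake distance functions may differ on far pairs. So the two products differ by a non-constant factor right up to the partial diagonals. The regions they cut out then differ by a thin shell around a partial diagonal, and on that shell the integrand has a high-order pole. That the shell shrinks proves nothing.

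\textbf{The missing idea.} What you need is the Herrera--Lieberman mechanism the paper inherits from \cite{WYFeynman}, replayed in \Cref{lema3}--\Cref{prop-a8}:
\begin{enumerate}
\item write $z_1^{-i_1}dz_1=d\big(z_1^{1-i_1}\big)/(1-i_1)$;
\item apply Stokes on $\{|f\bfz^{\bfj}|>\delta\}$ to lower the pole order;
\item show that the boundary integrals over $\{|f\bfz^{\bfj}|=\delta\}$ tend to $0$, using Taylor splitting in the divisor variable (\Cref{lem-a5}), exact rotational vanishing when $f\equiv 1$, and then comparison for general $f$;
\item induct on pole order and logarithmic order.
\end{enumerate}
This gives existence and independence of the smooth defining function $h$ at once (\Cref{prop:global generalized cpv}). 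Independence of $\tilde\rho$ then follows, because $\prod\tilde\rho^2$ pulls back to such an $h$.

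For a fixed $f$ you could restore torus invariance by the non-holomorphic substitution $z_1\mapsto z_1f^{1/(2j_1)}$, which preserves the singularity class. So existence alone is reachable by your averaging. Comparing two choices of $f$, however, requires controlling the singular integrand on the shell between the two domains, and that is exactly the boundary estimate.

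Note also that \Cref{defn:global generalized cpv} is an honest limit, not the constant term of an expansion. A non-integrable charge-zero term surviving the averaging would make the theorem false; it would not call for a subtraction.

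\textbf{Corrections to your third step.} First, after integrating out $t_e$ the kernel is not only a polynomial in $\tilde{\bfy}$, $d\tilde{\bfy}$ times smooth forms. The $t$-dependence of the smooth factor in the regular expression (the higher heat-kernel coefficients) produces terms $\tilde\alpha(\tilde{\bfy},d\tilde{\bfy})\,\tilde\beta\,\ln\rho^2$; see \Cref{thm41} at $s=0$, or \cite[Proposition 3.17]{WYFeynman}. So the relevant class is $\bfz^{-\bfi}\prod_l(\ln|z_l|^2)^{k_l}\cdot(\text{smooth})$, and the induction must lower the log order as well.

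Second, the real-analytic K\"ahler hypothesis is not used for convergence of an expansion. What is used is $\rho^2=|z_1-w_1|^2(1+H)$ with $H$ smooth on the \emph{complex} blow-up chart, i.e.\ $\rho^2$ has no pure-type terms at any order. This is what gives $\tilde y_i=F_i/(z_1-w_1)$ with $F_i$ smooth in $(\bfz,\bar{\bfz})$. Agreement with $|z-w|^2$ to leading order is not enough. For $n=1$, if $\rho^2$ contained $(z-w)^3+(\bar z-\bar w)^3$, then $\tilde y^3$ would contain a multiple of $|z-w|^{-2}$, which has charge zero and diverges logarithmically. Correspondingly, the ``smooth'' factor in your $r^{-N}e^{i\ell\theta}(\text{smooth})$ must be smooth in $(z,\bar z)$, not merely in $(r,\theta)$, for charge counting to work.
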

	
\subsubsection{Zeta function renormalization}
	
	Zeta function renormalization can viewed as a rigorous formulation of the formal double integral \eqref{formal Feynman graph integral} by analytical continuation. Let us recall the relevant notions.

    \begin{defn}
        For any $s\in \mathbb{C}$, \textbf{The zeta regularized propagator in Schwinger space} is defined by
		\begin{equation}\label{zeta propagator t}
			P^s_{t}=\frac{t^{s}}{\Gamma(s+1)}P_t  \in \Omega^\bullet\Big( (0,+\infty); \Gamma(\widetilde{E} \boxtimes \widetilde{E}) \Big).
		\end{equation}
    \end{defn}

    The following lemma is critical in zeta function regularization:
    \begin{lem}\label{lemma in zeta}
         When $\Re(s)>\dim(M)+2$, $\iota_{\frac{\partial}{\partial t}}P_{t}^{s}$ can be extended to a continuous section on $[0,\infty)\times M\times M$. 
        Moreover, For any non-negative integer $k$ and $\Re(s)>\dim(M)+k+2$, there exist constants $c_{k},C_{k}>0$, such that
        \[
        \|\iota_{\frac{\partial}{\partial t}}P_{t}^{s}\|_{C^{k}(M\times M)}\leq C_ke^{-c_kt}, \quad t\in[0,\infty).
        \]
    \end{lem}
    \begin{proof}
        when $t>0$, $\iota_{\frac{\partial}{\partial t}}P_{t}^{s}$ is smooth since the heat kernel is smooth. To prove the continuity at $t=0$, we notice the following estimate (see \cite[Theorem 3.5]{ludewig2019strong}): For any non-negative integer $k$, there exists a constant $C'_{k}$, such that
        \[
        \|\iota_{\frac{\partial}{\partial t}}P_{t}\|_{C^{k}(M\times M)}\leq C'_{k} t^{-\dim(M)-k-2}\|e^{-\frac{\rho^{2}}{t}}\|_{C^{0}(M\times M)}\leq C'_{k}t^{-\dim(M)-k-2}, \quad t\in (0,1],
        \]
        therefore, for $\Re(s)>\dim(M)+k+2$, we have 
        \be\label{eq: zeta heat kernel small time}
        \|\iota_{\frac{\partial}{\partial t}}P_{t}^{s}\|_{C^{k}(M\times M)}\leq Ct^{\Re(s)-n-2}, \quad t\in (0,1].
        \ee
        This shows that for $\Re(s)>\dim(M)+k+2$, the $k$-th derivative of $\iota_{\frac{\partial}{\partial t}}P_{t}^{s}$ can be extended to a continuous section on $[0,\infty)\times M \times M$, and equals $0$ on $\{0\}\times M\times M$.

        For $\Re(s)>\dim(M)+k+2$, combining the continuity result with \cite[Proposition 2.37]{berline2003heat}, there exists a constant $C_{k},c_{k}>0$, such that
        \be\label{eq: zeta heat kernel large time}
        \|
        \iota_{\frac{\partial}{\partial t}}P^s_{t}
        \|_{C^k(M\times M)}
        \le
        C_{k}e^{-c_{k}t},
        \quad
        t\in[1,\infty).
        \ee
        The lemma then follows from \eqref{eq: zeta heat kernel small time} and \eqref{eq: zeta heat kernel large time}. 
    \end{proof}

Let $\vec{\Gamma}$ be a decorated directed graph and let
    $\Phi\in\Omega^\bullet(M^{|\vG_0|})$. Similar to \eqref{Feynman graph integrand}, for $\mathbf{s}=(s_{e})_{e\in \vG_{1}}\in \mathbb{C}^{|\vG_{1}|}$, we define
\[
	\widetilde W(\vG,\Phi,\mathbf{s})
	:=
	\big(
	(\otimes_{v\in\vG_0}I_v)
	\circ \tau^\vG_*
	\big)
	\big(\otimes_{e\in\vG_1}P_{t_e}^{s_{e}}\big)
	\wedge \Phi .
	\]
    The following result is a direct consequence of \Cref{lemma in zeta}:
    \begin{cor}\label{zeta at large real part}
        Let $\vec{\Gamma}$ be a decorated directed graph and let
    $\Phi\in\Omega^\bullet(M^{|\vG_0|})$. Let $\mathbf{s}=(s_{e})_{e\in \vG_{1}}\in \mathbb{C}^{|\vG_{1}|}$ and $\Re(s_{e})\gg0$ for any $e\in \vG_{1}$. The Lebesgue integral
    \be\label{eq: Psi ZF}
        W^{\mathrm{ZF}}(\vG,\Phi,\bfs):=\int_{(0,\infty)^{|\vG_1|}\times M^{|\vG_0|}}\widetilde W(\vG,\Phi,\mathbf{s})
    \ee
    converges. Moreover, the integral depends on $\mathbf{s}$ holomorphically.
    \end{cor}

   \begin{defn}
Let $\vG$ be a decorated directed graph, and let
$\Phi\in\Omega^\bullet(M^{|\vG_0|})$. Set
\[
        \cH:=\{s\in\mathbb C:\Re(s)\geq0\}.
\]
Suppose that $W^{\mathrm{ZF}}(\vG,\Phi,\bfs)$, initially defined by
\eqref{eq: Psi ZF} for $\Re(s_e)\gg0$, admits an analytic continuation
to the interior of $\cH^{|\vG_1|}$ that extends continuously to
$\cH^{|\vG_1|}$. We denote this extension again by
$W^{\mathrm{ZF}}(\vG,\Phi,\bfs)$. The \textbf{zeta-function renormalization of
the Feynman graph integral} is defined by
\[
        W^{\mathrm{ZF}}(\vG,\Phi)
        :=
        W^{\mathrm{ZF}}(\vG,\Phi,\mathbf{0}).
\]
\end{defn}
    \begin{rem}
As in heat kernel renormalization, for a general quantum field theory one
cannot expect $W^{\mathrm{ZF}}(\vG,\Phi,\mathbf{s})$ to admit an analytic continuation to
$\mathbb{H}^{|\vG_1|}$. In fact, its analytic continuation may have poles at
$\mathbf{s}=\mathbf{0}$; see \cite{Dang2017RenormalizationOF} for details. In
such cases, one usually subtracts the singular part at
$\mathbf{s}=\mathbf{0}$. This subtraction requires an additional choice, as in
heat kernel renormalization, and hence makes the construction less canonical
for general field theories.
\end{rem}

	The second main result of the present paper is the following Fubini-type theorem.
	
	\begin{thm}\label{thm:CPV HK equivalence}
		Let $\vec{\Gamma}$ be a decorated directed graph and let
    $\Phi\in\Omega^\bullet(M^{|\vG_0|})$. The zeta function renormalization $ W^{\mathrm{ZF}}(\vG,\Phi)$ exists, and we have 
		\[
		W^{\mathrm{CPV}}(\vG,\Phi)
		=W^{\mathrm{ZF}}(\vG,\Phi)
		=W^{\mathrm{HK}}(\vG,\Phi).
		\]
	\end{thm}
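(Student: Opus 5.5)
The plan is to handle the two equalities separately, with $W^{\mathrm{HK}}$ as the common reference value. By \Cref{thm:graph integral current} and \Cref{interability of Feynman graph integrand}, $W^{\mathrm{HK}}(\vG,\Phi)=\int_{(0,\infty)^{|\vG_1|}}\widehat W(\vG,\Phi)$ is an honest Lebesgue integral. It therefore suffices to show that both $W^{\mathrm{ZF}}$ and $W^{\mathrm{CPV}}$ agree with it.

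For the zeta-function part, note that $P^{s}_{t}=\frac{t^{s}}{\Gamma(s+1)}P_t$, and that the factor $\prod_e t_e^{s_e}/\Gamma(s_e+1)$ is a function on Schwinger space only. For $\Re(s_e)\gg0$, Fubini (justified by \Cref{lemma in zeta}) gives
\[
W^{\mathrm{ZF}}(\vG,\Phi,\bfs)=\int_{(0,\infty)^{|\vG_1|}}\prod_{e}\frac{t_e^{s_e}}{\Gamma(s_e+1)}\,\widehat W(\vG,\Phi).
\]
The right-hand side is the candidate analytic continuation for $\Re(s_e)\ge 0$. By \Cref{thm:schwinger smooth extension0}, $\widehat W$ extends smoothly to $\resch$. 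By the exponential decay \eqref{exponential decay}, $\widehat W_{\top}$ is integrable and decays at infinity. For $\Re(s_e)\ge0$ we have $|t_e^{s_e}|\le \max(1,t_e^{\Re s_e})$, which is polynomially bounded. This is harmless near $0$ and is absorbed by the exponential decay at infinity, locally uniformly in $\bfs$. Dominated convergence then shows that the integral is holomorphic on the interior of $\cH^{|\vG_1|}$ (via Morera, or by differentiating under the integral) and continuous up to the boundary. At $\bfs=\mathbf 0$ it equals $\int\widehat W=W^{\mathrm{HK}}$. Uniqueness of analytic continuation identifies this with the continuation of \eqref{eq: Psi ZF}. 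This step is essentially routine.

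The main obstacle is $W^{\mathrm{CPV}}=W^{\mathrm{HK}}$, which is a genuine exchange of the $M^{|\vG_0|}$ and Schwinger integrations. The approach is to insert the cutoff $\mathbf 1_{\Conf^\epsilon}$ and write
\[
\int_{\Conf^\epsilon_{\vG_0}(M;\tilde\rho)}\int_{(0,\infty)^{|\vG_1|}}\widetilde W
=\int_{(0,\infty)^{|\vG_1|}}\int_{M^{|\vG_0|}}\mathbf 1_{\Conf^\epsilon}\,\widetilde W(\vG,\Phi).
\]
This Fubini step is legitimate because away from the diagonals the Gaussian factor $e^{-\rho^2/2t}$ makes everything absolutely integrable. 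One then needs to let $\epsilon\to0$ inside the Schwinger integral. The plan is to redo the proof of \Cref{thm:schwinger smooth extension0} with $\Phi$ replaced by the cut-off integrand. Using the regular-expression structure (\Cref{thm: Propogator has regular expression}) and the wonderful-compactification blowups, the aim is to show that $\int_{M^{|\vG_0|}}\mathbf 1_{\Conf^\epsilon}\widetilde W$ is bounded by an integrable function on $\resch$ uniformly in $\epsilon$ and converges pointwise to $\widehat W$. Dominated convergence would then finish the argument.

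I expect the uniform bound near the boundary strata of $\resch$, where some $t_e\to0$, to be the hard point, since the sharp cutoff $\mathbf 1_{\Conf^\epsilon}$ is not smooth. If it fails directly, I would fall back on two reductions. First, replace $\mathbf 1_{\Conf^\epsilon}$ by a smooth cutoff $\chi(\prod\tilde\rho^2/\epsilon)$; the difference is controlled by the generalized Cauchy principal value characterization of $W^{\mathrm{CPV}}$ over the Fulton--MacPherson space, whose independence of $\tilde\rho$ is known from \cite{WYFeynman}. Second, use that in scaled variables $y=(\bar z-\bar w)/t$ the cutoff becomes a homogeneous function, so the boundary contributions reduce to angular integrals that vanish by the same holomorphic (phase) cancellation underlying the CPV existence.
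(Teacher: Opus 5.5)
Your treatment of $W^{\mathrm{ZF}}=W^{\mathrm{HK}}$ is correct and matches the paper's argument. The paper defines $W^{\mathrm{HK}}(\vG,\Phi,\bfs)=\int_{\resch}\prod_e t_e^{s_e}\Gamma(s_e+1)^{-1}\widehat W(\vG,\Phi)$ and shows it is holomorphic on $\cH^{|\vG_1|}$ using the smooth extension and the large-time decay. It then identifies this with $W^{\mathrm{ZF}}(\vG,\Phi,\bfs)$ for $\Re(s_e)\gg0$ by Fubini and continues analytically.

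The gap is in $W^{\mathrm{CPV}}=W^{\mathrm{HK}}$. You need an integrable majorant for $\int_{M^{|\vG_0|}}\mathbf 1_{\Conf^\epsilon}\widetilde W(\vG,\Phi)$ on $\resch$ that is uniform in $\epsilon$. You announce it but do not supply it, and it does not follow from the results you cite.
\begin{itemize}
\item Integrability of $\widehat W$ near the boundary of $\resch$ is a cancellation phenomenon: the integrand is not absolutely integrable on $(0,\infty)^{|\vG_1|}\times M^{|\vG_0|}$.
\item \Cref{thm:schwinger smooth extension0} gives continuity of $\Phi\mapsto\widehat W(\vG,\Phi)$ only in the $C^\infty$ topology. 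The form $\mathbf 1_{\Conf^\epsilon}\Phi$ is not smooth at all. A smoothed cutoff $\chi(\prod\tilde\rho^2/\epsilon)\Phi$ has $C^r$-norms that grow like negative powers of $\epsilon$. So ``redoing the proof with $\Phi$ replaced by the cut-off integrand'' gives no bound uniform in $\epsilon$.
\item The cutoff region $\prod_{i<j}\tilde\rho^2>\epsilon$ involves non-edge pairs and a general fake distance. It is not invariant under the phase rotations of the rescaled edge variables that produce the cancellation. It interferes exactly in the regime where the Schwinger times are comparable to the cutoff scale.
\end{itemize}
Your fallbacks do not close this. The first compares sharp and smooth cutoffs, which only concerns the configuration-space side (both have the same principal-value limit) and gives no domination on the Schwinger side. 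The second would amount to a new uniform estimate that you have not formulated.

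The missing idea is to avoid interchanging the $\epsilon\to0$ limit with the Schwinger integral altogether, by zeta-regularizing the CPV side as well. You already used analytic continuation for the ZF half, and the same device closes the CPV half.
\begin{itemize}
\item Put $P^{s_e}$ on each edge. The paper proves (\Cref{thm:zeta graph integral holomorphic continuation}) that the generalized principal value $W^{\mathrm{CPV}}(\vG,\Phi,\bfs)=\dashint_{\widetilde{\Conf}_{\vG_0}(M)}\check W(\vG,\Phi,\bfs)$ exists and is holomorphic on all of $\cH^{|\vG_1|}$. At $\bfs=0$ it equals $W^{\mathrm{CPV}}(\vG,\Phi)$.
\item For $\Re(s_e)\gg0$ all three integrals converge absolutely. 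Ordinary Fubini then gives $W^{\mathrm{CPV}}(\vG,\Phi,\bfs)=W^{\mathrm{ZF}}(\vG,\Phi,\bfs)=W^{\mathrm{HK}}(\vG,\Phi,\bfs)$ there.
\item Uniqueness of analytic continuation yields equality at $\bfs=0$.
\end{itemize}
The real work, which your proposal lacks, is holomorphy of the regularized principal value all the way down to $\Re(s_e)=0$. This needs two inputs:
\begin{itemize}
\item The Mellin-transform analysis (\Cref{lem: integrate good part}, \Cref{lem44}). It shows that $P^{s}$ has generalized divisorial singularities, with factors $\ln(\rho^2,s)$ and $\rho^{2s}$, along the blown-up diagonal, uniformly for $s\in\ccH$ and near the positive integers.
\item A parametric Herrera--Lieberman principal-value theorem (\Cref{thm-a2}, \Cref{prop:global generalized cpv}) that preserves holomorphic dependence on $\bfs$.
\end{itemize}
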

	
	The proof is given in \Cref{sec:zeta fubini theorem}; more precisely, see
\Cref{thm:zeta fubini theorem}. The strategy is to introduce
zeta-regularized Feynman graph integrals
$W^{\mathrm{HK}}(\vG,\Phi,\mathbf{s})$ and
$W^{\mathrm{CPV}}(\vG,\Phi,\mathbf{s})$ for heat kernel renormalization and
Cauchy principal value renormalization, respectively, which are defined on
$\mathbb{H}^{|\vG_1|}$. When $\Re(s_e)\gg 0$ for all $e\in\vG_1$, the equality
follows from the ordinary Fubini theorem. Since three zeta-regularized
expressions above depend holomorphically on the parameter $\mathbf{s}$, the equality extends to
the full parameter space $\mathbb{H}^{|\vG_1|}$.
Evaluating at $\mathbf{s}=\mathbf{0}$ gives \Cref{thm:CPV HK equivalence}.

	Let's use the notation 
	\[
	W(\vG,\Phi)=W^{\mathrm{HK}}(\vG,\Phi)=W^{\mathrm{ZF}}(\vG,\Phi)=W^{\mathrm{CPV}}(\vG,\Phi).
	\]

	By \Cref{thm:graph integral current}, $W(\vG,\cdot)$ is a well-defined
	current on $M^{|\vG_0|}$. It therefore makes sense to study its holomorphicity in the sense of currents, under the assumption that the decoration of $\vG$ is holomorphic.
	More precisely, we will study whether
	\[
	\bar\p W(\vG,\Phi):=\pm W(\vG,\bar\p\Phi)=0,
	\quad
	\forall\,\Phi\in\Omega^\bullet(M^{|\vG_0|}).
	\]
	
	We can now state the final main result of this paper.
	\begin{thm}[\Cref{prop:large-time-boundary} and \Cref{cor:kahler anomaly quotient graphs}]\label{thm:intro anomaly formula}
		Let $\vec{\Gamma}$ be a holomorphic decorated directed graph. Then $\bar\partial W(\vG,-)$ is a finite sum of boundary integrals over the partially
		compactified Schwinger space in \Cref{thm:schwinger smooth extension0}. 
	\end{thm}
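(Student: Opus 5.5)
We realize $W(\vG,-)$ through the heat kernel formulation, $W(\vG,\Phi)=\int_{(0,\infty)^{|\vG_1|}}\widehat W(\vG,\Phi)$, which is justified by \Cref{thm:graph integral current} and \Cref{thm:CPV HK equivalence}. The first step is to compute $\widehat W(\vG,\bar\partial\Phi)$. Since every $I_v$ is a holomorphic Lagrangian density, it commutes with $\bar\partial$ up to sign: its coefficients are holomorphic and $\bar\partial$ commutes with the holomorphic derivatives $\partial^{\bfj}$. Applying Stokes' theorem on the closed manifold $M^{|\vG_0|}$ moves $\bar\partial$ from $\Phi$ onto the product of propagators. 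We then use the identity $(\bar\partial_{\tE}\otimes\id+\id\otimes\bar\partial_{\tE}+d^{(0,\infty)})P_t=0$ edge by edge. Because $\bar\partial$ acts as a derivation on $\bigotimes_e P_{t_e}$, this replaces $\bar\partial$ by $-d_{\bft}$, the de Rham differential on $(0,\infty)^{|\vG_1|}$. Pulling $d_{\bft}$ outside the fiber integral gives, on the open Schwinger space,
\[
\int_{M^{|\vG_0|}}\widetilde W(\vG,\bar\partial\Phi)=\pm\, d_{\bft}\int_{M^{|\vG_0|}}\widetilde W(\vG,\Phi).
\]
The sign is a Koszul sign; I expect it to produce the factor $(-1)^{|\vG_1|}$.

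The second step uses \Cref{thm:schwinger smooth extension0}. Both sides extend smoothly to $\resch$, so by continuity the identity becomes $\widehat W(\vG,\bar\partial\Phi)=\pm d\widehat W(\vG,\Phi)$ on the whole manifold with corners. We then integrate over the compact region $\widetilde{[0,L]^{|\vG_1|}}=(\pi^{\vG}_\R)^{-1}([0,L]^{|\vG_1|})$, which is compact because $\pi^{\vG}_\R$ is proper. Stokes' theorem for manifolds with corners applies to the smooth form $\widehat W(\vG,\Phi)$ there, and gives
\[
W(\vG,\bar\partial\Phi)=\lim_{L\to\infty}\pm\int_{\partial\widetilde{[0,L]^{|\vG_1|}}}\widehat W(\vG,\Phi).
\]
Passing to the limit on the left-hand side is justified by the Lebesgue integrability in \Cref{interability of Feynman graph integrand} together with dominated convergence. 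The boundary $\partial\widetilde{[0,L]^{|\vG_1|}}$ is a finite union of codimension-one faces. Some of these faces come from $\{t_e=L\}$, and the others are the boundary hypersurfaces of $\resch$ lying over the coordinate hyperplanes $\{t_e=0\}$ and over the exceptional strata of the wonderful compactification. This exhibits $\bar\partial W(\vG,-)$ as a finite sum of boundary integrals, as the statement claims.

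The third step treats the faces $\{t_e=L\}$, which should contribute nothing in the limit. Their contribution involves only the components of $\widehat W$ with no $dt_e$. There the factor $P_{t_e}$ reduces to $H_{t_e}$, which converges to the harmonic projection $\H$ as $L\to\infty$. The limit therefore survives only as an integral over the remaining faces, possibly with $\H$ inserted on some edges. This is consistent with the limit appearing in \eqref{anomaly integral general}, and I would keep this limit rather than try to evaluate it.

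The main obstacle is the corner and sign bookkeeping on $\resch$, namely checking that Stokes' theorem applies to forms that are smooth only up to the corners. The harmonic contributions at large $t$ also need care: \Cref{interability of Feynman graph integrand} controls only the top-degree component, whereas the boundary faces see the lower-degree components. I would first handle the one-edge case explicitly and then argue face by face.
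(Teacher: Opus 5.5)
Your Steps 1 and 2 are exactly the paper's proof of \Cref{prop: anomaly IBP}. Stokes on $M^{|\vG_0|}$ together with $(d^{(0,\infty)}+\bar\partial\otimes\id+\id\otimes\bar\partial)P_t=0$ gives $\widehat W(\vG,\dbs\Phi)=d\,\widehat W(\vG,\Phi)$, and this identity extends to $\resch$ by \Cref{thm:schwinger smooth extension0}. Stokes on the compact manifold with corners $\widetilde{[0,L]^{|\vG_1|}}$, plus large-time decay of the top-degree component, then yields \eqref{anomaly integral general}. There is no corner obstacle here: $\widehat W$ is genuinely smooth on $\resch$, which is the point of the compactification.

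The gap is in Step 3.

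\emph{The large-time faces do not vanish.} Your claim that the faces $\{t_e=L\}$ contribute nothing in the limit is false in general, and it contradicts your own next sentence. On such a face only the zero-form part $H_{t_e}$ of $P_{t_e}$ survives, and $H_L\to\H$, which does not vanish in general. Already for a single edge, $(\bar\partial\otimes\id+\id\otimes\bar\partial)P=\delta-\H$, and the $\H$ term is exactly the $t=\infty$ end. The paper's \Cref{prop:large-time-boundary} evaluates these faces as $\sum_{e}\sum_k W(\vG\setminus\{e\},H_{e,k}\wedge\Phi)$. On $\{t_e=L\}$ every other edge must carry its $dt$-component, which has no harmonic part and decays exponentially. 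Absorbing the smooth kernel $H_L$ into the test form, the estimate of \Cref{prop:W-hat-large-time-decay} gives dominated convergence over $\widetilde{[0,L]^{|\vG_1|-1}}$, while $H_L\to\H$ on the edge $e$. This also resolves your worry that the faces see lower-degree components.

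\emph{You need face-by-face convergence.} Keeping one $\lim_{L\to\infty}$ outside the whole boundary integral only reproduces \eqref{anomaly integral general}. To obtain an honest finite sum of face integrals, you must show that each face converges separately as $L\to\infty$. For the small-time faces $\varrho_{\vG_1'}^{-1}(0)\times\widetilde{[0,L]^{|\vG_1\setminus\vG_1'|}}$, the paper does this through \Cref{prop:boundary reduction kahler}. That result identifies the limit with $W(\vG/\vG',D_{\vG'}\Phi)$, using the locality statement \Cref{thm:anomaly operator kahler}. Combining the two face evaluations is \Cref{cor:kahler anomaly quotient graphs}, which is what the statement cites.
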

	
	\section{Heat Kernel Formulation}\label{sec: heat formulation}
In this section, we prove that Feynman graph integrals in holomorphic field
theories admit heat kernel renormalization in the sense of
\Cref{defn: heat formulation}. The main idea is to pass from the real
Schwinger space $\resch$ to its complexification, denoted by
$\cpsch$. Correspondingly, we complexify the form
$\widetilde{W}(\vG,\Phi)$ to a form on
\[
M^{|\vG_0|}\times\cpsch.
\]
The key observation is that the complexified form admits an canonical extension to
a suitable blow-up of $M^{|\vG_0|}\times\cpsch$. The extended form enjoys
good properties. As a consequence, its fiber integral (equivalently, its pushforward)
to $\cpsch$ is a smooth differential form. This  will enable us to establish the existence of
the heat kernel renormalization. Finally, we also introduce a zeta-regularized version of the heat kernel
formulation.

	\subsection{A theorem about smoothness of pushforward map}\label{subsec: smooth pushforward}

    In this subsection, we introduce the notion of flatness of a smooth form along a
closed subset. We then establish a criterion showing that, if a form on $M$ is
flat along the critical locus of a submersion $M\to N$, then its pushforward,
initially defined as a current on $N$, is represented by a smooth form; see
\Cref{pushforward is smooth}. This criterion will be used to prove the heat kernel formulation.
	
	First, we introduce a concept as follows.
    \begin{defn}\label{defn:flat along a closed set}
		Let $X$ be a smooth manifold and let $A\subset X$ be a closed subset. A
		smooth function $f\in C^\infty(X)$ is said to be \textbf{flat along $A$} if,
		for every $x\in A$, $f$ as well as any derivative of $f$ at $x$ vanishes.
		
		The same definition applies to sections of a smooth vector bundle
		$E\to X$: a section $s\in\Gamma(X,E)$ is said to be \textbf{flat along
			$A$} if $s$ as well as any derivative of $s$ vanishes at every point of $A$. 
		
		We denote by $\Flat(X,A,E)$ the space of sections of
		$E\to X$ that are flat along $A$. Similarly, $\Flat_{c}(X,A,E)$ is the space of compactly supported sections in $\Flat(X,A,E)$. When the bundle $E$ is clear from the
		context, we abbreviate $\Flat(X,A,E)$ to $\Flat(X,A)$.
	\end{defn}

    The following lemma gives an important property of flat smooth functions.
    \begin{lem}\label{flat property}
        Let $A\subset \mathbb R^n$ be a closed subset. Suppose that
$f\in C^\infty(\mathbb R^n)$ is flat along $A$.
Then for every integer $k\geq 0$, there exists a constant $C_k>0$,
depending only on $n$ and $k$, such that
\[
|f(\mathbf{x})|\leq C_k\|f\|_{C^k(\mathbb R^n)}
\operatorname{dist}(\mathbf{x},A)^k,\quad \mathbf{x}\in\mathbb R^n ,
\]
where $\operatorname{dist}(\mathbf{x},A)$ is the Euclidean distance.
    \end{lem}
    \begin{proof}
        Let $\mathbf{x}\in \mathbb R^n$. Since $A$ is closed and nonempty, we may choose
$\mathbf{a}\in A$ such that
\[
|\mathbf{x}-\mathbf{a}|=\operatorname{dist}(\mathbf{x},A).
\]
If $\mathbf{x}\in A$, the estimate is immediate since $f$ is flat along $A$.
Otherwise, applying Taylor's formula to $f$ at $\mathbf{a}$, and using the fact
that all derivatives of $f$ of order $<k$ vanish on $A$, we obtain
\[
|f(\mathbf{x})|\leq C_k \|f\|_{C^k(\mathbb R^n)}
\operatorname{dist}(\mathbf{x},A)^k.
\]

    \end{proof}
	
	\newcommand{\Crit}{\operatorname{Crit}}
	Let $M$ and $N$ be smooth manifolds. For a  smooth map $f:M\to N$ , we define
	\[
	\Crit(f)
	:=
	\big\{x\in M:\, df_x:T_xM\to T_{f(x)}N
	\text{ is not surjective}\big\}.
	\]    
	\begin{thm}\label{pushforward is smooth}
		Let $M,N$ be two real analytic manifolds, and let $f:M \to N$ be an analytic surjective map. 
		Take a differential form $\a \in \Flat_{c}\big(M,\Crit(f)\big)$.  Then the pushforward current defined by
		\[
		(f_*\alpha)(\phi)
		=
		\int_M \alpha\wedge f^*\phi,
		\quad \phi\in \Omega_c^\bullet(N),
		\]
		is in fact smooth; that is, it is represented by an element of $\Omega^\bullet(N)$.
		
		\def\Flat{{\mathrm{Flat}}}
		Moreover, the map $\Flat_{c}\big(M,\Crit(f)\big)\to \Omega^\bullet(N)$, $\a\mapsto f_*\a$ is a continuous linear map under the $C^\infty$-topology. 
		
	\end{thm}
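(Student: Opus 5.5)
Away from the critical locus the pushforward is smooth by the local product structure; the real work is to show that flatness of $\alpha$ along $\Crit(f)$ forces every derivative of the putative density to be controlled near the images of critical points. We argue locally on $N$: fix $y_0\in N$ and a relatively compact coordinate chart $V\ni y_0$, identified with an open set of $\mathbb R^m$. Since $\alpha$ has compact support, only the compact set $K=\operatorname{supp}\alpha\cap f^{-1}(\overline V)$ matters.

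Step 1 (regular part). On $M\setminus\Crit(f)$ the map $f$ is a submersion, so fiber integration of compactly supported forms supported there produces smooth forms, and differentiation in $y$ commutes with fiber integration. We choose a cutoff $\chi_\delta$ that equals $1$ on a $\delta$-neighborhood of $\Crit(f)$ and is supported in a $2\delta$-neighborhood, with $\|\chi_\delta\|_{C^k}\lesssim\delta^{-k}$. Then
\[
f_*\alpha=f_*\big((1-\chi_\delta)\alpha\big)+f_*(\chi_\delta\alpha),
\]
where the first term is smooth for each $\delta$. Taking a sequence $\delta_j\to0$, it suffices to show that $f_*\big((1-\chi_{\delta_j})\alpha\big)$ is Cauchy in $C^k(V)$ for every $k$, and that $f_*(\chi_\delta\alpha)\to0$ as currents. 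The second claim follows because $\chi_\delta\alpha\to0$ in $C^0$ (by flatness) with supports in a fixed compact set.

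Step 2 (derivative estimates). We need bounds on $\partial_y^\beta f_*(\beta')$ for forms $\beta'$ supported away from $\Crit(f)$, in terms of the distance to $\Crit(f)$. Locally on $M\setminus\Crit(f)$ we lift coordinate vector fields $\partial_{y_i}$ to horizontal vector fields $\tilde v_i$, using a Riemannian metric on $M$ and the right inverse of $df$ given by $df^*(df\,df^*)^{-1}$. Then $\partial_y^\beta f_*\beta'=f_*(\mathcal L_{\tilde v}^\beta\beta')$. The coefficients of $\tilde v_i$ and their derivatives blow up like $\sigma(x)^{-N_k}$, where $\sigma$ is the smallest singular value of $df$, or equivalently $\det(df\,df^*)^{1/2}$. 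This is where analyticity enters: by the Łojasiewicz inequality for the real-analytic function $\det(df\,df^*)$ on the compact set $K$, we get
\[
\det(df\,df^*)(x)\ge c\,\operatorname{dist}\big(x,\Crit(f)\big)^{\mu}
\]
for some $c,\mu>0$. Hence $\|\mathcal L_{\tilde v}^\beta\gamma\|\lesssim \operatorname{dist}(x,\Crit f)^{-N}\|\gamma\|_{C^{|\beta|}}$ for some $N$ depending on $|\beta|$.

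Step 3 (closing the argument). We apply Step 2 to the difference of two terms, namely $(\chi_{\delta}-\chi_{\delta'})\alpha$, whose $C^{k}$ norms carry factors $\delta^{-k}$. By Lemma~\ref{flat property}, applied in charts, $\alpha$ and its derivatives are $O(\operatorname{dist}^{R})$ for every $R$, so they absorb all the negative powers $\operatorname{dist}^{-N}$ and $\delta^{-k}$. The fiber volume is bounded because the support is compact, so the fiber integrals of bounded forms are bounded. Since fibers are not uniformly nice, we bound $|f_*\gamma|$ via the coarea formula by $\int|\gamma|/J$ where $J=\det(df\,df^*)^{1/2}$, which again costs only a power of the distance by Łojasiewicz, and flatness absorbs it. This yields
\[
\big\|\partial^\beta f_*\big((1-\chi_\delta)\alpha\big)-\partial^\beta f_*\big((1-\chi_{\delta'})\alpha\big)\big\|_{C^0(V)}\to0 .
\]
The limit is therefore a smooth form equal to $f_*\alpha$. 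The same estimates are linear in $\alpha$ and bounded by finitely many $C^r$ seminorms of $\alpha$ (using the constant $C_k\|\alpha\|_{C^k}$ from Lemma~\ref{flat property}), which gives the continuity statement.

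The main obstacle is Step 2: obtaining uniform polynomial blow-up bounds for the horizontal lifts and for the coarea Jacobian near $\Crit(f)$. This is exactly where real analyticity, via Łojasiewicz, is used, and care is needed so that the constants depend only on the compact support.
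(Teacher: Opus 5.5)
Your proposal is correct in outline, but it takes a different route from the paper. The paper never cuts off near $\operatorname{Crit}(f)$ and never looks at fibers. It writes $(f_*\alpha)(\phi)$ as an oscillatory integral using $\delta(y-f(x))=(2\pi)^{-\dim N}\int e^{i\xi\cdot(y-f(x))}\,d\xi$, and proves smoothness through the Fourier criterion $|(f_*\alpha)(e^{i\mathbf{k}\cdot y}\phi)|\le C_N(1+|\mathbf{k}|^2)^{-N}$. To get this it integrates by parts twice:
first in $y$ with $(1-\Delta_y)^N$, then in $x$ with $L'=1+i\,\xi\cdot(AA^T)^{-1}A\,\partial_x$, where $A$ is the Jacobian matrix of $f$. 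This operator satisfies $L'e^{i\xi\cdot(y-f(x))}=(1+|\xi|^2)e^{i\xi\cdot(y-f(x))}$. The powers of $\det(AA^T)^{-1}$ in $(L^{\prime,*})^{N'}$ are controlled by \L ojasiewicz and absorbed by flatness.

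In a chart with the Euclidean metric, your horizontal lifts satisfy $\sum_i\xi_i\tilde v_i=\xi\cdot(AA^T)^{-1}A\,\partial_x$, so $L'=1+i\sum_i\xi_i\tilde v_i$. The paper's $x$-integration by parts is thus the Fourier-side counterpart of your identity $\partial_y^\beta f_*\beta'=f_*(\mathcal L_{\tilde v}^\beta\beta')$, and \L ojasiewicz plus flatness enter at exactly the same place. The Fourier route avoids the cutoffs $\chi_\delta$, the limit $\delta\to0$, and all fiber geometry, since every estimate is pointwise on $M$, where flatness applies directly. Your route works with the density itself and makes the $C^k$ bounds, hence the continuity statement, explicit. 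The cost is that you must control fiber integrals.

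That is where your write-up has a weak step: the claim in Step 3 that the fiber volume is bounded \emph{because the support is compact}. Compactness gives no uniform-in-$y$ bound on $\mathcal{H}^{\dim M-\dim N}\big(f^{-1}(y)\cap K\big)$. For example, take $f(x_1,x_2)=h(x_1)$ with $h(x)=e^{-1/x^2}\sin(1/x)$. In $[-1,1]^2$ the fiber over $y$ consists of $\#\{x\in[-1,1]:h(x)=y\}$ segments, a number that is unbounded as $y\to0$, and for generic $y$ these segments consist of regular points. For real-analytic $f$ a uniform bound on the regular parts of the fibers does hold, but it is a theorem of subanalytic geometry (uniform finiteness of connected components combined with Cauchy--Crofton), not a consequence of compactness.

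The simplest repair avoids pointwise-in-$y$ estimates altogether. By duality, $|\int_N f_*\gamma\wedge\phi|=|\int_M\gamma\wedge f^*\phi|\le C_K\|\gamma\|_{L^1}\|\phi\|_{C^0}$, so $\|f_*\gamma\|_{L^1(V)}\le C_K\|\gamma\|_{L^1(M)}$ with no Jacobian loss. Your Step 2 pointwise bound on $\mathcal L_{\tilde v}^\beta\big((\chi_{\delta}-\chi_{\delta'})\alpha\big)$ then shows that $f_*\big((1-\chi_{\delta_j})\alpha\big)$ is Cauchy in $W^{k,1}$ on compact subsets of $V$ for every $k$. Sobolev embedding upgrades this to convergence in every $C^k$. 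With this change the coarea formula and the Jacobian $J$ are no longer needed, and the continuity estimate follows the same way.
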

	
	\begin{proof}
		\def\bfx{\mathbf{x}}
		\def\bfy{\mathbf{y}}
		\def\bff{\mathbf{f}}
		\def\bfxi{\boldsymbol{\xi}}
		It suffices to verify the theorem locally.  
		Let $\bfx=(x_1,\dots,x_m)$ and $\bfy=(y_1,\dots,y_n)$ be coordinates on $\R^m$ and $\R^n$ respectively.  
		Let $\bff:\R^m\to\R^n$ be analytic:
		\[
		\bfx \mapsto \bff(\bfx)=\big(f_1(\bfx),\dots,f_n(\bfx)\big),
		\]
		and let $\a\in \Omega^\bullet_c(\R^m).$  
		Then we can regard $\bff_*\a$ as a distribution defined by 
		\[
		\bff_*\a(\phi)=\int_{\R^m}\a\wedge \bff^*\phi, 
		\quad \forall \phi\in\Omega^\bullet_c(\R^n).
		\]
		
		Moreover, we can rewrite this as
		\[
		\begin{aligned}
			\bff_*\a(\phi) 
			&= \int_{\R^n}\int_{\R^m}\a(\bfx)\wedge \phi(
			\bfy
			)\,\delta(\bfy-\bff(\bfx))\, d^n\bfy \\
			&= \frac{1}{(2\pi)^n}\int_{\R^n}\int_{\R^n}\int_{\R^m}\a(\bfx)\wedge \phi(\bfy)\, 
			e^{i\bfxi\cdot\big(\bfy-\bff(\bfx)\big)}\, d^n\bfy\, d^n\bfxi,
		\end{aligned}
		\]
		where $\bfxi=(\xi_1,\dots,\xi_n)$, $d^n\bfy:=dy_1\wedge\cdots\wedge dy_n$, and 
		$d^n\bfxi:=d\xi_1\wedge\cdots\wedge d\xi_n$.
		
		By the Fourier-transform characterization of smoothness (see \cite[Section 8.1]{hormander1983analysis}), to show that $\bff_*\a$ is smooth, it suffices to prove that 
		\[
		\big|\bff_*(\a)\big(e^{i\bfk\cdot\bfy}\phi\big)\big|
		\;\leq\; \frac{C_N}{(1+|\bfk|^2)^N}
		\]
		for any $N>0$, where $\bfk=(k_1,\dots,k_n)\in\R^n.$
		
		Note that
		\[
		\begin{aligned}
			&\ \ \ \ \bff_*(\alpha)\!\left(\phi e^{i \bfk \cdot \bfy}\right) 
			= \frac{1}{(2 \pi)^n} \int_{\R^n}\int_{\R^n}\int_{\R^m} 
			\a(\bfx)\wedge\phi(\bfy)\, e^{i\bfxi\cdot(\bfy-\bff(\bfx))} e^{i \bfk \cdot \bfy} 
			\, d^n \bfy\, d^n \bfxi \\
			&= \frac{1}{(2 \pi)^n} \int_{\R^n}\int_{\R^n}\int_{\R^m} 
			\a(\bfx)\wedge\phi(\bfy)\, e^{i \bfxi \cdot(\bfy-\bff(\bfx))} 
			\frac{\Big(1-\sum_{j=1}^n \frac{\p^2}{\p y_j^2}\Big)^N }{(1+|\bfk|^2)^N}
			e^{i \bfk \cdot\bfy} 
			\, d^n \bfy\, d^n \bfxi.
		\end{aligned}
		\]
		
		Let 
		\[
		L \;=\; 1-\sum_{j=1}^n \frac{\partial^2}{\partial y_j^2}.
		\]  
		We can integrate by parts in the $\bfy$-variables:
		\[
		\bff_*(\alpha)\!\left(\phi e^{i \bfk \cdot \bfy}\right) 
		= \frac{1}{(2 \pi)^n (1+|\bfk|^2)^N} 
		\int_{\R^n}\int_{\R^n}\int_{\R^m} 
		\a(
		\bfx
		)\wedge L^N\!\Big(\phi(\bfy) e^{i \bfxi \cdot(\bfy-\bff(\bfx))}\Big) 
		e^{i \bfk \cdot\bfy}\, d^n \bfy\, d^n \bfxi.
		\]
		
		Now $L^N\!\Big(\phi(\bfy) e^{i \bfxi \cdot(\bfy-\bff(\bfx))}\Big)$ can be written as a finite sum of terms of the form
		\[
		\psi(\bfy)\, p(\bfxi)\, e^{i \bfxi \cdot(\bfy-\bff(\bfx))},
		\]
		where $\psi$ is a compactly supported smooth form in $\bfy$, and $p$ is a polynomial in $\bfxi$ of degree $\leq 2N$.
		
		As a result, it suffices to prove that
		\be\label{integral we need to estimate}
		\left|\int_{\R^n}\int_{\R^n}\int_{\R^m} 
		\a(\bfx)\wedge \psi(\bfy)\, p(\bfxi)\, 
		e^{i \bfxi \cdot\big(\bfy-\bff(\bfx)\big)} e^{i \bfk \cdot\bfy}
		\, d^n \bfy\, d^n \bfxi\right|<\infty,
		\ee
		where $\psi$ and $p$ are as described above.

		Let $A(\bfx)$ be the Jacobian matrix of $\bff$: 
		\[
		A(\bfx)=
		\begin{pmatrix}
			\frac{\p f_1}{\p x_1}(\bfx) & \cdots & \frac{\p f_1}{\p x_m}(\bfx) \\
			\vdots & \ddots & \vdots \\
			\frac{\p f_n}{\p x_1}(\bfx) & \cdots & \frac{\p f_n}{\p x_m}(\bfx)
		\end{pmatrix}.
		\]
		
		Let 
		\[
		\p_{\bfx}:=\Bigg(\frac{\p}{\p x_1},\dots,\frac{\p}{\p x_m}\Bigg)^T.
		\]
		Define
		\be\label{L  prime on e}
		L' \;=\; 1 + i\,\bfxi\cdot \big(A(\bfx)A^T(\bfx)\big)^{-1}A(\bfx)\p_{\bfx}.
		\ee
		Then outside $\mathrm{Crit}(\bff)$, the operator $L'$ is well-defined. Moreover, outside $\mathrm{Crit}(\bff)$, one can verify that
		\[
		L'e^{i \bfxi \cdot\big(\bfy-\bff(\bfx)\big)}
		= (1+|\bfxi|^2)\,e^{i \bfxi \cdot\big(\bfy-\bff(\bfx)\big)}.
		\]
		
		\def\ad{\mathrm{ad}}
		\def\ad{\mathrm{ad}}
		Given any square matrix $M$, let $\ad(M)$ denote the adjugate matrix of $M$.  
		If $M(\bfx)$ is a smooth family of invertible matrices, then
		\[
		M(\bfx)^{-1} \;=\; \det(M(\bfx))^{-1}\,\ad(M(\bfx)),
		\]
		and
		\[
		\p_{x_i}\big(M(\bfx)^{-1}\big) 
		= -M(\bfx)^{-1}\,\big(\p_{x_i}M(\bfx)\big)\,M(\bfx)^{-1}.
		\]
		Thus one can write $L'$ in the form
		\be\label{L prime}
		1+\sum_{k=1}^m h_{1,k}(\bfxi,\bfx)\det\!\big(A(\bfx)A^T(\bfx)\big)^{-1}\frac{\p}{\p x_k},
		\ee
		where $h_{1,k}(\bfxi,\bfx)$, $k=1,\dots,m$, are smooth in $\bfx$ and linear in $\bfxi$.  
		Similarly, the formal adjoint $L^{\prime,*}$ of $L'$ can be expressed as
		\be\label{L prime star}
        1
		+ g_1(\bfxi,\bfx)\det\!\big(A(\bfx)A^T(\bfx)\big)^{-2}
		+ \sum_{k=1}^m g_{2,k}(\bfxi,\bfx)\det\!\big(A(\bfx)A^T(\bfx)\big)^{-1}\frac{\p}{\p x_k},
		\ee
		where $ g_1(\bfxi,\bfx)$ and $g_{2,k}(\bfxi,\bfx)$ are smooth in $\bfx$ and linear in $\bfxi$.
		
		Here, by the formal adjoint of $L'$, we mean that for any two compactly supported forms $\b_1,\b_2$ on $\R^m$ supported away from $\mathrm{Crit}(\bff)$, one has
		\[
		\int_{\R^m}\b_1\wedge L'\b_2 
		= \int_{\R^m} L^{\prime,*}\b_1 \wedge \b_2.
		\]
		
		\def\dist{\mathrm{dist}}
		Now, by \L ojasiewicz's inequality, on the support of $\a$ there exist constants $C>1$ and $\theta\in(0,1)$ such that
		\be\label{Lojasiewicz}
		\left|\det\!\big(A(\bfx)A^T(\bfx)\big)^{-1}\right|
		\;\leq\; \frac{C}{\dist(\bfx,\mathrm{Crit}(\bff))^\theta}.
		\ee
		
		Consequently, by flatness of $\a$ along $\Crit(\bff)$, \Cref{flat property}, \eqref{L prime star}, and \eqref{Lojasiewicz}, for any $N'>0$ there exists $C_{N'}>0$ such that
		\be\label{L prime star alpha}
		\big|(L^{\prime,*})^{N'} \a(\bfx)\big|
		\;\leq\; C_{N'}(1+|\bfxi|^2)^{\frac{N'}{2}}\chi(\bfx),
		\ee
		where
		\[
		\chi(\bfx)=
		\begin{cases}
			1, & \text{if } \bfx\in \mathrm{supp}(\a),\\
			0, & \text{otherwise.}
		\end{cases}
		\]
		
		If $N'$ is chosen sufficiently large, then
		\[\ba
		&\;\;\left|\int_{\R^n}\int_{\R^n}\int_{\R^m} 
		\a(\bfx)\wedge \psi(\bfy)\, p(\bfxi)\, 
		e^{i \bfxi \cdot(\bfy-\bff(\bfx))} e^{i \bfk \cdot\bfy}
		\, d^n \bfy\, d^n \bfxi\right| \\[6pt]
		&= \left|\int_{\R^n}\int_{\R^n}\int_{\R^m} 
		\a(\bfx)\wedge \psi(\bfy)\, p(\bfxi)\, 
		\frac{(L')^{N'}}{(1+|\bfxi|^2)^{N'}}\Big(e^{i \bfxi \cdot(\bfy-\bff(\bfx))}\Big) e^{i \bfk \cdot\bfy}
		\, d^n \bfy\, d^n \bfxi\right| \\[6pt]
		&= \left|\int_{\R^n}\int_{\R^n}\int_{\R^m} 
		\frac{(L^{\prime,*})^{N'}}{(1+|\bfxi|^2)^{N'}}\big(\a(\bfx)\big)\wedge \psi(\bfy)\, p(\bfxi)\, 
		e^{i \bfxi \cdot(\bfy-\bff(\bfx))} e^{i \bfk \cdot\bfy}
		\, d^n \bfy\, d^n \bfxi\right| \\[6pt]
		&\leq C \int_{\R^n}\int_{\mathrm{supp}(\phi)}\int_{\mathrm{supp}(\a)} 
		\frac{1}{(1+|\bfxi|^2)^{\frac{N'}{2}-N}}
		\, d^m\bfx\, d^n \bfy\, d^n \bfxi \;<\;\infty,
		\ea\] 
		where the second equality follows from flatness of $\a$ along $\Crit(\bff)$, \eqref{L prime}, \eqref{L prime star}, and \eqref{Lojasiewicz} (justifying integration by parts), and the first inequality in the last line follows from \eqref{L prime star alpha} together with the fact that $p(\bfxi)$ is a polynomial of degree $2N$. 
		
		Thus, \eqref{integral we need to estimate} is verified. The continuity of $f_{*}$ follows by expanding the notation introduced in the preceding construction and keeping track of the corresponding estimates. 
	\end{proof}
	
	The following corollary follows from \Cref{pushforward is smooth} and partition of unity.
    \begin{cor}\label{pushforward is smooth2}
      Let $M,N$ be two real analytic manifolds, and let $f:M \to N$ be a proper analytic surjective map. 
		Take a differential form $\a \in \Flat\big(M,\Crit(f)\big)$.  Then the pushforward current defined by
		\[
		(f_*\alpha)(\phi)
		=
		\int_M \alpha\wedge f^*\phi,
		\quad \phi\in \Omega_c^\bullet(N),
		\]
		is smooth.\def\Flat{{\mathrm{Flat}}}
		Moreover, the map $\Flat\big(M,\Crit(f)\big)\to \Omega^\bullet(N)$, $\a\mapsto f_*\a$ is a continuous linear map under the $C^\infty$-topology.   
    \end{cor}
	The following simple fact is often used in verifying the conditions in
	\Cref{pushforward is smooth} and \Cref{pushforward is smooth2}.

	\begin{prop}\label{prop:exp-minus-f-over-g-smooth}
		Let $X$ be a compact smooth manifold. Let $f,g\in C^\infty(X)$, with
		$g\ge0$. Assume that $f$ is strictly positive on $\{g=0\}$. Then for any
		$k>0$, the function
		\[
		g^{-k}e^{-f/g}
		\]
		defined on $\{g>0\}$ extends to a smooth function on $X$, and is flat along
		$\{g=0\}$.
	\end{prop}
	
	\begin{proof}
		Since $f>0$ on the compact set $\{g=0\}$, there is a neighborhood $U$ of
		$\{g=0\}$ and a constant $c>0$ such that $f\ge c$ on $U$. Hence, on
		$U\cap\{g>0\}$, $e^{-f/g}\le e^{-c/g}$, and $e^{-c/g}$ tends to zero
		faster than any power of $g$. Every derivative of $g^{-k}e^{-f/g}$ is a finite
		sum of terms of the form
		\[
		e^{-f/g}\frac{A}{g^{N}},
		\]
		with $A$ smooth; hence all derivatives tend to zero as $g\to0$. Therefore
		the zero extension is smooth, and all its derivatives vanish on $\{g=0\}$.
	\end{proof}

	\def\sq{\mathrm{sq}}
	\def\Bl{\mathrm{Bl}}
	\subsection{Complexified Schwinger parameter}\label{subsec: complexfied schwinger}

    In this subsection, we complexify the time parameter and denote the resulting
complex time parameter by $\tau$. We prove that there exists an integer
$N>0$ such that $\tau^N$ times the complexified propagator extends to a
smooth form on a suitable blow-up of $M\times M\times\mathbb C$ and is flat
along the desired closed subset; see \Cref{propogator is meromorphic}.

	Let $\sq\colon \mathbb C\to [0,\infty)$ be the map defined by
\[
\sq(\tau)=t,\quad t=|\tau|^2.
\]
We also regard $\tau$ as the holomorphic linear coordinate on $\mathbb C$,
and call it the complexified Schwinger parameter. For a closed complex submanifold $Z\subset M$, we denote by $\mathrm{Bl}_{Z}M$ the blow-up of $M$ along $Z$.

	\begin{prop}\label{propogator is meromorphic}
		Let $R_t \in \Omega^\bullet\!\big((0,\infty);\Gamma(\tE \boxtimes \tE)\big)$, such that $e^{\frac{\rho^2}{2t}}R_t$ admits a regular expression, where $\rho$ is the distance function on $M$.  
		Then there exists $N\in\mathbb{N}$, such that
		\[
		\tau^N\sq^* R_t \in \Omega^\bullet\!\big(\C\setminus\{0\};\Gamma(\tE \boxtimes \tE)\big)
		\]
		extends to a smooth differential form on 
		\[
		\mathrm{Bl}_{\triangle \times \{0\}}(M \times M \times \C),
		\]
		where \[
		\triangle := \{(q,q) \in M \times M : q \in M\} \subset M \times M
		\]
		is the diagonal. Moreover, $\tau^N\sq^*R_t$ is flat along the closure of 
        \[(M\times M)\setminus\triangle\times\{0\}\subset \mathrm{Bl}_{\triangle \times \{0\}}(M \times M \times \C).
        \]
		 In particular, by \Cref{thm: Propogator has regular expression}, these statements hold for the propagator in Schwinger parameter space $P_{t}$ and its (higher) holomorphic derivatives.
	\end{prop}
	\begin{proof}
		The proposition is local, so we may assume that $M=\mathbb{C}^n$, and $\sq^* R_t$ has the following form:
		\[
		\sq^* R_t
		=
		e^{-\frac{\rho^2}{2|\tau|^2}}a\left(\frac{\bar{\bfz}-\bar{\bfw}}{|\tau|^2},d\Big(\frac{\bar{\bfz}-\bar{\bfw}}{|\tau|^2}\Big)\right)b,\quad (\mathbf{z},\mathbf{w})\in\mathbb{C}^n \times \mathbb{C}^n, 
		\]
		where $a$ is a polynomial and $b \in \Omega^\bullet\Big(\mathbb{C}; \Gamma\big((\widetilde{E} \boxtimes \widetilde{E})\big)\Big)$.
		
		We only need to prove that there exists a neighborhood 
		$W \subset \C^n \times \C^n \times \C$ of $(0,0,0)$ such that our claim holds in 
		$p^{-1}(W) \subset \Bl_{\triangle \times \{0\}}(\C^n \times \C^n \times \C)$, where \[p:\Bl_{\triangle \times \{0\}}(\C^n \times \C^n \times \C)\longmapsto\C^n \times \C^n \times \C
        \] is the canonical blowup map.

		\def\P{\mathbb{P}}
		The blowup admits the following description:
		\[
		\begin{aligned}
			\Bl_{\triangle \times \{0\}}\big(\C^n \times \C^n \times \C\big)
			\cong
			\Big\{
			&(\bfz,\bfw,\tau,[\lambda_1 : \cdots : \lambda_n : \lambda])
			\in (\C^n \times \C^n\times \C) \times \P^{n}
			\;:\; \\
			&\bfz - \bfw = k \cdot \boldsymbol{\lambda},
			\quad
			\tau = k \lambda
			\text{ for some } k \in \C
			\Big\},
		\end{aligned}
		\]
		where
		\[
		\boldsymbol{\lambda} := (\lambda_1,\dots,\lambda_n),
		\quad
		\bfz = (z_1,\cdots,z_n),
		\quad
		\bfw = (w_1,\cdots,w_n) \in \C^n.
		\]

		Let $\{U_i\}_{i=1}^{n+1}$ be an open cover of 
		$\Bl_{\triangle \times \{0\}}(\C^n \times \C^n \times \C)$, where each $U_i$ corresponds to the subset $\{\lambda_i \neq 0\},\, 1 \leq i \leq n$, and $U_{n+1}$ corresponds to $\{\lambda \neq 0\}$.

		On the chart $U_1$, we may use the canonical affine coordinates
		\[
		(z_1 - w_1, \lambda_2,\dots,\lambda_n,\lambda, \bfw), 
		\]
        such that $\lambda_{1}=1$ and 
        \[
        \begin{cases}
z_{i}-w_{i}=\lambda_{i}(z_{1}-w_{1}),\quad i\neq 1,\\
\tau=\lambda(z_{1}-w_{1}).
\end{cases}
        \]
		
		Then there exists a neighborhood 
		$W \subset \C^n \times \C^n \times \C$ of $(0,0,0)$, such that on $U_1\cap p^{-1}(W)$, the distance function has the following form (see \cite[Corollary A.2]{WYFeynman}):
		\begin{equation}\label{chart U1}
			\begin{aligned}
		\rho^2(\bfz,\bar\bfz,\bfw,\bar\bfw)
				&= |z_1-w_1|^2 ( 
				1+ H
				),
			\end{aligned}
		\end{equation}
		for some smooth functions $\{H\}$ with $H|_{(\bfz,\bfw)=(0,0)} = 0$.
		
		By shrinking $W$ if necessary, we may assume that
		\be\label{chart U1 eq2}
		F(\bfz,\bar\bfz,\bfw,\bar\bfw):=\frac{\rho^2(\bfz,\bar\bfz,\bfw,\bar\bfw)}{|z_1-w_1|^2} \ge \tfrac{1}{2}.
		\ee
		
		Then we notice that on $U_1\cap p^{-1}(W)$,
         \[
        \begin{cases}
        e^{-\frac{\rho^2}{2|\tau|^2}}=e^{-\frac{F}{2|\lambda|^2}},\\
        \tau\cdot\frac{\bar{z}_1-\bar{w}_1}{|\tau|^2}=\frac{\lambda}{|\l|^2},\\ \tau\cdot\frac{\bar{z}_i-\bar{w}_i}{|\tau|^2}=\frac{\lambda\bar\l_i}{|\l|^2},\quad i\neq 1,\\
        \tau^2\cdot d\Big(\frac{\bar{z}_1-\bar{w}_1}{|\tau|^2}\Big)=\frac{-\lambda^{2}|\lambda|^{2}d(z_{1}-w_{1})-(z_{1}-w_{1})\lambda^{2}d(|\lambda|^{2})}{|\l|^4},\\ \tau^{2}\cdot d\Big(\frac{\bar{z}_i-\bar{w}_i}{|\tau|^2}\Big)=\frac{-\lambda^{2}|\lambda|^{2}\bar\l_i d(z_{1}-w_{1})+(z_{1}-w_{1})\lambda^{2}|\lambda|^{2}d\bar\l_i-(z_{1}-w_{1})\lambda^{2}\bar\l_i d(|\lambda|^{2})}{|\l|^4},\quad i\neq 1.
        \end{cases}
        \]
		Therefore, let $N\in \mathbb{N}$ be a number that is larger than $2$ times the polynomial degree of $a$, by \Cref{prop:exp-minus-f-over-g-smooth}, 
        \[
        \tau^{N}e^{-\frac{\rho^2}{2|\tau|^2}}a\bigg(\frac{\bar{\bfz}-\bar{\bfw}}{|\tau|^2},d\Big(\frac{\bar{\bfz}-\bar{\bfw}}{|\tau|^2}\Big)\bigg)
        \]
		extends to a smooth differential form on $U_1\cap p^{-1}(W)$, which is flat along
        \[
        \{(z_1 - w_1, \lambda_2,\dots,\lambda_n,\lambda, \bfw)\in U_{1}\cap p^{-1}(W):\lambda=0\},
        \]
        which is the closure of \[
        \big((M\times M)\setminus\triangle\times\{0\}\big)\cap U_{1}\cap p^{-1}(W)\subset U_{1}\cap p^{-1}(W).
        \]
		
		This proves our claim on $U_1\cap p^{-1}(W)$. Similarly, we can prove our claim on $U_i\cap p^{-1}(W)$, where $i\neq n+1$.

		On the chart $U_{n+1}$, we may use the canonical affine coordinates
		\[
		(\lambda_1,\dots,\lambda_n,\tau,w_1,\dots,w_n).
		\]
        such that $\lambda=1$ and 
        \[
z_{i}-w_{i}=\lambda_{i}\tau,\quad 1\leq i\leq n.
        \]
		
		Then on $U_{n+1}\cap p^{-1}(W)$,
		\begin{equation}\label{chart Un}
			\rho^2
			= |\tau|^2 G,
		\end{equation}
		for some smooth functions $G$.
		
		Therefore, on $U_{n+1}\cap p^{-1}(W)$, we have
		\[
		\begin{cases}
        e^{-\frac{\rho^2}{2|\tau|^2}}=e^{-\frac{1}{2}G},\\
        \tau\cdot\frac{\bar{\bfz}-\bar{\bfw}}{|\tau|^2}
		=
		\bar{\boldsymbol{\lambda}},\\
        \tau^{2}\cdot d\Big(\frac{\bar{\bfz}-\bar{\bfw}}{|\tau|^2}\Big)
		=
		-\bar{\boldsymbol{\lambda}}d\tau+\tau d\bar{\boldsymbol{\lambda}}.
        \end{cases}
		\]
		Therefore,
		\[
		\tau^{N}e^{-\frac{\rho^2(\bfz,\bfw)}{2|\tau|^2}}
		a\bigg(
		\frac{\bar{\bfz}-\bar{\bfw}}{|\tau|^2},
		d\Big(\frac{\bar{\bfz}-\bar{\bfw}}{|\tau|^2}\Big)
		\bigg)
		\]
		extends to a smooth differential form on $U_{n+1}\cap p^{-1}(W)$. Finally, we notice that the closure of \[
        \big((M\times M)\setminus\triangle\times\{0\}\big)\cap W\subset  p^{-1}(W)
        \]
		has no intersection with $U_{n+1}$, our claim follows.

	\end{proof}

	\subsection{Review on wonderful compactification of a building set}\label{subsec: wonderful compactification}
    This subsection prepares for the proof of an analogous result for the
complexification of $\widetilde W(\vG,\Phi)$, corresponding to the result for
$P_t$ established in \Cref{propogator is meromorphic}. We review the
wonderful partial compactification associated with a building set on a connected
complex manifold. 
	Related concepts, definitions, and notation can be found in \cite{de1995wonderful,macpherson1998making,hu2003compactification,li2009wonderful}.
	
	\begin{defn}[Simple arrangements]
		A \textbf{simple arrangement} of closed connected complex submanifolds of a connected complex manifold $X$ is a finite set $\mathcal{S}=\left\{S_i\right\}$ of closed connected complex submanifolds $S_i$ satisfying the following conditions:
		\begin{enumerate}[(1)]
			\item $S_i$ and $S_j$ intersect cleanly (i.e., their intersection is a smooth submanifold and the tangent bundles satisfy $T\left(S_i \cap S_j\right)=T\left(S_i\right)\big|_{\left(S_i \cap S_j\right)} \cap T\left(S_j\right)\big|_{\left(S_i \cap S_j\right)})$;
			\item  $S_i \cap S_j$ is either equal to some $S_k$ or is empty.
		\end{enumerate}
		
	\end{defn}

	\begin{defn}[Building sets]
		Let $\mathcal{S}$ be a simple arrangement of closed connected complex submanifolds of $X$. A subset $\mathcal{G} \subseteq \mathcal{S}$ is called a \textbf{building set} of $\mathcal{S}$ if, for all $S \in \mathcal{S}$, the minimal elements in $\{G \in \mathcal{G}$ : $G \supseteq S\}$ intersect transversally and their intersection is $S$. In this case, these minimal elements are called \textbf{the $\mathcal{G}$-factors} of $S$.
		
		A finite set $\mathcal{G}$ of closed connected complex submanifolds of $X$ is called a \textbf{building set} if the set of all possible finite intersections of collections of submanifolds from $\mathcal{G}$ forms a simple arrangement $\mathcal{S}$ and if $\mathcal{G}$ is a building set of $\mathcal{S}$. In this situation, $\mathcal{S}$ is called \textbf{the arrangement induced by $\mathcal{G}$}.
	\end{defn}
	\begin{rem}
It is easy to see that any simple arrangement $\mathcal{S}$ of $X$ is
itself a building set.
\end{rem}
	
	\begin{defn}[Wonderful compactification]\label{Wonderful compactification}
		Let $\mathcal{G}$ be a nonempty building set and $X^{\circ}=X \backslash \bigcup_{G \in \mathcal{G}} G$. The closure of the image of the natural locally closed embedding
		
		$$
		X^{\circ} \hookrightarrow \prod_{G \in \mathcal{G}} \mathrm{Bl}_G X
		$$
		is called the wonderful (partial) compactification of the building set $\mathcal{G}$ and is denoted by $X_{\mathcal{G}}$. 
	\end{defn}
	
	\begin{defn}[$\mathcal{G}$-nest]
		A subset $\mathcal{T} \subseteq \mathcal{G}$ is called $\mathcal{G}$-nested (or a $\mathcal{G}$ nest) if it satisfies:\\
		There is a flag of elements in $\mathcal{S}$ : $S_1 \subseteq S_2 \subseteq \ldots \subseteq S_{\ell}$ such that
		
		$$
		\mathcal{T}=\bigcup_{i=1}^{\ell}\left\{A: A \text { is a } \mathcal{G} \text {-factor of } S_i\right\} .
		$$
		(We say $\mathcal{T}$ is induced by the flag $S_1 \subseteq S_2 \subseteq \cdots \subseteq S_{\ell}$.)
	\end{defn}

\begin{defn}\label{defn: dominant transform}
Let $Z\subset X$ be a closed connected complex submanifold. We denote by
$\mathrm{Bl}_{Z}X$ the blow-up of $X$ along $Z$, and by
\[
p:\mathrm{Bl}_{Z}X\longrightarrow X
\]
the canonical blow-up map. Let $G\subset X$ be a closed connected complex submanifold. The
\textbf{dominant transform} $\widetilde{G}$ of $G$ under the blow-up of
$X$ along $Z$ is defined as follows. 
\begin{enumerate}[(1)]
    \item If $G\subset Z$, then
\[
\widetilde{G}=p^{-1}(G).
\]
\item If $G\not\subset Z$, then $\widetilde{G}$ is the closure of
$p^{-1}(G\setminus Z)$ in $\mathrm{Bl}_{Z}X$.
\end{enumerate}
 For a sequence of blow-ups, we still denote the iterated dominant transform by $\widetilde{G}$.
\end{defn}

    The following theorem was proved in \cite{li2009wonderful} for algebraic
varieties. The same argument applies essentially unchanged to complex manifolds.
	\begin{thm}[Theorem 1.2 and Theorem 1.3 in \cite{li2009wonderful}]\label{properties of wonderful compactification}
		Let $X$ be a connected complex manifold and let $\mathcal{G}=\left\{G_1, \ldots, G_N\right\}$ be a nonempty building set of closed connected complex submanifolds of $X$. Then the wonderful compactification $X_{\mathcal{G}}$ is a complex manifold. Moreover, for each $G \in \mathcal{G}$ there is a smooth divisor $D_G \subset X_{\mathcal{G}}$ such that:
		\begin{enumerate}[(1)]
			\item the union of these divisors is $X_{\mathcal{G}} \backslash X^{\circ}$;
			\item  any set of these divisors meets transversally. An intersection of divisors $D_{T_1} \cap \cdots \cap D_{T_r}$ is nonempty exactly when $\left\{T_1, \ldots, T_r\right\}$ form a $\mathcal{G}$-nest.
		\end{enumerate}
		Furthermore, suppose that $\mathcal{G}=\{G_1,\ldots,G_N\}$ is ordered so that
$\{G_1,\ldots,G_i\}$ is a building set for every $1\leq i\leq N$. Then
\[
   X_{\mathcal{G}}
   =
   \Bl_{\widetilde{G}_{N}}\cdots
   \Bl_{\widetilde{G}_{2}}\Bl_{G_{1}}X,
\]
where each blow-up is along a closed smooth complex submanifold.
	\end{thm}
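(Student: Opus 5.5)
The statement to be proved is Theorem~\ref{properties of wonderful compactification}, transplanted from Li's algebraic setting to connected complex manifolds. Since Li's proof of Theorems 1.2 and 1.3 in \cite{li2009wonderful} is essentially local and inductive, I would not redo it from scratch. Instead I would check that each ingredient survives in the analytic category. The plan is an induction on $N$ using the ordering hypothesis that each $\{G_1,\ldots,G_i\}$ is a building set.

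The first step is the key local lemma in the analytic setting. Let $\mathcal G$ be a building set on $X$, and let $G\in\mathcal G$ be minimal, i.e.\ it contains no other member. Then the collection of dominant transforms $\widetilde{\mathcal G}=\{\widetilde{G'}:G'\in\mathcal G\}$ in $\Bl_G X$ is again a building set, with induced arrangement given by the dominant transforms of the elements of $\mathcal S$. Clean intersection, transversality of the $\mathcal G$-factors and the transforms of intersections are all pointwise conditions. They can be checked in local holomorphic coordinates in which $G$ and the finitely many members of $\mathcal S$ through a point are simultaneously linear subspaces. Such coordinates exist because clean intersections of complex submanifolds can be linearized locally; I would do this by induction on the flag of intersections using the holomorphic implicit function theorem. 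Once the members are linear, the claim reduces to Li's computation for linear subspace arrangements, which is purely algebraic and applies verbatim.

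The second step is to iterate and compare with the closure. I would blow up $G_1$, then $\widetilde G_2$, and so on. The ordering hypothesis guarantees that at stage $i$ the dominant transform $\widetilde G_i$ is a minimal element of the transformed building set. By the lemma, each center is then a closed complex submanifold and the result is a complex manifold. It remains to identify this iterated blow-up $Y$ with the closure $X_{\mathcal G}$ from Definition~\ref{Wonderful compactification}. Each blow-up factors through the previous blow-ups, so $Y$ maps naturally to $\prod_G \Bl_G X$. This map is an isomorphism onto the closure of $X^\circ$: it is proper, injective, and an immersion. Properness and density of $X^\circ$ hold automatically. Injectivity and immersivity are again local statements, checked in the linear model coordinates above, exactly as in Li.

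The third step gives the divisor statements. I would define $D_G$ as the transform of the exceptional divisor of the blow-up along $\widetilde G$. Property (1) is immediate, since the complement of $X^\circ$ is exactly the total exceptional locus. For property (2), transversality follows from the local linear model. The nest criterion follows from the same local combinatorics as in Li: divisors $D_{T_1},\dots,D_{T_r}$ meet iff the $T_j$ are all factors along a flag.

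The main obstacle is the first step, namely the simultaneous local linearization of a finite family of cleanly intersecting complex submanifolds closed under intersection. In algebraic geometry this is handled by étale or formal arguments. Here I would try first to build coordinates adapted to a maximal chain of intersections, shrinking neighborhoods as needed. If that fails, I would instead observe that Li's proof only uses the structure of normal bundles and transversality along the strata, which holds in the analytic category without full linearization.
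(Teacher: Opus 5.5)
Your overall plan is the same as the paper's: transport Li's inductive blow-up argument to the analytic category instead of reproving it. The paper simply cites \cite{li2009wonderful} and asserts that the argument carries over.

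The gap is the mechanism you give for the transfer. Your first step, and the ``linear model coordinates'' reused in your second and third steps, assume that all members of $\mathcal S$ through a point can be made simultaneously linear by a local biholomorphism. That is false. In the polydisc $|x|,|y|,|z|<1$ in $\mathbb{C}^3$ take
\[
S_1=\{x=0\},\quad S_2=\{y=0\},\quad S_3=\{x=y\},\quad S_4=\{x=(2+z)y\},\quad L=\{x=y=0\}.
\]
Any two of the $S_i$ meet transversally exactly along $L$. So $\{S_1,\dots,S_4,L\}$ is a simple arrangement, hence a building set, with minimal element $L$.

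At $p=(0,0,z)\in L$, the four lines $T_pS_i/T_pL\subset T_pX/T_pL$ are $dx=0$, $dy=0$, $dx=dy$ and $dx=(2+z)\,dy$. Their cross-ratio varies with $z$. If a local biholomorphism $\phi$ made all five members linear, $d\phi_p$ would send these four lines to four fixed lines, so the cross-ratio would be constant along $L$. Hence the coordinates you want do not exist, and no induction on flags of intersections can produce them.

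The same invariant also blocks formal and \'etale linearization. So your remark that algebraic geometry handles this by such arguments is inaccurate. Li's proof never linearizes a whole arrangement, and it could not. You flagged this step as the main obstacle, but your fallback sentence is only an intention, and every step of the proposal as written depends on the false linearization.

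What does transfer is more modest. Let $F\in\mathcal G$ be minimal and $G\in\mathcal G$ arbitrary. The building-set axiom forces $F\subseteq G$, $F\cap G=\emptyset$, or $F\pitchfork G$. The local facts used in the induction involve only the centre $F$ together with a clean pair, or $F$ together with a transversal family of $\mathcal G$-factors, all in this special position. You can verify them in two ways:
\begin{enumerate}[(1)]
\item In coordinates adapted to those few submanifolds. For a clean pair $V,W$: straighten $V$, then straighten $V\cap W$ inside $V$, then write $W$ as a graph over a complementary projection and subtract.
\item On tangent spaces along $E=\mathbb{P}(N_FX)$, using $\widetilde G\cap E=\mathbb{P}(N_FG)$ if $F\subsetneq G$, and $\widetilde G\cap E=E|_{F\cap G}$ if $F\pitchfork G$.
\end{enumerate}

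For the comparison with the closure in $\prod_G\mathrm{Bl}_GX$, do not check injectivity and immersivity in a global linear chart. Use the two-element case instead. In each of the three cases above, $\mathrm{Bl}_{\widetilde G}\mathrm{Bl}_FX$ is the closure of $X\setminus(F\cup G)$ in $\mathrm{Bl}_FX\times\mathrm{Bl}_GX$, and this can be checked in coordinates adapted to the pair. Since closed embeddings preserve closures, this gives $X_{\mathcal G}\cong(\mathrm{Bl}_FX)_{\widetilde{\mathcal G}}$, and the induction can proceed from there.
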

    \begin{prop}\label{linear order G}
        Let $X$ be a connected complex manifold and let $\mathcal{G}$ be a nonempty building set. If $\mathcal{G}=\left\{G_1, \ldots, G_N\right\}$ is indexed in an order compatible with inclusion relations, i.e., $i\leq j$ if $G_{i}\subseteq G_{j}$, then $\left\{G_1, \ldots, G_i\right\}$ is a building set for any $1\leq i\leq N$.
    \end{prop}
\begin{proof}
    Let \[
    S=G_{j_{1}}\cap\cdots G_{j_{k}},
    \]
    where $1\leq j_{l}\leq i$ for $1\leq l\leq i$.
    For any $G_{j_{l}}$, there exists a minimal element $G'_{j_{l}}\in \mathcal{G}$ containing $S$, and $G'_{j_{l}}\subset G_{j_{l}}$. By our assumption, $G'_{j_{l}}\in \left\{G_1, \ldots, G_i\right\}$. Therefore,
    \[
    S=G'_{j_{1}}\cap\cdots G'_{j_{k}}.
    \]
    Since $\mathcal{G}$ is a building set, $\{G'_{j_{1}},\dots G'_{j_{k}}\}$ includes all the $\mathcal{G}$-factors of $S$. Thus, $\left\{G_1, \ldots, G_i\right\}$ is a building set.
\end{proof}
    Given a building set $\mathcal{G}$, \Cref{properties of wonderful compactification}
gives a canonical proper map
\[
p_{\mathcal{G}}:X_{\mathcal{G}}
=\Bl_{\widetilde{G}_{N}}\cdots\Bl_{\widetilde{G}_{2}}\Bl_{G_{1}}X
\longrightarrow X.
\]
\begin{prop}\label{pull back divisor decomposition}
Let $X$ be a connected complex manifold, and let $\mathcal{G}$ be a
nonempty building set of complex submanifolds of $X$. Let
\[
p_{\mathcal{G}}\colon X_{\mathcal{G}}\longrightarrow X
\]
be the canonical proper map. For $G\in\mathcal{G}$, denote by
$\mathcal{I}_{G}\subseteq\mathcal{O}_{X}$ the ideal sheaf of $G$.
Then the total transform (pull back)
\[
p_{\mathcal{G}}^{-1}\mathcal{I}_{G}
\cdot\mathcal{O}_{X_{\mathcal{G}}}
=
\prod_{\substack{G'\in\mathcal{G}\\G'\subseteq G}}
\mathcal{I}_{\widetilde{G'}},
\]
where $\mathcal{I}_{\widetilde{G'}}$ denotes the ideal sheaf of the
boundary divisor $\widetilde{G'}\subseteq X_{\mathcal{G}}$.
In particular, the effective divisor
\[
\sum_{\substack{G'\in\mathcal{G}\\G'\subseteq G}}
\widetilde{G'}
\]
is a simple normal crossing divisor.
\end{prop}

 \begin{proof}
Choose an ordering
\[
\mathcal{G}=\{G_{1},\ldots,G_{N}\}
\]
compatible with inclusion.
By \Cref{properties of wonderful compactification}
and \Cref{linear order G}, the wonderful compactification is obtained
as the iterated blow-up
\[
X_{\mathcal{G}}
=
\Bl_{\widetilde{G}_{N}}
\cdots
\Bl_{\widetilde{G}_{2}}
\Bl_{G_{1}}X.
\]

By
\cite[Proposition~2.8]{li2009wonderful}, after each blow-up the
dominant transforms of the elements of $\mathcal{G}$ again form a
building set. Since the center of each blow-up is a minimal element
of the current building set,
\cite[Definition--Lemma~2.6(i)]{li2009wonderful} implies that the
current dominant transform of $G$ either contains the blow-up center
or meets it transversally.

Recall that, for the blow-up
\[
\pi\colon \Bl_{Z}Y\longrightarrow Y
\]
of a complex manifold along a smooth center $Z$, and for a smooth
submanifold $V\subseteq Y$, one has
\[
\pi^{-1}\mathcal{I}_{V}\cdot\mathcal{O}_{\Bl_{Z}Y}
=
\begin{cases}
\mathcal{I}_{\widetilde{V}}\mathcal{I}_{E},
    & Z\subsetneq V,\\[4pt]
\mathcal{I}_{E},
    & Z=V,\\[4pt]
\mathcal{I}_{\widetilde{V}},
    & Z \text{ intersects } V \text{ transversally},
\end{cases}
\]
where $E$ is the exceptional divisor and $\widetilde{V}$ is the
dominant transform of $V$.

Applying this formula successively along the above sequence of
blow-ups, an exceptional factor is introduced precisely for those
centers $G'\in\mathcal{G}$ satisfying $G'\subseteq G$. Hence
\[
p_{\mathcal{G}}^{-1}\mathcal{I}_{G}
\cdot\mathcal{O}_{X_{\mathcal{G}}}
=
\prod_{\substack{G'\in\mathcal{G}\\G'\subseteq G}}
\mathcal{I}_{\widetilde{G'}}.
\]

Finally, by \Cref{properties of wonderful compactification}, any collection
of boundary divisors of $X_{\mathcal{G}}$ meets transversely.
Therefore,
\[
\sum_{\substack{G'\in\mathcal{G}\\G'\subseteq G}}
\widetilde{G'}
\]
is a simple normal crossing divisor.
\end{proof}
    \subsection{Partial compactification of Schwinger spaces}\label{subsec: partial compactification}
	In this subsection, we construct a partial compactification of Schwinger
 spaces using wonderful compactifications. 

Given a directed graph $\vG$, we consider complex Schwinger parameter space $\mathbb{C}^{|\vG_{1}|}$.

For each subset $S \subset \vG_1$, consider the coordinate subspace  
	\begin{equation}\label{definition of Z}
	\mathcal{Z}_S := \{(\tau_1,\dots,\tau_{|\vG_1|}) \in \C^{|\vG_1|} : \tau_e = 0 \ \text{for all } e \in S\}.
	\end{equation}
    We notice that 
    \[
    \{\mathcal{Z}_S\subset \mathbb{C}^{|\vG_{1}|}:S\subset\vG_{1}\}
	\]
    is a simple arrangement.
\begin{defn}
    Given a directed graph $\vG$, \textbf{the partially compactified complex Schwinger parameter space} $\cpsch$ is the wonderful compactification of $\mathbb{C}^{|\vG_{1}|}$, where the building set is
    \[
    \{\mathcal{Z}_S\subset \mathbb{C}^{|\vG_{1}|}:S\subset\vG_{1}\}.
    \]
    There is a natural blow-up map $\cpsch \to \C^{|\vG_1|}$, which we denote by 
		\[
		\pi^{\vG} = (\tau_1,\ldots,\tau_{|\vG_1|}).
		\]
\end{defn}
We consider the following natural $(\mathbb{C}^{*})^{|\vG_{1}|}$ action on $\mathbb{C}^{|\vG_{1}|}$:
\[
(\lambda_{e})_{e\in\vG_{1}}\times(\tau_{e})_{e\in\vG_{1}}\in(\mathbb{C}^{*})^{|\vG_{1}|}\times\mathbb{C}^{|\vG_{1}|}\longrightarrow (\lambda_{e}\tau_{e})_{e\in\vG_{1}}\in\mathbb{C}^{|\vG_{1}|}.
\]
This action can be lifted to $\cpsch$.
\begin{lem}\label{cstar action}
    Given a directed graph $\vG$, there is a unique $(\mathbb{C}^{*})^{|\vG_{1}|}$ action on $\cpsch$, such that $\pi^{\vG}$ is a $(\mathbb{C}^{*})^{|\vG_{1}|}$ equivariant map. 
\end{lem}\label{Cstar action}
\begin{proof}
    We notice that $\mathcal{Z}_S$ is invariant under the action of $(\mathbb{C}^{*})^{|\vG_{1}|}$ for any $S\subset\vG_{1}$, so the $(\mathbb{C}^{*})^{|\vG_{1}|}$ action can be lifted to
    \[
    \prod_{S \subset \vG_1} \Bl_{\mathcal{Z}_S}(\C^{|\vG_1|})
    \]
    and the natural embeding
    \[
    (\C^*)^{|\vG_1|} \longrightarrow
    \prod_{S \subset \vG_1} \Bl_{\mathcal{Z}_S}(\C^{|\vG_1|})
    \]
    is equivariant. Thus, its closure $\cpsch$ has a unique $(\mathbb{C}^{*})^{|\vG_{1}|}$ action.
\end{proof}
\begin{rem}
    By \Cref{cstar action}, the partially compactified complex Schwinger parameter space $\cpsch$ is a smooth toric variety. Many constructions in this subsection are inspired by toric geometry.
\end{rem}
Let us consider the map
\begin{equation}\label{square map}
	\begin{array}{rcl}
		\prod_{e\in\vG_1} \sq_e:\C^{|\vG_1|} &\longrightarrow& [0,\infty)^{|\vG_1|}, \\[6pt]
		(\tau_1,\ldots,\tau_{|\vG_1|}) &\longmapsto& (|\tau_1|^2,\ldots,|\tau_{|\vG_1|}|^2).
	\end{array}
\end{equation}
    We notice that 
    \[
    (\prod_{e\in\vG_1} \sq_e)\circ \pi^{\vG}:\cpsch\longrightarrow[0,\infty)^{|\vG_1|}
    \]is invariant under the action of $(S^{1})^{|\vG_{1}|}\subset(\mathbb{C}^{*})^{|\vG_{1}|}$, so it descends to a map
    \[
    \pi_\R^{\vG}:\cpsch/(S^{1})^{|\vG_{1}|}\longrightarrow [0,\infty)^{|\vG_1|}.
    \]
\begin{defn}
    Given a directed graph $\vG$, \textbf{the partially compactified Schwinger parameter space} $\resch$ is defined by the quotient space $\cpsch/(S^{1})^{|\vG_{1}|}$.
    
    The natural proper map $\pi_\R^{\vG}$ is denoted by 
		\[
		\pi_\R^{\vG} = (t_1,\ldots,t_{|\vG_1|}).
		\]
\end{defn}
\begin{rem}
    The partially compactified Schwinger parameter space $\resch$ can also be constructed by the real version of wonderful compactification, see \cite{wang2025feynman,10.1155/S1073792803209077,Ammann2019ACO}. In this paper, the comparison between $\cpsch$ and $\resch$ is necessary, so we do not use this approach.
\end{rem}
Inspired by \eqref{square map}, we use the same notation $\prod_{e\in\vG_1} \sq_e$ to denote the quotient map from $\cpsch$ to $\resch$:
\[
\prod_{e\in\vG_1} \sq_e:\cpsch\longrightarrow\resch. 
\]
\begin{thm}\label{real wonderful compactification}
    Given a directed graph $\vG$, $\resch$ is a smooth manifold with corners. Moreover, it is a partial compactification of Schwinger parameter space in the sense of \Cref{compactification}. 
\end{thm}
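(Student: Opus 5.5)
The plan is to use the toric structure of $\cpsch$ (\Cref{cstar action}) and reduce the statement to an explicit description in local charts. First I will describe $\cpsch$ by affine charts. By \Cref{properties of wonderful compactification}, boundary strata correspond to $\mathcal{G}$-nests. For the coordinate-subspace arrangement $\{\mathcal{Z}_S\}$ with $\mathcal{Z}_{S}\cap\mathcal{Z}_{S'}=\mathcal{Z}_{S\cup S'}$, a $\mathcal{G}$-factor of $\mathcal{Z}_S$ is $\mathcal{Z}_S$ itself, since each $\mathcal{Z}_S$ belongs to the building set. A maximal nest is therefore a complete flag $S_1\subsetneq S_2\subsetneq\cdots\subsetneq S_k$. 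Refining to a full flag amounts to choosing an ordering $e_1,\dots,e_m$ of $\vG_1$ with $S_j=\{e_1,\dots,e_j\}$.

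Blowing up in the order of \Cref{linear order G}, I expect each maximal nest to give an affine chart $\C^m$ with coordinates $(u_1,\dots,u_m)$. The chart should be of the form
\[
\tau_{e_j}=u_ju_{j+1}\cdots u_m ,
\]
with the closed nested order reversed as appropriate: the smallest $\mathcal{Z}_{S}$ is blown up first, which is $\mathcal{Z}_{\vG_1}$, the origin. In this chart the torus acts diagonally through a monomial change of characters, given by a unimodular matrix. This is the usual fan picture: $\cpsch$ is the smooth toric variety whose fan is the braid-type subdivision of the positive orthant. I will verify that these charts cover $\cpsch$ and glue torically, using \Cref{pull back divisor decomposition}. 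The proposition shows that each $\tau_e$ pulls back to a monomial in the divisor coordinates times a unit, so in the chart the unit can be absorbed.

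Next I will take the quotient by $(S^1)^{|\vG_1|}$. Because the monomial transformation is unimodular, the compact torus acts on each chart $\C^m$ by the standard coordinatewise rotation after an isomorphism of tori. The quotient $\C^m/(S^1)^m$ is therefore $[0,\infty)^m$ via $r_j=|u_j|^2$. The charts glue along the smooth maps induced from monomial transition maps $r\mapsto r^{A}$, where $A$ is an integer matrix with integer inverse. These are diffeomorphisms on the open overlaps, where the relevant $r$'s are positive. I have to check that they are smooth as maps of manifolds with corners on the overlaps: on overlaps, the coordinates that are inverted are nonzero. This gives $\resch$ the structure of a smooth manifold with corners.

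Then I will check the conditions of \Cref{compactification}. The map $\pi_\R^{\vG}$ is given in a chart by $t_{e_j}=r_j\cdots r_m$, which is smooth. Properness follows from properness of $\pi^{\vG}$, since it is a composition of blow-ups, together with the fact that $\prod\sq_e$ is a proper quotient by a compact group. The open embedding $i$ comes from the fact that $\pi^{\vG}$ is an isomorphism over $(\C^*)^{|\vG_1|}$, which gives $(\C^*)^{m}/(S^1)^m=(0,\infty)^m$. Finally, the complement of $i((0,\infty)^m)$ is the image of the boundary divisors, namely $\{\prod r_j=0\}$ in each chart, and that is exactly $\partial\resch$.

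I expect the main obstacle to be the clean chart description. The problem is showing that the iterated dominant-transform blow-ups produce precisely these monomial charts, so that the $(S^1)^m$-action is free on the open torus and standard in coordinates. I would first try an induction on blow-ups, using the explicit blow-up charts of coordinate subspaces, which remain coordinate subspaces at every stage.
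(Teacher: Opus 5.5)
Your proposal is correct and follows essentially the same route as the paper: cover $\cpsch$ by torus-equivariant monomial charts; note that after a unimodular change of torus coordinates the compact torus acts by standard coordinatewise rotations, so each chart descends to a corner chart in the coordinates $|u_j|^2$; observe that the transition maps are monomial and hence smooth; and then read off the conditions of \Cref{compactification}. The only difference is that the paper uses the De Concini--Procesi charts $\kappa_{\mathfrak{s}}(C_{\mathfrak{s}})$ indexed by all marked nested sequences and cites \cite{de1995wonderful} for the fact that these are open embeddings covering $\cpsch$, whereas you use only the maximal charts (full orderings of $\vG_1$) and plan to verify them yourself via the toric fan or induction on blow-ups, which is exactly the step the paper takes from the literature.
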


\subsubsection{Proof of \Cref{real wonderful compactification}}
  Following \cite{de1995wonderful}, we introduce a collection of
$(\mathbb{C}^{*})^{|\vG_{1}|}$-equivariant coordinate charts on $\cpsch$.
\begin{defn}\label{defn: nested sequence}
Given a directed graph $\vG$, a \textbf{marked nested sequence of edges}
is a sequence
\[
\vG_1=S_0 \supseteq S_1 \supsetneq S_2 \supsetneq \cdots
\supsetneq S_m \supsetneq S_{m+1}=\varnothing,
\]
together with a choice of an element
\[
e_k\in S_k\setminus S_{k+1},\quad \text{for each }k=1,\dots,m.
\]
We call $m$ the length of this sequence. We use $\mathsf{MNS}(\vG)$ to denote the set of marked nested sequences of edges, and an element with length $m$ is denoted by $\mathfrak{s}=(S_{i},e_{i})_{i=1}^{m}\in \mathsf{MNS}(\vG)$.
\end{defn}

Given a marked nested sequence of edges
$\mathfrak{s}=(S_i,e_i)_{i=1}^{m}\in \mathsf{MNS}(\vG)$, we define a map
\[
\kappa_{\mathfrak{s}}\colon \mathbb{C}^{m}\times(\mathbb{C}^{*})^{|\vG_{1}\setminus\{e_{1},\dots,e_{m}\}|}
\longrightarrow \mathbb{C}^{|\vG_{1}|}
\]
as follows:
\begin{equation}\label{coordinates of complex schwinger space}
\kappa_{\mathfrak{s}}^{*}\tau_e=
\begin{cases}
\gamma_{e},& e\in \Gamma_{1}\setminus S_{1}\\
\gamma_e\prod_{k=1}^{i}\gamma_{e_k},
& e\in S_i\setminus S_{i+1}\text{ and } e\neq e_i, \text{ for }1\leq i\leq m,\\
\prod_{k=1}^{i}\gamma_{e_k},
& e=e_i,\text{ for }1\leq i\leq m.
\end{cases}
\end{equation}
Here, $\{\gamma_e\}_{e\in\vG_1}$ are the coordinates on the domain, and
$\{\tau_e\}_{e\in\vG_1}$ are the coordinates on the codomain. To simplify notations, we use \[
C_{\mathfrak{s}}:=\mathbb{C}^{m}\times(\mathbb{C}^{*})^{|\vG_{1}\setminus\{e_{1},\dots,e_{m}\}|}
\]
to denote the domain of $\kappa_{\mathfrak{s}}$.

We consider the following $(\mathbb{C}^{*})^{|\vG_{1}|}$ action on $C_{\mathfrak{s}}$:
for $(\lambda_{e})_{e\in\vG_{1}}\in (\mathbb{C}^{*})^{|\vG_{1}|}$,
\[
(\lambda_{e})_{e\in\vG_{1}}\cdot(\gamma_{e})_{e\in\vG_{1}}= (\gamma'_{e})_{e\in\vG_{1}}, 
\]
where
\begin{equation}\label{local cstar action}
\gamma_{e}'=
\begin{cases}
\lambda_{e}\gamma_{e},& e\in \Gamma_{1}\setminus S_{1}\\
\frac{\lambda_{e}}{\lambda_{e_{i}}}\gamma_{e},
& e\in S_i\setminus S_{i+1}\text{ and } e\neq e_i, \text{ for }1\leq i\leq m,\\
\frac{\lambda_{e_{i}}}{\lambda_{e_{i-1}}}\gamma_{e},
& e=e_i,\text{ for }1\leq i\leq m, \text{ where }\lambda_{e_0}=1.
\end{cases}
\end{equation}
This action makes $\kappa_{\mathfrak{s}}$ an equivarant map.
\begin{prop}\label{prop: kappa s}
    Given a marked nested sequence of edges
$\mathfrak{s}=(S_i,e_i)_{i=1}^{m}\in \mathsf{MNS}(\vG)$, the map $\kappa_{\mathfrak{s}}\colon C_{\mathfrak{s}}
\longrightarrow \mathbb{C}^{|\vG_{1}|}$ lifts to an equivarant open embedding into $\cpsch$. We still use $\kappa_{\mathfrak{s}}$ to denote it. Moreover, 
\[
\cpsch =\bigcup_{\mathfrak{s}\in\mathsf{MNS}(\vG)}\kappa_{\mathfrak{s}}(C_{\mathfrak{s}}).
\]
\end{prop}
\begin{proof}
This is proved in \cite[Section 3.1]{de1995wonderful}.
\end{proof}
\begin{prop}\label{chart on real}
    Given a marked nested sequence of edges
$\mathfrak{s}=(S_i,e_i)_{i=1}^{m}\in \mathsf{MNS}(\vG)$, the quotient space $C_{\mathfrak{s}}/(S^{1})^{|\vG_{1}|}$ is homeomorphic to \[
C_{\mathfrak{s}}^{\mathbb{R}}:=[0,\infty)^{m}\times(0,\infty)^{|\vG_{1}\setminus\{e_{1},\dots,e_{m}\}|}.
\]
Moreover, the quotient map can be represented by
\[
	\begin{array}{rcl}
		(\prod_{e\in\vG_1} \sq_e)\circ\kappa_{\mathfrak{s}}: C_{\mathfrak{s}} &\longrightarrow& C_{\mathfrak{s}}^{\mathbb{R}}, \\[6pt](\tau_1,\ldots,\tau_{|\vG_1|}) &\longmapsto& (|\tau_1|^2,\ldots,|\tau_{|\vG_1|}|^2).
	\end{array}
\]

\end{prop}
\begin{proof}
    By \eqref{local cstar action}, $(\prod_{e\in\vG_1} \sq_e)\circ\kappa_{\mathfrak{s}}$ is invariant under the action of $(S^{1})^{|\vG_{1}|}\subset(\mathbb{C}^{*})^{|\vG_{1}|}$. We observe that the group homomorphism
    \[
    (\lambda_{e})_{e\in\vG_1}\in(\C^*)^{|\vG_1|}\longrightarrow(\lambda'_{e})_{e\in\vG_1}\in(\C^*)^{|\vG_1|},
    \]
    where
    \[
    \lambda_{e}'=
\begin{cases}
\lambda_{e},& e\in \Gamma_{1}\setminus S_{1}\\
\frac{\lambda_{e}}{\lambda_{e_{i}}},
& e\in S_i\setminus S_{i+1}\text{ and } e\neq e_i, \text{ for }1\leq i\leq m,\\
\frac{\lambda_{e_{i}}}{\lambda_{e_{i-1}}},
& e=e_i,\text{ for }1\leq i\leq m, \text{ where }\lambda_{e_0}=1,
\end{cases}
    \]
    is a group isomorphism. This isomorphism conjugates the action of $(\mathbb{C}^{*})^{|\vG_{1}|}$ given by \eqref{local cstar action} to the standard action
    \[
    (\lambda_{e})_{e\in\vG_{1}}\cdot(\gamma_{e})_{e\in\vG_{1}}= (\lambda_{e}\gamma_{e})_{e\in\vG_{1}}.
    \]
    Then our claim follows from basic topology.
\end{proof}
\Cref{chart on real} shows that $\{C_{\mathfrak{s}}^{\mathbb{R}}\}_{\mathfrak{s}\in\mathsf{MNS}(\vG)}$ is an open covering of $\resch$. One can check that the transition maps are smooth. This shows that $\resch$ is a smooth manifolds with corners.

Let $i$ be the following map
\[
i:
(0,\infty)^{|\vG_1|}\cong(\C^*)^{|\vG_1|}/(S^{1})^{|\vG_{1}|}\longrightarrow \cpsch/(S^{1})^{|\vG_{1}|}\cong\resch,
\]
then $\resch$, $i$, $\pi_\R^{\vG}$ defines a partial compactification defined as \Cref{compactification}.
\subsubsection{Differential forms on compactified Schwinger parameter spaces}\label{subsubsec: criterion}

Let $\vG$ be a directed graph. Given a smooth differential form $\alpha$
on $(0,\infty)^{|\vec{\Gamma}_1|}$, one may ask when $\alpha$ extends to
a smooth differential form on the partially compactified Schwinger parameter
space $\resch$. We will prove a criterion in terms of $\cpsch$. More specifically,
we will show that if there exists $N\in\mathbb{N}$ such that
\[
\Big(\prod_{e\in\vG_1}\tau_e^N\Big)
\Big(\prod_{e\in\vG_1}\sq_e\Big)^*\alpha
\]
extends to a smooth differential form on $\cpsch$, then $\alpha$ extends
to a smooth differential form on $\resch$. We will also give a criterion for
a differential form $\beta$ on $\cpsch$ to descend to a differential form
on $\resch$.

\begin{lem}\label{descent smooth function}
Let
\[
   g\in C^{\infty}\bigl((0,\infty)^{m+n}\bigr)
\]
and \[
\prod_{i=1}^{m+n}\sq:(\gamma_{i})_{i=1}^{m+n}\in (\C^{*})^{m+n}\longrightarrow(|\gamma_{i}|^{2})_{i=1}^{m+n}\in (0,\infty)^{m+n}.
\]
Suppose there exists $N\in\mathbb{N}$, such that
\[
    \Big(\prod_{i=1}^{m+n}\gamma_{i}^{N}\Big)\Big(\prod_{i=1}^{m+n}\sq\Big)^{*}g
\]
can be extended to a smooth function $f$ on $\mathbb{C}^{m}\times(\C^{*})^n$, then $g$ can be extended to a smooth function on $[0,\infty)^{m}\times(0,\infty)^n$.
\end{lem}
\begin{proof}
This is a special case of Schwarz-Poénaru theorem, where the compact Lie group is $G=(S^1)^{m+n}$, see \cite[Theorem 4.3 and 5.3 in Chapter XII]{GolubitskyStewartSchaeffer1988} for details. We present a proof for $m=1$, $n=0$. The general case can be proved similarly.

In the case of $m=1$, $n=0$, we have
\begin{equation}\label{S1 action simple formula}
f(e^{i\theta}\gamma)=e^{iN\theta}f(\gamma).
\end{equation}
We observe that the $l$-th derivative of $g$ is 
\[
g^{(l)}(|\gamma|^2)=\frac{1}{\gamma^{l+N}}\frac{\partial^{l}}{\partial \bar\gamma^{l}}f|_{\C^{*}}(\gamma).
\]
By \eqref{S1 action simple formula}, we have
\begin{equation}\label{s1 action formula}
\gamma\frac{\partial}{\partial \gamma} f-\bar{\gamma}\frac{\partial}{\partial \bar{\gamma}} f=N f.
\end{equation}
Therefore, each nonzero monomial in the Taylor expansion of $f$ at $0$ has the following form:
\[
\frac{1}{(l+N)!l!}\frac{\partial^{l+N}}{\partial \gamma^{l+N}}\frac{\partial^{l}}{\partial \bar\gamma^{l}}f(0)\gamma^{l+N}\bar{\gamma}^{l},
\]
where $l$ is non-negative. This shows that 
\[
\lim_{\gamma\to0}g^{(l)}(|\gamma|^2)=\lim_{\gamma\to0}\frac{1}{\gamma^{l+N}}\frac{\partial^{l}}{\partial \bar\gamma^{l}}f(\gamma)=\frac{1}{(l+N)!}\frac{\partial^{l+N}}{\partial \gamma^{l+N}}\frac{\partial^{l}}{\partial \bar\gamma^{l}}f(0).
\]
Thus
\[
    \lim_{t\to0^+} g^{(l)}(t)
\]
exists for every $l\geq 0$. Hence $g$ extends to a smooth function on
$[0,\infty)$.
\end{proof}
\begin{prop}
Let
\[
   \alpha\in \Omega^{*}\bigl((0,\infty)^{m+n}\bigr)
\]
and \[
\prod_{i=1}^{m+n}\sq:(\gamma_{i})_{i=1}^{m+n}\in (\C^{*})^{m+n}\longrightarrow(|\gamma_{i}|^{2})_{i=1}^{m+n}\in (0,\infty)^{m+n}.
\]
Suppose there exists $N\in\mathbb{N}$, such that
\[
    \Big(\prod_{i=1}^{m+n}\gamma_{i}^{N}\Big)\Big(\prod_{i=1}^{m+n}\sq\Big)^{*}\alpha
\]
can be extended to a smooth differential form on $\mathbb{C}^{m}\times(\C^{*})^n$, then $\alpha$ can be extended to a smooth differential form on $[0,\infty)^{m}\times(0,\infty)^n$.
\end{prop}
\begin{proof}
We first assume that $\alpha$ has the form
\begin{equation}\label{a simple term in differential form}
\alpha=g\,dt_{i_{1}}\wedge dt_{i_{2}}\wedge\cdots\wedge dt_{i_{k}},
\end{equation}
where $g\in C^{\infty}\bigl((0,\infty)^{m+n}\bigr)$, $i_{1}<i_{2}<\cdots<i_{k}$, and $0\leq k\leq m+n$. It suffices to prove that $g$ can be extended to a smooth function on $[0,\infty)^m\times(0,\infty)^n.$

Since
\[
   \Bigl(\prod_{i=1}^{m+n}\sq\Bigr)^*dt_j
   =
   \bar\gamma_j\,d\gamma_j+\gamma_j\,d\bar\gamma_j,
\]
the coefficient of $d\bar{\gamma}_{i_{1}}\wedge\cdots\wedge d\bar{\gamma}_{i_{k}}$ in
\[
   \Bigl(\prod_{i=1}^{m+n}\gamma_i^N\Bigr)
   \Bigl(\prod_{i=1}^{m+n}\sq\Bigr)^*\alpha
\]
is
\[
   \Bigl(\prod_{i=1}^{m+n}\gamma_i^N\Bigr)
   \Bigl(\prod_{j=1}^{k}\gamma_{i_{j}}\Bigr)
   \Bigl(\prod_{i=1}^{m+n}\sq\Bigr)^*g.
\]
Hence this function extends smoothly to $\C^m\times(\C^*)^n$. Multiplying by
\[
   \prod_{j\notin\{i_1,\dots,i_k\}}\gamma_j,
\] we find that $\Bigl(\prod_{i=1}^{m+n}\gamma_i^{N+1}\Bigr)\Bigl(\prod_{i=1}^{m+n}\sq\Bigr)^*g$
extends smoothly to $\C^m\times(\C^*)^n$. By \Cref{descent smooth function}, $g$
extends smoothly to $[0,\infty)^m\times(0,\infty)^n$. Since any differential form is a finite sum of such terms, our claim follows.
\end{proof}

By \Cref{chart on real}, we can get the following corollary.
\begin{cor}\label{descent of differential forms}
    Let $\vG$ be a directed graph, and $\alpha\in \Omega^{*}\bigl((0,\infty)^{|\vG_{1}|}\bigr)$.
Suppose there exists $N\in\mathbb{N}$, such that
\[
    \Big(\prod_{e\in\vG_1}\tau_e^N\Big)
\Big(\prod_{e\in\vG_1}\sq_e\Big)^*\alpha
\]
can be extended to a smooth differential form on $\cpsch$, then $\alpha$ can be extended to a smooth differential form on $\resch$.
\end{cor}

	\subsection{Smooth extension of Feynman graph integrands}\label{subsec: smooth extension}

In this subsection, we prove \Cref{thm:schwinger smooth extension0}. 
The strategy can be described as follows:
\begin{enumerate}[(1)]
    \item We construct a suitable wonderful compactification by iterated 
    blow-ups of
    \[
        M^{|\vG_0|}\times\C^{|\vG_1|},
    \]
    denoted by
    \[
        \bigl(M^{|\vG_0|}\times\C^{|\vG_1|}\bigr)_{\vG}.
    \]

    \item We construct a holomorphic map
    \[
        p^{\vG}\colon
        \bigl(M^{|\vG_0|}\times\C^{|\vG_1|}\bigr)_{\vG}
        \longrightarrow \cpsch,
    \]
    which lifts the canonical projection
    \[
        M^{|\vG_0|}\times\C^{|\vG_1|}
        \longrightarrow
        \C^{|\vG_1|}.
    \]

    \item We prove that there exists a sufficiently large integer $N>0$
    such that
    \[
        \left(\prod_e \tau_e^N\right)\widetilde W(\vG,\Phi)
    \]
    extends to a smooth differential form on
    \[
        \bigl(M^{|\vG_0|}\times\C^{|\vG_1|}\bigr)_{\vG}.
    \]
    Moreover, this extension is flat along the critical locus of
    $p^{\vG}$.

    \item By \Cref{pushforward is smooth2}, the pushforward of
    \[
        \left(\prod_e \tau_e^N\right)\widetilde W(\vG,\Phi)
    \]
    is a smooth differential form. Then
    \Cref{thm:schwinger smooth extension0} follows from
    \Cref{descent of differential forms}.
\end{enumerate}

Assume that $\vec{\Gamma}$ is a decorated directed graph. For each subset
$S\subseteq \vG_1$, define
\[
    \triangle_S
    :=
    \left\{
        (p_1,\ldots,p_{|\vG_0|})\in M^{|\vG_0|}
        :
        p_{t(e)}=p_{h(e)}
        \text{ for all } e\in S
    \right\}.
\]
For subsets $S\subseteq S'\subseteq \vG_1$, define
\[
    G_{S\subseteq S'}
    :=
    \triangle_S\times\mathcal{Z}_{S'}\subset M^{|\vG_0|}\times\C^{|\vG_1|},
\]
where $\mathcal{Z}_{S'}$ is defined in
\eqref{definition of Z}.

Let
\[
    \G^{\vG}
    :=
    \left\{
        G_{S\subseteq S'}
        :
        S\subseteq S'\subseteq\vG_1
    \right\}.
\]
For any
\[
    G_{S_1\subseteq S_1'},
    G_{S_2\subseteq S_2'}
    \in\G^{\vG},
\]
we have
\begin{align*}
    G_{S_1\subseteq S_1'}
    \cap
    G_{S_2\subseteq S_2'}
    &=
    \bigl(\triangle_{S_1}\cap\triangle_{S_2}\bigr)
    \times
    \bigl(\mathcal{Z}_{S_1'}\cap\mathcal{Z}_{S_2'}\bigr) \\
    &=
    \triangle_{S_1\cup S_2}
    \times
    \mathcal{Z}_{S_1'\cup S_2'} \\
    &=
    G_{(S_1\cup S_2)\subseteq(S_1'\cup S_2')}.
\end{align*}
Therefore, $\G^{\vG}$ is a simple arrangement and hence a building set.

Let $(M^{|\vG_0|}\times\C^{|\vG_1|})_\vG$ be the wonderful compactification of $M^{|\vG_0|}\times\C^{|\vG_1|}$ with respect to $\G^{\vG}$. 

\begin{lem}\label{smooth extension of integrand}

    For any $e\in \vG$, there is a canonical holomorphic map
    \[
    p_{e}:(M^{|\vG_0|}\times\C^{|\vG_1|})_\vG\longrightarrow \mathrm{Bl}_{\triangle\times\{0\}}(M\times M\times \mathbb{C}).
    \]
\end{lem}
\begin{proof}
    By \Cref{Wonderful compactification}, for any $e\in\vG_{1}$, we have a natrual holomorphic map
    \[
    (M^{|\vG_0|}\times\C^{|\vG_1|})_\vG\longrightarrow \mathrm{Bl}_{\triangle_{\{e\}}\times \mathcal{Z}_{\{e\}}}(M^{|\vG_0|}\times\C^{|\vG_1|}).
    \]
    If $e$ is not a self-loop, we have\[
    \mathrm{Bl}_{\triangle_{\{e\}}\times \mathcal{Z}_{\{e\}}}(M^{|\vG_0|}\times\C^{|\vG_1|})\cong M^{|\vG_0|-2}\times\C^{|\vG_1|-1}\times \mathrm{Bl}_{\triangle\times\{0\}}(M\times M\times \mathbb{C}),
    \]
    so we have a holomorphic map \[
    p_{e}:(M^{|\vG_0|}\times\C^{|\vG_1|})_\vG\longrightarrow \mathrm{Bl}_{\triangle\times\{0\}}(M\times M\times \mathbb{C})
    \]
    by projection to the factor $\mathrm{Bl}_{\triangle\times\{0\}}(M\times M\times \mathbb{C})$. If $e$ is a self-loop, we have
    \[
    \mathrm{Bl}_{\triangle_{\{e\}}\times \mathcal{Z}_{\{e\}}}(M^{|\vG_0|}\times\C^{|\vG_1|})\cong (M^{|\vG_0|-1}\times\C^{|\vG_1|-1})\times(M\times\mathbb{C}),
    \]
    where the second factor corresponds to $h(e)=t(e)\in\vG_{0}$ and $e\in\vG_{1}$. We notice that\[
    M\times \mathbb{C}\cong \widetilde{\triangle\times\mathbb{C}}\subset\mathrm{Bl}_{\triangle\times\{0\}}(M\times M\times \mathbb{C}),
    \]
    where $\widetilde{\triangle\times\mathbb{C}}$ is the dominant transform of $\triangle\times\mathbb{C}$. Therefore, we have a natural holomorphic map\[
    p_{e}:(M^{|\vG_0|}\times\C^{|\vG_1|})_\vG\longrightarrow \mathrm{Bl}_{\triangle\times\{0\}}(M\times M\times \mathbb{C}).
    \]
\end{proof}

Now, we can extend the integrand $\widetilde W(\vG,\Phi)$ defined by \eqref{Feynman graph integrand} to $(M^{|\vG_0|}\times\C^{|\vG_1|})_\vG$ in the following sense:

\begin{prop}\label{smooth extension of integrand1}
    Let $\vec{\Gamma}$ be a decorated directed graph and let
		$\Phi\in\Omega^\bullet(M^{|\vG_0|})$. Then there exists $N\in\mathbb{N}$, such that 
        \[
        \Big(\prod_{e\in\vG_1}\tau_e^N\Big)
\Big(\prod_{e\in\vG_1}\sq_e\Big)^*\widetilde W(\vG,\Phi)
        \]
        can be extended to a smooth differential form on $(M^{|\vG_0|}\times\C^{|\vG_1|})_\vG$. We still use the same notation to denote the smooth extension.
\end{prop}
\begin{proof}
    By \Cref{smooth extension of integrand} and \Cref{propogator is meromorphic}, there exists $N\in \mathbb{N}$, such that $p_{e}^{*}\tau^N\sq^* P_t$ can be extended to a smooth differential form on $\mathrm{Bl}_{\triangle\times\{0\}}(M\times M\times \mathbb{C})$. Similar property holds for any (higher) holomorphic derivatives of $P_{t}$. Our claim follows from the description of 
\[
\Big(\prod_{e\in\vG_1}\sq_e\Big)^*\widetilde W(\vG,\Phi)
\]
in local coordinates.
\end{proof}
The following proposition gives another characterization of $(M^{|\vG_0|}\times\C^{|\vG_1|})_\vG$.

\begin{prop}\label{another characterization}
    The space
    \[
    \bigl(M^{|\vG_0|}\times \C^{|\vG_1|}\bigr)_{\vG}
    \]
    is biholomorphic to an iterated blow-up of
    $M^{|\vG_0|}\times\cpsch$ along smooth centers.

    More precisely, for $S\subseteq S'\subseteq \vG_1$, define
    \[
    G'_{S\subseteq S'}
    :=
    \triangle_S\times D_{S'}
    \subset
    M^{|\vG_0|}\times\cpsch,
    \]
    where $D_{S'}\subset\cpsch$ is the smooth divisor corresponding
    to $\mathcal{Z}_{S'}$ under the construction in
    \Cref{properties of wonderful compactification}. Order the collection
    \[
    \left\{
        G'_{S\subseteq S'}
        :
        S\subseteq S'\subseteq \vG_1,\ 
        \triangle_S\neq M^{|\vG_0|}
    \right\}
    =
    \{G'_1,\dots,G'_{N'}\}
    \]
    compatibly with inclusion. Then
    \[
    \bigl(M^{|\vG_0|}\times \C^{|\vG_1|}\bigr)_{\vG}
    \cong
    \Bl_{\widetilde{G}'_{N'}}\cdots\Bl_{G'_1}
    \bigl(M^{|\vG_0|}\times\cpsch\bigr).
    \]
\end{prop}

\begin{proof}
    Let
    \[
    \mathcal{G}_1
    :=
    \left\{
        M^{|\vG_0|}\times\mathcal{Z}_{S'}
        :
        S'\subseteq \vG_1
    \right\}
    =
    \{G_1,\dots,G_N\},
    \]
    where the elements are ordered compatibly with inclusion. Since
    $\mathcal{G}_1$ is a simple arrangement, it is, in particular, a
    building set. Its wonderful compactification is
    \[
    \Bl_{\widetilde{G}_N}\cdots\Bl_{G_1}
    \bigl(M^{|\vG_0|}\times\C^{|\vG_1|}\bigr)
    \cong
    M^{|\vG_0|}\times\cpsch.
    \]

    For every $S\subseteq S'\subseteq\vG_1$, the dominant transform of
    \[
    G_{S\subseteq S'}
    =
    \triangle_S\times\mathcal{Z}_{S'}
    \]
    under these first $N$ blow-ups is
    \[
    G'_{S\subseteq S'}
    =
    \triangle_S\times D_{S'}.
    \]
    Consequently, the dominant transforms of the elements of
    $\mathcal{G}_{\vG}\setminus\mathcal{G}_1$ are precisely
    $G'_1,\dots,G'_{N'}$.

    Write
    \[
    \mathcal{G}_{\vG}\setminus\mathcal{G}_1
    =
    \{G_{N+1},\dots,G_{N+N'}\},
    \]
    where $G_{N+i}$ is chosen so that its dominant transform after the
    first $N$ blow-ups is $G'_i$. We choose this ordering compatibly
    with inclusion.

    We claim that, for every $1\leq j\leq N+N'$, the collection
    \[
    \mathcal{G}^{(j)}
    :=
    \{G_1,\dots,G_j\}
    \]
    is a building set. For $1\leq j\leq N$, this follows from
    \Cref{linear order G}. Now suppose that $N<j\leq N+N'$, and consider
    a nonempty intersection
    \[
    Y
    =
    G_{k_1}\cap\cdots\cap G_{k_p},
    \qquad
    1\leq k_1,\dots,k_p\leq j.
    \]

    If $G_{k_l}\in\mathcal{G}_1$ for every $l$, then, since
    $\mathcal{G}_1$ is a simple arrangement,
    \[
    Y\in\mathcal{G}_1\subseteq\mathcal{G}^{(j)}.
    \]
    Otherwise, $G_{k_l}\notin\mathcal{G}_1$ for at least one $l$. In
    that case,
    \[
    Y\subseteq G_{k_l}.
    \]
    Moreover, $Y\notin\mathcal{G}_1$: indeed, every element of
    $\mathcal{G}_1$ has first factor $M^{|\vG_0|}$, whereas
    $G_{k_l}\notin\mathcal{G}_1$ has a proper diagonal as its first
    factor. Hence
    \[
    Y\in\mathcal{G}_{\vG}\setminus\mathcal{G}_1.
    \]
    Since the ordering on
    $\mathcal{G}_{\vG}\setminus\mathcal{G}_1$ is compatible with
    inclusion and $Y\subseteq G_{k_l}$, the element $Y$ occurs no later
    than $G_{k_l}$. Therefore,
    \[
    Y\in\mathcal{G}^{(j)}.
    \]

    Thus $\mathcal{G}^{(j)}$ is closed under nonempty intersections and
    is therefore a simple arrangement. In particular,
    $\mathcal{G}^{(j)}$ is a building set, proving the claim.

    We may therefore apply
    \Cref{properties of wonderful compactification} successively in the
    order
    \[
    G_1,\dots,G_N,G_{N+1},\dots,G_{N+N'}.
    \]
    It follows that
    \[
    \begin{aligned}
        \bigl(M^{|\vG_0|}\times\C^{|\vG_1|}\bigr)_{\vG}
        &\cong
        \Bl_{\widetilde{G}_{N+N'}}
        \cdots
        \Bl_{\widetilde{G}_{N+1}}
        \left(
            \Bl_{\widetilde{G}_N}
            \cdots
            \Bl_{G_1}
            \bigl(M^{|\vG_0|}\times\C^{|\vG_1|}\bigr)
        \right)
        \\
        &\cong
        \Bl_{\widetilde{G}_{N+N'}}
        \cdots
        \Bl_{\widetilde{G}_{N+1}}
        \bigl(M^{|\vG_0|}\times\cpsch\bigr)
        \\
        &\cong
        \Bl_{\widetilde{G}'_{N'}}
        \cdots
        \Bl_{G'_1}
        \bigl(M^{|\vG_0|}\times\cpsch\bigr),
    \end{aligned}
    \]
    where the last biholomorphism follows from the identification of
    $\widetilde{G}_{N+i}$ with $G'_i$ for
    $1\leq i\leq N'$.
\end{proof}

\begin{cor}\label{construction of morphism to schwinger}
    There is a unique holomorphic map
    \[
    p^{\vG}:
    \bigl(M^{|\vG_0|}\times \C^{|\vG_1|}\bigr)_{\vG}
    \longrightarrow
    \cpsch
    \]
    lifting the projection
    \[
    M^{|\vG_0|}\times \C^{|\vG_1|}
    \longrightarrow
    \C^{|\vG_1|}.
    \]
\end{cor}

\begin{proof}
The map $p^{\vG}$ is the composition
\[
\bigl(M^{|\vG_0|}\times \C^{|\vG_1|}\bigr)_{\vG}
\longrightarrow
M^{|\vG_0|}\times\cpsch
\longrightarrow
\cpsch,
\]
where the first map is given by \Cref{another characterization} and the second map is the projection onto the second factor. Its uniqueness is immediate.
\end{proof}

To study the critical points of $p^{\vG}$, we will show that $p^{\vG}$ is a semistable morphism, which is an important notion in algebraic geometry. We refer to \cite{AbramovichKaru2000,AdiprasitoLiuTemkin2019,EnokizonoHashizume2025} for further background on this notion.

\subsubsection{Semistable morphisms}

We first observe that this map is semistable in the following sense.

\begin{defn}
    Let $X$ and $Y$ be complex manifolds with simple normal crossing divisors
    $D$ and $D'$, respectively. Let $m=\dim_\C(X)$ and $n=\dim_\C(Y).$ A holomorphic map $f:X\to Y$ is called a
    semistable morphism if, locally on $X$ and $Y$, it can be written as
    \[
    f(z_{1},\dots,z_{m})=(w_{1},w_{2},\dots,w_{n}), \quad
    w_{i}=\prod_{j=l_{i-1}+1}^{l_{i}}z_{j},
    \]
    for some integers
    \[
    0=l_{0}<l_{1}<\cdots<l_{n}\leq m,
    \]
    where $z_{1},\dots,z_{m}$ and $w_{1},\dots,w_{n}$ are local coordinates
    on $X$ and $Y$, respectively, such that
    \[
    D'=\{w_{1}w_{2}\cdots w_{k}=0\}
    \]
    and
    \[
    D=\{z_{1}z_{2}\cdots z_{l}=0\}
    \]
    for some $k$ and some $l\geq l_{k}$. Moreover,
    \[
    l_{i+1}=l_i+1,\quad k\leq i\leq n-1.
    \]
\end{defn}
The critical points of semistable morphisms are described as follows.
\begin{prop}\label{critical points of semistable}
    Let $X$ and $Y$ be complex manifolds with simple normal crossing divisors
    $D$ and $D'$, respectively. Let $f:X\to Y$ be a semistable morphism. Assume the following conditions are satisfied:
    \begin{enumerate}[(1)]
        \item $D$ and $D'$ have finitely many irreducible components.
        \item $f^{*}(D')=D$.
    \end{enumerate}
    Let $\{D'_{i}\}_{i=1}^{p}$ be the set of irreducible components of $D$, and 
    \[f^{*}(D'_{i})=\sum_{j=l_{i-1}+1}^{l_{i}}D_{j}, \quad \text{for }0=l_{0}<l_{1}<\cdots<l_{p},\]
    where $D_{j}$ is a irreducible component of $D$. Then the critical points of $f$ is given by
    \[
    \bigcup_{k=1}^{p}\Big(\bigcup_{l_{k-1}+1\leq i<j\leq l_{k}}(D_{i}\cap D_{j})\Big).
    \]
\end{prop}
\begin{proof}
    We only need to verify this locally. Let $p\in M$, by the semistable assumption, there exists coordinate charts $U\ni p$ and $V\ni f(p)$, such that $f$ has the following form:
    \[
    f(z_{1},\dots,z_{m})=(w_{1},w_{2},\dots,w_{n}), \quad
    w_{i}=\prod_{j=l'_{i-1}+1}^{l'_{i}}z_{j},\quad0=l'_{0}<l'_{1}<\cdots<l'_{n}\leq m,
    \]
    where \[
    D'\cap V=\{w_{1}w_{2}\cdots w_{k}=0\},\quad D\cap U=\{z_{1}z_{2}\cdots z_{l'_{k}}=0\},
    \]
    and 
    \[l'_{i+1}=l'_i+1,\quad\text{for } k\leq i\leq n-1.\]
    A direct computation of the Jacobian matrix shows that the critical
locus is
\[
        \bigcup_{k'=1}^{k}
        \Big(
        \bigcup_{l'_{k'-1}+1\leq i<j\leq l'_{k'}}
        \{z_i=z_j=0\}
        \Big).
\]
    Our claim follows.
\end{proof}

\begin{lem}\label{presemistable}
    Let
    \[
    p^{\vG}:
    \bigl(M^{|\vG_0|}\times \C^{|\vG_1|}\bigr)_{\vG}
    \longrightarrow
    \cpsch
    \]
    be the map defined in \Cref{construction of morphism to schwinger}.
    Then, for every $S\subseteq \vG_1$,
    \[
    (p^{\vG})^{*}(D_S)
    =
    \sum_{G_{S'\subset S}\in\mathcal{G}^{\vG}}
    \widetilde{G}_{S'\subset S}.
    \]
    In particular, $(p^{\vG})^{*}(D_S)$ is a simple normal crossing
    divisor.
\end{lem}

\begin{proof}
    Consider the commutative diagram
    \[
    \begin{tikzcd}
        \bigl(M^{|\vG_0|}\times \C^{|\vG_1|}\bigr)_{\vG}
        \arrow[r, "p^{\vG}"]
        \arrow[d, "p_{\mathcal{G}_{\vG}}"']
        &
        \cpsch
        \arrow[d, "\pi^{\vG}"]
        \\
        M^{|\vG_0|}\times \C^{|\vG_1|}
        \arrow[r, "\operatorname{pr}_2"']
        &
        \C^{|\vG_1|},
    \end{tikzcd}
    \]
    where $p_{\mathcal{G}_{\vG}}$ and $\pi^{\vG}$ are the canonical
    proper maps associated with the corresponding wonderful
    compactifications, and $\operatorname{pr}_2$ is the projection onto
    the second factor.

    Let $\mathcal{I}_{\mathcal{Z}_S}$ denote the ideal sheaf of
    $\mathcal{Z}_S$. By the commutativity of the diagram,
    \[
    (p^{\vG})^{*}(\pi^{\vG})^{*}\mathcal{I}_{\mathcal{Z}_S}
    =
    p_{\mathcal{G}_{\vG}}^{*}
    \operatorname{pr}_2^{*}\mathcal{I}_{\mathcal{Z}_S}.
    \]
    By \Cref{pull back divisor decomposition}, this gives
    \[
    (p^{\vG})^{*}
    \left(
        \sum_{S''\supseteq S}D_{S''}
    \right)
    =
    \sum_{S''\supseteq S}
    \sum_{G_{S'\subset S''}\in\mathcal{G}^{\vG}}
    \widetilde{G}_{S'\subset S''}.
    \tag{1}
    \]

    Set
    \[
    E_S
    :=
    \sum_{G_{S'\subset S}\in\mathcal{G}^{\vG}}
    \widetilde{G}_{S'\subset S}.
    \]
    We prove
    \[
    (p^{\vG})^{*}(D_S)=E_S
    \]
    by descending induction on $|S|$. The assertion is immediate for
    $S=\vG_1$. Suppose that it holds for every $S''\supsetneq S$.
    Subtracting the corresponding identities from $(1)$, we obtain
    \[
    \begin{aligned}
        (p^{\vG})^{*}(D_S)
        &=
        (p^{\vG})^{*}
        \left(
            \sum_{S''\supseteq S}D_{S''}
        \right)
        -
        \sum_{S''\supsetneq S}(p^{\vG})^{*}(D_{S''})
        \\
        &=
        \sum_{S''\supseteq S}E_{S''}
        -
        \sum_{S''\supsetneq S}E_{S''}
        \\
        &=E_S.
    \end{aligned}
    \]

    Finally, the divisors
    $\widetilde{G}_{S'\subset S}$ are boundary divisors of the
    wonderful compactification. Since its boundary is a simple normal
    crossing divisor, their reduced sum is also a simple normal crossing
    divisor.
\end{proof}

We are now ready to prove that $p^{\vG}$ is a semistable morphism.
To establish a stronger version of
\Cref{thm:schwinger smooth extension0}, we also consider the map
\[
p_1^{\vG}:
\bigl(M^{|\vG_0|}\times \C^{|\vG_1|}\bigr)_{\vG}
\longrightarrow
M\times\cpsch
\]
defined by
\[
p_1^{\vG}
:=
\bigl(\operatorname{pr}_1\circ p_{\mathcal{G}_{\vG}},\,p^{\vG}\bigr),
\]
where
\[
p_{\mathcal{G}_{\vG}}:
\bigl(M^{|\vG_0|}\times \C^{|\vG_1|}\bigr)_{\vG}
\longrightarrow
M^{|\vG_0|}\times\cpsch
\]
is the canonical proper map, and $\operatorname{pr}_1$ denotes the
projection onto the first $M$-factor.

\begin{prop}
    The maps $p^{\vG}$ and $p_{1}^{\vG}$ are semistable morphisms
    with respect to the natural boundary divisors induced by the
    wonderful compactifications.
\end{prop}

\begin{proof}
    We first prove that $p^{\vG}$ is semistable. Since semistability is
    a local property, it suffices to verify it over an open cover of
    $\cpsch$.

    Let
    \[
    \mathfrak{s}=(S_i,e_i)_{i=1}^{m}\in\mathsf{MNS}(\vG)
    \]
    be a marked nested sequence of edges. Recall that
    \[
    C_{\mathfrak{s}}
    :=
    \C^{m}\times
    (\C^{*})^{|\vG_1\setminus\{e_1,\dots,e_m\}|}
    \]
    is an open chart of $\cpsch$. It is enough to prove that
    \[
    p^{\vG}\big|_{(p^{\vG})^{-1}(C_{\mathfrak{s}})}
    :
    (p^{\vG})^{-1}(C_{\mathfrak{s}})
    \longrightarrow
    C_{\mathfrak{s}}
    \]
    is semistable.

    Let $\gamma_i:=\gamma_{e_i}$ for $1\leq i\leq m$. Then
    \[
    D_{S_i}\cap C_{\mathfrak{s}}
    =
    \{\gamma_i=0\}\times
    (\C^{*})^{|\vG_1\setminus\{e_1,\dots,e_m\}|}.
    \]
    For $S'\subseteq S_i$, set
    \[
    G''_{S'\subseteq S_i}
    :=
    \triangle_{S'}\times\{\gamma_i=0\}
    \subset
    M^{|\vG_0|}\times\C^m.
    \]
    Order the collection
    \[
    \left\{
        G''_{S'\subseteq S_i}
        :S'\subseteq S_i,\ 1\leq i\leq m
    \right\}
    =
    \{G''_1,\dots,G''_N\}
    \]
    compatibly with inclusion. By
    \Cref{another characterization}, we have
    \[
    (p^{\vG})^{-1}(C_{\mathfrak{s}})
    \cong
    \left(
        \Bl_{\widetilde{G}''_N}\cdots
        \Bl_{G''_1}
        \bigl(M^{|\vG_0|}\times\C^m\bigr)
    \right)
    \times
    (\C^{*})^{|\vG_1\setminus\{e_1,\dots,e_m\}|}.
    \]
    Thus it suffices to prove that the canonical map
    \[
    p:
    \Bl_{\widetilde{G}''_N}\cdots
    \Bl_{G''_1}
    \bigl(M^{|\vG_0|}\times\C^m\bigr)
    \longrightarrow
    \C^m
    \]
    is semistable.

    By \Cref{presemistable}, for each $1\leq i\leq m$, the divisor
    \[
    p^{*}\{\gamma_i=0\}
    \]
    is a simple normal crossing divisor, and the pullbacks corresponding
    to distinct $i$ have no common irreducible components. Since
    $\gamma_1,\dots,\gamma_m$ form a coordinate system on $\C^m$,
    the map $p$, and hence $p^{\vG}$, is semistable.

    The same argument applies to $p_{1}^{\vG}$. Indeed, let
    $U\subset M$ be a coordinate chart with coordinates
    $z_1,\dots,z_{\dim M}$, and let
    \[
    H_j:=\{z_j=0\}\subset U.
    \]
    The collection
    \[
    \begin{aligned}
        &\left\{
            G_{S\subseteq S'}
            \cap
            \bigl(
                U\times M^{|\vG_0|-1}\times\C^{|\vG_1|}
            \bigr)
            :
            S\subseteq S'
        \right\}
        \\
        &\qquad\cup
        \left\{
            H_j\times M^{|\vG_0|-1}\times\C^{|\vG_1|}
            :
            1\leq j\leq\dim M
        \right\}
    \end{aligned}
    \]
    is a building set. Adjoining the coordinate hypersurfaces does not
    change the associated wonderful compactification, and their total
    transforms coincide with their dominant transforms. Applying the
    preceding argument to the coordinate hypersurfaces
    $\{z_j=0\}$ and $\{\gamma_i=0\}$ shows that
    \[
    p_{1}^{\vG}:
    (p_{1}^{\vG})^{-1}(U\times C_{\mathfrak{s}})
    \longrightarrow
    U\times C_{\mathfrak{s}}
    \]
    is semistable. Therefore, $p_{1}^{\vG}$ is semistable.
\end{proof}

\begin{cor}\label{technical corollary}
    Let $p\in\bigl(M^{|\vG_0|}\times \C^{|\vG_1|}\bigr)_{\vG}$ be a critical point of
    \[
p_{1}^{\vG}:\bigl(M^{|\vG_0|}\times \C^{|\vG_1|}\bigr)_{\vG}\longrightarrow M\times\cpsch,
\]
then there exists some $S\subset\vG_{1}$, which has at least one edge that is not a self-loop, such that $p$ is in the intersection of two different irreducible components of $(p_{1}^{\vG})^{*}(M\times D_{S})$.
\end{cor}
\begin{proof}
    By \Cref{presemistable}, their exists some $S\subset\vG_{1}$, such that $p$ is in the intersection of two different irreducible components of $(p_{1}^{\vG})^{*}(M\times D_{S})$. By \Cref{pull back divisor decomposition}, we have
    \[
    (p_{1}^{\vG})^{*}(M\times D_{S})=(p^{\vG})^{*}( D_{S})=\sum_{G_{S\subset S'}\in \mathcal{G}^{\vG}}\widetilde{G}_{S\subset S'}.
    \]
    If all the elements in $S$ are self-loops, we have \[
    G_{S\subset S'}=M^{|\vG_{0}|}\times D_{S}
    \]
    for any $S'\supset S$, so there is only one irreducible component. This contradicts with our choice of $S$.
\end{proof}

\subsubsection{Smooth extension theorems}

Now, we can prove \Cref{thm:schwinger smooth extension0}.

\begin{lem}\label{vanishing at critical points}
    Let $S\subseteq \vG_1$, and let $\widetilde{G}'$ be an irreducible
    component of
    \[
    (p_{1}^{\vG})^{*}(M\times D_S)
    \]
    distinct from $\widetilde{G}_{S\subseteq S}$. Then there exists
    $N\in\mathbb{N}$ such that
    \begin{equation}\label{integrand1}
        \left(\prod_{e\in\vG_1}\tau_e^N\right)
        \left(\prod_{e\in\vG_1}\sq_e\right)^*
        \widetilde{W}(\vG,\Phi)
    \end{equation}
    extends smoothly to
    \[
    \bigl(M^{|\vG_0|}\times\C^{|\vG_1|}\bigr)_{\vG}
    \]
    and is flat along $\widetilde{G}'$. Moreover, \eqref{integrand1} is flat
    along $\Crit(p_{1}^{\vG})$.

    Here,
    \[
    G_{S\subseteq S}:=\triangle_S\times D_S,
    \]
    and $\widetilde{G}_{S\subseteq S}$ denotes its dominant transform.
\end{lem}

\begin{proof}
    By \Cref{smooth extension of integrand1}, there exists $N\in\mathbb{N}$, such that \eqref{integrand1} is smooth over $\bigl(M^{|\vG_0|}\times \C^{|\vG_1|}\bigr)_{\vG}$. By our construction, $\widetilde{G}'=\widetilde{\triangle_{S'}\times D_{S}}$ for some $S'\supset S$, and $\triangle_{S'}\supsetneq\triangle_{S}$. Let 
    \[p\in \widetilde{G}'\setminus\widetilde{G}_{S\subset S},\]
    we have 
    \[
    p_{\mathcal{G}_{\vG}}(p)\in (\triangle_{S'}\setminus\triangle_{S})\times D_{S}\subset M^{|\vG_0|}\times\cpsch,
    \]
    where 
    \[
    p_{\mathcal{G}_{\vG}}:\bigl(M^{|\vG_0|}\times \C^{|\vG_1|}\bigr)_{\vG}\longrightarrow M^{|\vG_0|}\times\cpsch
    \]
    is the canonical proper map. We write 
    \[
    p_{\mathcal{G}_{\vG}}(p)=(\mathbf{z}_{1},\mathbf{z}_{2},\dots,\mathbf{z}_{|\vG_{1}|},\tau)\in M^{|\vG_0|}\times\cpsch,
    \]
    then there exists $e\in S'\setminus S$, such that $
    \mathbf{z}_{s(e)}\neq \mathbf{z}_{t(e)}.
    $
    This shows that 
    \[
    p_{e}(p)\in (M\times M\setminus \triangle)\times {0}\subset \mathrm{Bl}_{\triangle\times\{0\}}(M\times M\times \mathbb{C}),
    \]
    where $p_{e}$ is defined in \Cref{smooth extension of integrand}. By \Cref{propogator is meromorphic}, $p_{e}^{*}(\tau_{e}^{N}\mathrm{sq}^{*}P_{t})$ is flat at $p$, so \eqref{integrand1} is also flat at $p$. Since $\widetilde{G}'\setminus\widetilde{G}_{S\subset S}$ is dense in $\widetilde{G}'$, \eqref{integrand1} is flat along $\widetilde{G}'$.

    Finally, by \Cref{technical corollary}, for any critical point $p$, there exists $S\subset\vG_{1}$, such that such that $p$ is in the intersection of two different irreducible components of 
    \[
    (p_{1}^{\vG})^{*}(M\times D_{S})=\sum_{G_{S\subset S'}\in \mathcal{G}^{\vG}}\widetilde{G}_{S\subset S'}.
    \]
    However, \eqref{integrand1} is flat along any components of $(p_{1}^{\vG})^{*}(M\times D_{S})$ that is not $\widetilde{G}_{S\subset S}$, so \eqref{integrand1} is flat at any critical points.
\end{proof}
	Let
\[
p_{1}^{\vG,\mathbb{R}}:M^{|\vG_0|}\times(0,\infty)^{|\vG_{1}|}\longrightarrow M\times(0,\infty)^{|\vG_{1}|}
\]
be the projection onto the first $M$-factor and the $(0,\infty)^{|\vG_{1}|}$-factor.

\begin{thm}\label{smooth extension to schwinger2}
    For any decorated directed graph $\vG$ and 
		any $\Phi\in\Omega^\bullet(M^{|\vG_0|})$, the fiber integral
		\[
		 p_{1*}^{\vG,\mathbb{R}}\big(\widetilde W(\vG,\Phi)\big)
		\]
        can be extended to a smooth differential form on $M\times\resch$.
		 Moreover,
		the assignment
		\[
		\Phi\in\Omega^\bullet(M^{|\vG_0|})
		\longmapsto
		p_{1*}^{\vG,\mathbb{R}}\big(\widetilde W(\vG,\Phi)\big)\in\Omega^\bullet\big(M\times\resch\big)
		\]
		defines a continuous linear map between topological vector spaces.
\end{thm}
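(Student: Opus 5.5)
We follow the strategy outlined at the start of \Cref{subsec: smooth extension}, now for $p_1^{\vG}$ rather than $p^{\vG}$. Fix $\Phi$ and let $N$ be as in \Cref{smooth extension of integrand1}. Denote by
\[
\alpha:=\Big(\prod_{e\in\vG_1}\tau_e^N\Big)\Big(\prod_{e\in\vG_1}\sq_e\Big)^*\widetilde W(\vG,\Phi)
\]
its smooth extension to $\bigl(M^{|\vG_0|}\times\C^{|\vG_1|}\bigr)_{\vG}$. By \Cref{vanishing at critical points}, $\alpha\in\Flat\bigl((M^{|\vG_0|}\times\C^{|\vG_1|})_{\vG},\Crit(p_1^{\vG})\bigr)$.

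The map $p_1^{\vG}$ is proper, being a composition of the proper blow-down map $p_{\mathcal{G}_{\vG}}$ with a projection along the compact factor $M^{|\vG_0|-1}$. It is holomorphic, hence real analytic, and it is surjective. Therefore \Cref{pushforward is smooth2} applies and $(p_1^{\vG})_*\alpha$ is a smooth form on $M\times\cpsch$. Moreover, $\alpha\mapsto(p_1^{\vG})_*\alpha$ is continuous.

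Next we identify this pushforward. Over the dense open set $M\times(\C^*)^{|\vG_1|}$, the blow-ups are isomorphisms, and $p_1^{\vG}$ becomes the projection $M^{|\vG_0|}\times(\C^*)^{|\vG_1|}\to M\times(\C^*)^{|\vG_1|}$. This projection commutes with $\mathrm{id}_{M^{|\vG_0|}}\times\prod_e\sq_e$. Fiber integration commutes with pullback along the base and with multiplication by the base function $\prod_e\tau_e^N$. Hence on $M\times(\C^*)^{|\vG_1|}$ we get
\[
(p_1^{\vG})_*\alpha=\Big(\prod_e\tau_e^N\Big)\Big(\mathrm{id}_M\times\prod_e\sq_e\Big)^*\,p_{1*}^{\vG,\R}\bigl(\widetilde W(\vG,\Phi)\bigr),
\]
up to a fixed sign. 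So $(\prod_e\tau_e^N)(\mathrm{id}\times\prod_e\sq_e)^*\beta$ extends smoothly to $M\times\cpsch$, where $\beta:=p_{1*}^{\vG,\R}(\widetilde W(\vG,\Phi))$.

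We then apply a version of \Cref{descent of differential forms} with the extra parameter space $M$. Work in the charts $U\times C_{\mathfrak{s}}$ with $U\subset M$ a coordinate chart, using \Cref{chart on real}. There we rerun the proof of \Cref{descent smooth function} and of the proposition following it, with the $M$-coordinates as passive smooth parameters. The torus $(S^1)^{|\vG_1|}$ acts trivially on $M$, so the Schwarz--Poénaru argument, or the explicit Taylor-coefficient argument, goes through uniformly in the parameters. Forms with $dx$-components from $M$ are handled term by term, exactly as the $dt_{i}$ components were. This gives the smooth extension of $\beta$ to $M\times\resch$.

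For continuity in $\Phi$, we note three facts:
\begin{itemize}
\item $\Phi\mapsto\alpha$ is continuous into the flat sections, since in local coordinates $\alpha$ is a fixed smooth form wedged with the pullback of $\Phi$.
\item The pushforward is continuous by \Cref{pushforward is smooth2}.
\item The descent step is continuous. For this we write $g^{(l)}$ via Taylor coefficients of $f$, as in the proof of \Cref{descent smooth function}, so $C^k$ norms of $g$ are bounded by $C^{k'}$ norms of $f$ on compact sets.
\end{itemize}

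The main obstacle is the descent step with parameters, together with its continuity. The Schwarz--Poénaru theorem as cited is not phrased with explicit estimates. If needed, we would argue directly from the Taylor expansion, using the equivariance identity \eqref{s1 action formula} for each $\gamma_i$ and the Borel/Whitney division to get uniform bounds. We also need a minor check that the flatness statement of \Cref{vanishing at critical points} is genuinely in the $C^\infty$ sense required by \Cref{pushforward is smooth2}, including in the $M$-directions.
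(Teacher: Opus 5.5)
Your proposal is correct and follows the paper's proof step for step: flatness of the extended integrand along $\mathrm{Crit}(p_1^{\vG})$ from \Cref{vanishing at critical points}, smoothness of the pushforward by \Cref{pushforward is smooth2}, identification of that pushforward with $\big(\prod_e\tau_e^N\big)\big(\prod_e\mathrm{sq}_e\big)^*p_{1*}^{\vG,\mathbb{R}}\big(\widetilde W(\vG,\Phi)\big)$, and descent to $M\times\resch$. As you note, the descent step really uses a version of \Cref{descent of differential forms} with $M$ as a passive parameter. The only deviation is in the continuity of the descent step. The paper applies the Banach--Schauder theorem to the injective map $\big(\prod_e\tau_e^N\big)\big(\prod_e\mathrm{sq}_e\big)^*$ and asserts that its image is closed. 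Your direct Taylor estimates, in which the equivariance weights force vanishing to the right order, give explicit bounds and avoid having to justify that closed-image claim.
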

   \begin{proof}
        By \Cref{vanishing at critical points}, there exists $N\in\mathbb{N}$ such that \begin{equation}\Big(\prod_{e\in\vG_1}\tau_e^N\Big)
\Big(\prod_{e\in\vG_1}\sq_e\Big)^*\widetilde W(\vG,\Phi)
     \end{equation}
    is smooth over $\bigl(M^{|\vG_0|}\times \C^{|\vG_1|}\bigr)_{\vG}$ and flat along the critical points of $p_{1}^{\vG}$. Therefore, by \Cref{pushforward is smooth2},
    \[
    \Big(\prod_{e\in\vG_1}\tau_e^N\Big)
\Big(\prod_{e\in\vG_1}\sq_e\Big)^*p_{1*}^{\vG,\mathbb{R}}\big(\widetilde W(\vG,\Phi)\big)=p_{1*}^{\vG}\Big(\prod_{e\in\vG_1}\tau_e^N\Big)
\Big(\prod_{e\in\vG_1}\sq_e\Big)^*\big(\widetilde W(\vG,\Phi)\big)
    \]
    is a smooth differential form over $M\times\cpsch$. By \Cref{descent of differential forms}, 
    \[
    p_{1*}^{\vG,\mathbb{R}}\big(\widetilde W(\vG,\Phi)\big)
    \]
    can be extended to a smooth form on $M\times \resch$. To prove the continuity, we observe that 
    \begin{equation}\label{strange map}
    \Big(\prod_{e\in\vG_1}\tau_e^N\Big)
\Big(\prod_{e\in\vG_1}\sq_e\Big)^*:\Omega^*\big(M\times\resch\big)\longrightarrow\Omega^\bullet\big(M\times\cpsch\big)
    \end{equation}
    is a continuous map with closed image, so by Banach–Schauder theorem for Fréchet spaces, the inverse map is continuous. We observe that $p_{1*}^{\vG,\mathbb{R}}$ is the composition of the map in \eqref{strange map}, the map $p_{1*}^{\vG}$, and the inverse of the map in \eqref{strange map} onto its image. Hence the assignment
		\[
		\Phi\in\Omega^\bullet(M^{|\vG_0|})
		\longmapsto
		p_{1*}^{\vG,\mathbb{R}}\big(\widetilde W(\vG,\Phi)\big)\in\Omega^\bullet\big(M\times\resch\big)
		\] is continuous.
    \end{proof}
	
	\begin{cor}\label{smooth extension to schwinger3}
	    For any decorated directed graph $\vG$ and 
		any $\Phi\in\Omega^\bullet(M^{|\vG_0|})$, the fiber integral
		\[
		 \widehat W(\vG,\Phi)=\int_{M^{|\vG_{0}|}} \widetilde W(\vG,\Phi)
		\]
        can be extended to a smooth differential form on $\resch$.
		 Moreover,
		the assignment
		\[
		\Phi\in\Omega^\bullet(M^{|\vG_0|})
		\longmapsto
		\widehat W(\vG,\Phi)\in\Omega^\bullet\big(\resch\big)
		\]
		defines a continuous linear map between topological vector spaces.
	\end{cor}
    \begin{proof}
        By \Cref{smooth extension to schwinger2}, the composition of $p_{1*}^{\vG,\mathbb{R}}$ with the fiber integral along \[
        M\times \resch\longrightarrow \resch
        \] is continuous.
    \end{proof}

	\subsection{Zeta-regularized heat kernel formulation}\label{subsec: heat kerenl zeta regularization}
In this subsection, we introduce the zeta-regularized heat kernel formulation.

    First, we can slightly strengthen the result of
\Cref{interability of Feynman graph integrand} as follows.
	\begin{prop}\label{prop:W-hat-large-time-decay}
		For every decorated directed graph $\vG$, every $\Phi\in\Omega^\bullet(M^{|\vG_0|})$, there exist
		constants $\delta,C>0,r\in\Z_{>0}$ such
		that on $(0,\infty)^{|\vG_1|}$, there exists $f_{\Phi}\in C^\infty\big((0,\infty)^{|\vG_1|}\big)\cap L^1\big((0,\infty)^{|\vG_1|}\big)$ satisfying \begin{enumerate}[(1)]
			\item \[\|f_\Phi\|_{L^1\big((0,\infty)^{\vG_1}\big)}\leq C;\]
			\item \[
			\big|
			\widehat W(\vG,\Phi)_\top(\bft)
			\big|
			\le
			C|{f_{\Phi}(\bft)}|\exp\Big(-\delta\sum_{e\in\vG_1}t_e\Big),
			\quad
			\bft=(t_e)_{e\in\vG_1}.
			\]
		\end{enumerate}
		Here, for $w\in\Omega^\bullet((0,\infty)^{|\vG_1|})$, $w_\top\in C^\infty((0,\infty)^{|\vG_1|})$ denotes the coefficient of top degree components of $w$.

	\end{prop}

	\begin{proof}
   On $(0,\infty)^{|\vG_1|}$, write
\[
        \phi(\bft):=\widehat W_{\top}(\vG,\Phi).
\]
Fix $\delta\in(0,c)$, where $c$ is the smallest constant appearing in the
 estimate \eqref{exponential decay}, and set
\[
        f_{\Phi}(\bft)
        :=
        \phi(\bft)\exp\Big(\delta\sum_{e\in\vG_1}t_e\Big).
\]
We now prove that $f_{\Phi}\in L^1\big((0,\infty)^{|\vG_1|}\big)$.

As in the proof of \Cref{interability of Feynman graph integrand}, consider the
decomposition
\[
        (0,\infty)^{|\vG_1|}
        =
        \bigcup_{S\subseteq\vG_1}
        [1,\infty)^{|S|}
        \times
        (0,1]^{|\vG_1\setminus S|}.
\]
Since $t_e\le1$ for $e\in\vG_1\setminus S$, the exponential decay
estimate \eqref{exponential decay}, together with Tonelli's theorem, gives
\[
\begin{aligned}
        &\int_{[1,\infty)^{|S|}
        \times(0,1]^{|\vG_1\setminus S|}}
        |f_{\Phi}(\bft)|\,d\bft    \le
        C\int_{[1,\infty)^{| S|}}
        e^{\delta\sum_{e\in S}t_e}
        e^{-c\sum_{e\in S}t_e}\,d\bft'
        <\infty .
\end{aligned}
\]
Summing over all subgraphs $\vG'\subseteq\vG$, we obtain
$f_{\Phi}\in L^1$.
	\end{proof}
	
	We define the zeta-regularized heat-kernel Feynman graph integral by
	\begin{equation}\label{def:zeta heat kernel graph integral}
		W^{\mathrm{HK}}(\vG,\Phi,\bfs)
		:=
		\int_{\resch}
		\prod_{e\in\vG_1}\frac{t_e^{s_e}}{\Gamma(s_e+1)}
		\widehat W(\vG,\Phi),\bfs\in \cH^{|\vG|_1}.
	\end{equation}
	Here $t_e$ is regarded as a smooth function on $\resch$, given by the
	composition of the blow-up map
	$
	\resch\to(0,+\infty)^{|\vG_1|}
	$
	with the projection to the component related to $e$.

	\begin{prop}\label{prop:zeta heat kernel holomorphic}
		The function
		\[
		W^{\mathrm{HK}}(\vG,\Phi,\bfs)
		\]
		is holomorphic on $\cH^{|\vG_1|}$.
		In particular,
		\[W^{\mathrm{HK}}(\vG,\Phi,\mathbf{0})=
		\int_{\resch}
		\widehat W(\vG,\Phi).\]
	\end{prop}
	
	\begin{proof}
		By \Cref{smooth extension to schwinger3}, the configuration-space integral
		$\widehat W(\vG,\Phi)$ is smooth on the compactified Schwinger space.
		Since $\Re(s_e)\ge0$ for all $e$, and by
		\Cref{prop:W-hat-large-time-decay} , it follows that the resulting integrand is integrable.
		Moreover, differentiation with respect to the parameters $s_e$ only
		introduces powers of $\ln t_e$, which remain integrable. Therefore the integral depends
		holomorphically on $\bfs\in\cH^{|\vG_1|}$.
	\end{proof}

	\section{Zeta regularized Cauchy principal values}
	\label{sec:multi-zeta-cpv}

	This section proves a zeta regularization theorem for
	Feynman graph integrals in the sense of generalized Cauchy principal value.  The construction has three ingredients.  First, we
	introduce a singularity class which is stable under coordinate changes
	compatible with normal crossing divisors and which keeps track of the
	multi-zeta parameters.  Second, we show that the zeta-regularized propagator
	belongs to this class after blowing up the diagonals.  Third, we prove Cauchy principal value statements for this singularity class.

	\subsection{Statement of the main theorem}
	
	Let
	\[
	\cH:=\{s\in\C:\Re(s)\ge0\}
	\]
	be the closed right half-plane.  For $\Re(s)\gg0$, define the
	zeta-regularized propagator by
	\begin{equation}\label{def zeta propagator}
		P^s
		:=
		\frac{1}{\Gamma(s+1)}
		\int_0^{+\infty}
		t^s\,dt\wedge(\bar\partial_{\tE}^*\otimes\operatorname{id})H_t .
	\end{equation}
	The singular behavior of $P^s$ along the diagonal will be studied in
	\Cref{subsection propagators}.
	
	Let $\vG$ be a decorated directed graph as in \Cref{defn:directed-graph}.  For
	each edge $e\in\vG_1$, let $s_e$ be a complex parameter and write
	\[
	\bfs=(s_e)_{e\in\vG_1}.
	\]
	Replacing the propagator in Schwinger space $P_{t_{e}}$ assigned to the edge $e$ by
	$
	\frac{1}{\Gamma(s_e+1)}t_{e}^{s_e}P_{t_{e}},
	$
	and using the same contraction rule as in the definition of
	$\check{W}(\vG,\Phi)$ in \eqref{eq: check W}, for each
	$\Phi\in\Omega^\bullet(M^{|\vG_0|})$ we obtain a form on
	$\Conf_{\vG_0}(M)$, denoted by
	\[
	\check{W}(\vG,\Phi,\bfs).
	\]
	When $\Re(s_e)\gg0$ for every edge $e$, this form is absolutely
	integrable over $\Conf_{\vG_0}(M)$, and the resulting integral is
	holomorphic in $\bfs$ by dominated convergence.
	
	The main result of this section is that:
	
	\begin{thm}\label{thm:zeta graph integral holomorphic continuation}
		Let $(M,E,\omega)$ be a triple as in \Cref{Propagators in holomorphic field theories}, and
		let $\vG$ be a decorated directed graph. Then there exists a holomorphic function
		\[
		W^{\mathrm{CPV}}(\vG,\Phi,-):\cH^{|\vG_1|}\longrightarrow \mathbb C
		\]
		with the following property: for every $\bfs\in \cH^{|\vG_1|}$ such that
		$\Re(s_e)\gg 0, \forall e\in\vG_1$, one has
		\[
		W^{\mathrm{CPV}}(\vG,\Phi,\bfs)
		=
		\int_{\Conf_{\vG_0}(M)}
		\check{W}(\vG,\Phi,\bfs).
		\]
		Equivalently, the multi-zeta regularized Cauchy principal value renormalization of the Feynman graph integral admits an
		analytic continuation on $\cH^{|\vG_1|}$, in the sense of the
		generalized Cauchy principal value constructed below.
		
		In particular, $W^{\mathrm{CPV}}(\vG,\Phi,{0})=W^{\mathrm{CPV}}(\vG,\Phi)$.
	\end{thm}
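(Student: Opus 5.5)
The plan is to reduce the statement to a local Mellin-type analysis on the Fulton--MacPherson compactification $\widetilde{\Conf}_{\vG_0}(M)$, following the three ingredients announced at the start of the section.

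\emph{Step 1: a singularity class.} First I will introduce a class of forms on $\widetilde{\Conf}_{\vG_0}(M)$ that is stable under the coordinate changes relevant there. Near a boundary stratum we use coordinates $(r_1,\dots,r_k,\theta,x)$ adapted to the normal crossing boundary, where $r_i\ge 0$ are the radial blow-up coordinates. A form is in the class if it is a finite sum of terms
\[
\prod_{i}r_i^{2\lambda_i(\bfs)}\,\frac{\bar u_i^{a_i}}{|u_i|^{2b_i}}\,c(\bfs,\cdot),
\]
subject to the following conditions:
\begin{itemize}
\item each $\lambda_i(\bfs)=\sum_{e\in E_i}s_e$ is a sum of the $s_e$ over the edges collapsing at the $i$-th stratum;
\item each $u_i$ is a holomorphic normal coordinate for that stratum;
\item each $c$ is smooth and depends holomorphically on $\bfs$.
\end{itemize}
The essential point is the ``holomorphic-type'' structure: in the transverse complex variable, every negative power comes only with antiholomorphic numerators of the above form. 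I will check two things:
\begin{itemize}
\item the class is preserved under holomorphic changes of coordinates, using that regular expressions are preserved under holomorphic pullback;
\item the class is preserved under the iterated blow-ups building $\widetilde{\Conf}_{\vG_0}(M)$, by a product-formula argument analogous to \Cref{pull back divisor decomposition}.
\end{itemize}

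\emph{Step 2: the zeta propagator lies in the class.} Using \Cref{thm: Propogator has regular expression}, write locally
\[
P_t=e^{-\rho^2/2t}\,a(\bfy,d\bfy)\,b .
\]
Insert this into \eqref{def zeta propagator} and substitute $t=\rho^2 u$. Each monomial in $\bfy=(\bar\bfz-\bar\bfw)/t$ becomes
\[
\rho^{2s-2j+\dots}\,(\bar\bfz-\bar\bfw)^{\alpha}\times\Big(\frac{1}{\Gamma(s+1)}\int_0^\infty u^{s-j}e^{-1/2u}(\cdots)\,du\Big).
\]
The remaining Gamma-type integral is entire in $s$. The smooth factor $b$ is handled by Taylor expansion in $t$, with a smooth remainder that is harmless for $\Re s\ge0$. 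After blowing up the diagonal, and using $\rho^2=|z_1-w_1|^2(1+H)$ as in the proof of \Cref{propogator is meromorphic}, this shows that $P^s$ and its holomorphic derivatives lie in the class with $\lambda=s$. Contracting with the vertex densities and wedging with $\Phi$ then gives that $\check W(\vG,\Phi,\bfs)$ lies in the class on $\widetilde{\Conf}_{\vG_0}(M)$, with exponents $\lambda_i(\bfs)$ as above.

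\emph{Step 3: CPV continuation, the main obstacle.} For a single term I integrate over the angular variables of $u_i$ first, inside the $\epsilon$-truncation $\Conf^{\epsilon}_{\vG_0}(M;\tilde\rho)$. Taylor-expanding $c$ in $u_i,\bar u_i$, a monomial $u_i^{p}\bar u_i^{q}$ against $\bar u_i^{a}/|u_i|^{2b}$ has nonzero angular average only when $p=a+q$. The surviving term is then $r_i^{2\lambda_i+2q+2a-2b}$ times the volume factor $r_i^{2d-1}$, where $d$ is the complex codimension. The key numerical claim to verify is that the degree bookkeeping from Step 2 forces $2a-2b+2d\ge 1$ for every surviving term. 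Then
\[
\int_0^1 r^{2\lambda(\bfs)+\text{nonneg}-1}\,dr
\]
converges absolutely and depends holomorphically on $\bfs$ for $\Re\lambda>0$. Moreover it extends continuously to $\Re\lambda\ge0$ without poles, since the only potential pole is at $\lambda=0$ with exponent $-1$, and this is excluded by the inequality. The non-surviving angular terms vanish identically for every $\epsilon$, which is exactly the Cauchy principal value cancellation. Handling corners, where several $r_i$ vanish, is done iteratively, one $r_i$ at a time, using the product structure of the class.

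Summing over a partition of unity defines $W^{\mathrm{CPV}}(\vG,\Phi,\bfs)$ on $\cH^{|\vG_1|}$, with the required holomorphy and continuity. For $\Re s_e\gg0$ it agrees with the absolutely convergent integral by dominated convergence, and independence of $\tilde\rho$ follows from the same angular-vanishing argument. Finally, at $\bfs=0$ the construction is literally the generalized CPV used to define $W^{\mathrm{CPV}}(\vG,\Phi)$ (\cite[Theorem 2.21]{WYFeynman}), which gives the last assertion.

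The delicate step is the degree count in Step 3. I expect it to require the precise form of the regular expression, namely polynomials in $\bfy$ and $d\bfy$ only, with no holomorphic negative powers. That structure is what makes $\check W$ of holomorphic type rather than a general homogeneous singularity, and hence what prevents poles in the closed half-plane.
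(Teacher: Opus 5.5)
Your Step 2 contains a genuine error, and it is exactly the phenomenon the paper's singularity class is built to handle. After $t=\rho^2u$, the $u$-integral is not an entire Gamma-type integral.
\begin{itemize}
\item Without the cutoff, $\int_0^\infty u^{s-j}e^{-1/2u}\,du=2^{j-s-1}\Gamma(j-1-s)$. This converges only for $\Re s<j-1$, and even after dividing by $\Gamma(s+1)$ it has poles at $\{j-1,j,j+1,\dots\}\cap\cH$.
\item With the cutoff $\eta(\rho^2u)$ kept, the integral depends on $\rho$.
\item If you Taylor-expand $b$ in $t$ and drop $\eta$, the terms $b_kt^k$ with $k\ge j-1$ diverge. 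So the ``harmless remainder'' only becomes harmless after these terms are treated, and treating them is the whole difficulty.
\end{itemize}
The correct Mellin analysis is \Cref{lem: integrate good part}. The poles of $\widehat c$ at $z=-j$ and of $\Gamma(m+z-s)$ at $z=s-m-k$ collide precisely when $s\in\Z_{\ge0}$. This produces coefficients $\mp\gamma_k/s$ in front of $r^{s+q}$ and $r^q$, which cancel only in the combination $\ln(r,s)=(r^s-1)/s$, and similarly $\ln(r,s-\ell)$ near $s=\ell$.

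These terms are really present. Near positive integers they occur already for the flat metric: at $s=n=\dim_{\C}M$ the flat zeta propagator contains $\bar u_i\ln|u|^2$ with $u=z-w$. At $s=0$ they occur as soon as the Taylor coefficients $c^{(m+k)}(0)$ coming from the heat kernel are nonzero. Hence $P^{s}$ does not lie in your class with $c$ holomorphic in $\bfs$ near $\Z_{\ge0}$, in particular not near $\bfs=\mathbf 0$, which is the point needed for the final assertion. Your Step 3 would then leave coefficients with poles along $s_e\in\Z_{\ge0}$, and you never address their cancellation. You need to enlarge the class to include generalized logarithms $\ln^{\bfL}(\bfz,\bfk,\bfs)$ as in \Cref{defn of extended divisor}, with separate descriptions on $\ccH$ and on discs around positive integers, and redo the radial integrals with these factors.

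Second, your CPV step relies on the non-surviving angular terms vanishing ``identically for every $\epsilon$''. That requires the cutoff region to be invariant under rotations of the normal variables, and the region $\prod\tilde\rho^2>\epsilon$ is not. In coordinates adapted to the boundary it reads $|f\bfz^{\bfj}|>\delta$, with $f$ nowhere vanishing but not radial; already $\rho^2=|z_1-w_1|^2(1+H)$ with $H$ non-radial. At corners the region is also not a product, so the ``one $r_i$ at a time'' iteration does not apply as stated. What is needed is that the boundary integrals over $\{|f\bfz^{\bfj}|=\delta\}$ tend to zero for arbitrary $f$ and $\bfj$; this is also what gives independence of $\tilde\rho$. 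That is the Herrera--Lieberman-type content of \Cref{lema3}--\Cref{prop-a7}, which the paper feeds into the integration-by-parts induction on $(|\bfi|,|\bfk|)$ in \Cref{prop-a8}.

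Finally, the degree count $2a-2b+2d\ge1$, which you flag as delicate, is left unproved. It amounts to saying that no antiholomorphic poles occur on the complex blow-up of the diagonal. The paper checks this directly from $\tilde y_i=F_i/(z_1-w_1)$ and the corresponding formula for $d\tilde y_i$ in \Cref{lem44}.
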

	
	The rest of this section is devoted to the proof of
	\Cref{thm:zeta graph integral holomorphic continuation}. 
	
	\subsection{Generalized divisorial type singularities}
	We introduce a singularity class, called generalized divisor-type
singularities, which is stable under coordinate changes compatible with normal
crossing divisors and keeps track of the multi-zeta parameters. The motivation
for this class comes from \Cref{lem: integrate good part}.
	
	We first introduce the function $\ln(u,s)$, which we call the generalized
	logarithm function, where $s\in\C$ and $u>0$, defined by
	\[
	\ln(u,s):=\begin{cases}
		\frac{u^s-1}{s},\text{ if } s\neq0\\
		\ln(u), \text{ if }s=0.
	\end{cases}
	\]
	
	\begin{rem}
		We use the term generalized logarithm because
		\[
		\lim_{s\to 0}\frac{u^s-1}{s}
		=\ln(u).
		\]
	\end{rem}
	
For any
	\[
	\bfz=(z_1,\ldots,z_{m+n})\in\C^{m+n},\quad
	\bfi=(i_1,\ldots,i_m)\in\Z^m,
	\]
	we set
	\[
	\bfz^\bfi=z_1^{i_1}\cdots z_m^{i_m}.
	\]
	Let
	\[
	\bfL=(L_1,\ldots,L_m):\C^\ell\longrightarrow\C^m
	\]
	be a linear map whose coefficient matrix has entries in $\Z_{\ge0}$ for some $\ell\in\Z_{>0}$. For
	\[
	\bfj=(j_1,\ldots,j_m)\in\Z_{\ge0}^m,
	\quad
	\bfs=(s_1,\ldots,s_\ell)\in\C^\ell,
	\]
	we define
	\[
	\ln^{\bfL}(\bfz,\bfj,\bfs)
	:=
	\Big(\ln(|z_1|^2,L_1(\bfs))\Big)^{j_1}
	\cdots
	\Big(\ln(|z_m|^2,L_m(\bfs))\Big)^{j_m},
	\]
    and\[r^{\bfL}(\bfz,\bfj,\bfs)=|z_1|^{2j_1L_1(\bfs)}\cdots|z_m|^{2j_mL_m(\bfs)}\]
	When $\ell=m$ and $\bfL={\mathrm{id}}$, we omit the superscript and simply write
	\[
\ln(\bfz,\bfj,\bfs):=\ln^{{\mathrm{id}}}(\bfz,\bfj,\bfs)\quad\text{and}\quad r(\bfz,\bfj,\bfs):=r^{\id}(\bfz,\bfj,\bfs).
	\]
    When $\bfL=0$, we write
    \[\ln(\bfz,\bfj):=\ln^{0}(\bfz,\bfj,\bfs).\]
	Recall that $\cH=\{s\in\C:\Re(s)\ge0\}$, and set
    \be\label{eq: defn of check H}\ccH:=\cH\setminus\cup_{j\in\Z_{>0}}\{z\in\C: |z-j|\leq 1/4\}.\ee
	
	\begin{defn}\label{defn of extended divisor}
		Given a differential form
		\[
		\alpha\in\Omega^{\bullet,\bullet}\big((\C^*)^m\times\C^n\big)
		\]
		depending on parameters $\bfs\in\cH^\ell$, for some
		$\ell\in\Z_{>0}$, we say that $\alpha$ has generalized
		\textbf{divisorial type singularities} along the principal divisor
		\[
		z_1z_2\cdots z_m=0
		\]
if it is a finite linear combination of terms
\[
        \bfz^{-\bfi}\,g(\bfz,\bfs),
        \quad
        \bfi\in\Z^m,
\]
where $g$ satisfies: \begin{enumerate}[(1)]
    \item On $\ccH^\ell$, the function $g$ is a finite linear combination of terms
\[
        \ln^\bfL(\bfz,\bfk,\bfs)r^{\bfL'}(\bfz,\bfl,\bfs)\,\gamma(\bfz,\bfs),
        \quad
        \bfk,\bfl\in\Z_{\ge0}^m, 
\]
where
$
        \gamma(\cdot,\bfs)\in\Omega^{\bullet,\bullet}(\C^{m+n})
$
 and holomorphic in $\bfs$ on
$
        \ccH^\ell.
$
Here
	\[
	\bfL,\bfL':\C^\ell\longrightarrow\C^m
	\]
	are some linear maps whose coefficient matrix has entries in $\Z_{\ge0}$.
\item Let
\[
        \bfj=(j_1,\ldots,j_\ell)\in\Z_{\ge0}^\ell\setminus\{0\} \subset\cH^\ell.
\]
On 
$
   \big\{\bfs\in\cH^\ell:     |\bfs-\bfj|<\frac12\big\},
$
the function $g$ is a finite linear combination of terms
\[
        \ln^\bfL(\bfz,\bfk,\bfs-\bfj) r^{\bfL'}(\bfz,\bfl,\bfs)\,\beta(\bfz,\bfs),
        \quad
        \bfk,\bfl\in\Z_{\ge0}^m,
\]
where
$
        \beta(\cdot,\bfs)\in\Omega^{\bullet,\bullet}(\C^{m+n})
$
and holomorphic in $\bfs$ on
$
        \big\{\bfs\in\cH^\ell:     |\bfs-\bfj|<\frac12\big\}.$
        Here
	\[
	\bfL,\bfL':\C^\ell\longrightarrow\C^m
	\]
	are some linear maps whose coefficient matrix has entries in $\Z_{\ge0}$.
\end{enumerate}

	\end{defn}
	\begin{rem}\label{rmk: independence of compatible coordinates}
		The preceding definition is independent of the choice of coordinates
		compatible with the normal crossing divisor (see \Cref{defn: compatible coordinate change}). Indeed, under such a coordinate
		change $F:U\to V,\bfz\mapsto \bfw$, after shrinking $U$ if necessary, one has
		\[
		w_a\circ F
		=
		u_a(\bfz)\,z_1^{q_{a1}}\cdots z_m^{q_{am}},
		\quad a=1,\ldots,m,
		\]
		where $u_a$ is nowhere vanishing and $q_{ab}\in\Z_{\ge0}$. 
		Also noting
		\[
		\ln(xy,s)=x^s\ln(y,s)+\ln(x,s).
		\]
	\end{rem}

	\subsection{Zeta-regularized propagators and their singularities}
	\label{subsection propagators}
	
	We now study the singularities of the zeta-regularized propagator
	$P^s$ defined in \eqref{def zeta propagator}.  The large-time part of this
	integral is smooth.  More precisely, if $\eta\in C_c^{\infty}([0,1))$ satisfies
	$\eta(t)=1$ for $t$ sufficiently small, then
	\begin{equation}\label{smooth large time part}
		\frac{1}{\Gamma(s+1)}
		\int_0^{+\infty}
		\big(1-\eta(t)\big)t^s\,dt\wedge(\bar\partial_{\tE}^*\otimes\operatorname{id})H_t
	\end{equation}
	defines a smooth section on $M\times M$, depending holomorphically on
	$s\in\cH$.  Hence the singularities of
	$P^s$ are determined by its short-time part
	\begin{equation}\label{short time zeta propagator}
		\frac{1}{\Gamma(s+1)}
		\int_0^{+\infty}
		\eta(t)t^s\,dt\wedge(\bar\partial_{\tE}^*\otimes\operatorname{id})H_t.
	\end{equation}
	\def\cR{{\mathcal{R}}}
We first prove the following fundamental result in analysis:
\begin{lem}\label{lem: integrate good part}
Fix $m\in\mathbb Z_{\geq0}$.
For $r>0$ and $s\in\cH$, define
\[
        J(r,s)
        =
        \frac{r^{m}}{\Gamma(s+1)}
        \int_0^{+\infty}
        e^{-\frac{r}{t}}t^{s-m-1}c(t)\,{dt},
\]
where $c\in C_c^\infty([0,1))$.

Then the following holds.

\begin{enumerate}[(1)]
\item \label{lem45 item1}For every $\ell\in\mathbb Z_{>0}$, let
\[
        B_\ell
        =
        \bigl\{
                s\in\cH:
                |s-\ell|<\tfrac12
        \bigr\}.
\]
There exist functions $u_\ell(r,s)$, $v_\ell(r,s)$ and $w_\ell(r,s)$, smooth in
$r$ on $[0,\infty)$ and holomorphic in $s$ on 
$B_\ell$, such that
\[
        J(r,s)
        =
        \ln(r,s-\ell)u_\ell(r,s)
        +
       r^s v_\ell(r,s)+w_\ell(r,s).
\]

\item\label{lem 45 item 2}
There exist functions $a(r,s)$, $b(r,s)$ and $c(r,s)$, smooth in $r$ on $[0,\infty)$ and holomorphic in $s$ on $\ccH$, such that
\[
        J(r,s)
        =
        \ln(r,s) a(r,s)+r^sb(r,s)+c(r,s).
\]
\end{enumerate}
\end{lem}

\begin{proof}
Let
\[
        \widehat c(z)
        =
        \int_0^\infty
        t^{z-1} c(t)\,dt
\]
be the Mellin transform of $c$. It extends meromorphically to $\C$, with
simple poles at $z=0,-1,-2,\ldots$. See also \eqref{eq: expression of hat c}.

Fix $d>1+\Re(s)$. By the inverse Mellin transform, we have
\[
        t^s c(t)
        =
        \frac{1}{2\pi\sqrt{-1}}
        \int_{L_d}
        \widehat c(z)t^{s-z}\,dz.
\]Here\[L_{d}
        :=
        \bigl\{
                z\in\C:\Re(z)=d
        \bigr\}.\]
Consequently, for $r>0$, using
$
        \int_0^\infty
        e^{-\frac{r}{t}}t^{s-m-z-1}\,dt
        =
        r^{s-m-z}\Gamma(m+z-s),
$
\be\label{eq: computation of J r s}
\begin{split}
        J(r,s)
        &=
        \frac{r^{m}}{2\pi\sqrt{-1}\,\Gamma(s+1)}
        \int_{L_d}
        \widehat c(z)
        \Big(
                \int_0^\infty
                e^{-\frac{r}{t}}t^{s-m-z-1}\,dt
        \Big)dz        =
        \frac{1}{2\pi\sqrt{-1}}
        \int_{L_d}
        h_s(z)r^{s-z}\,dz,
\end{split}
\ee
where
\[
        h_s(z)
        =
        \frac{\widehat c(z)\Gamma(m+z-s)}
        {\Gamma(s+1)}.
\]
The poles of $h_s(z)$ are
\[
        P_{m,s}:=
        \bigl\{
        s-m-k:
        k\in\mathbb Z_{\geq0}
        \bigr\}
        \cup
        \bigl\{
        -j:j\in\Z_{\geq0}
        \bigr\}.
\]
They are simple unless
\[
        -j=s-m-k.
\]
Thus, for $s\in\cH$, two poles can coincide only when
$s\in\mathbb Z_{\geq0}$.

Repeated integration by parts gives
\be\label{eq: expression of hat c}
        \widehat c(z)
        =
        \frac{(-1)^N}
        {z(z+1)\cdots(z+N-1)}
        \int_0^\infty
        t^{z+N-1}c^{(N)}(t)\,dt .
\ee
Together with Stirling's formula, it follows that, if
$|\Im(z)|>|\Im(s)|+1$, then, for every $N\in\Z_{>0}$, there exists
$C_N>0$ such that
\be\label{eq: estimate of h s}
        |h_s(z)|
        \leq
        C_N |\Im(z)|^{-N}.
\ee
Fix any $A>0$ such that the vertical line
\[
        L_{-A}
        =
        \bigl\{
                z\in\C:\Re(z)=-A
        \bigr\}
\]
does not contain any pole of $h_s$. By \eqref{eq: estimate of h s} and the
residue theorem, we have
\be\label{eq: move mellin contour}
\begin{split}
        J(r,s)
        =
        \frac{1}{2\pi\sqrt{-1}}
        \int_{L_{-A}}
        h_s(z)r^{s-z}\,dz
        +
        \sum_{z_0\in P_{m,s}\cap\{z:\Re(z)>-A\}}
        \Res_{z=z_0}
        \big(
                h_s(z)r^{s-z}
        \big).
\end{split}
\ee
On $L_{-A}$, we have
$
        |r^{s-z}|
        =
        r^{\Re(s)+A}.
$
Thus \eqref{eq: estimate of h s} implies
\[
        \Big|
        \int_{L_{-A}}
        h_s(z)r^{s-z}\,dz
        \Big|
        =
        O\big(r^{\Re(s)+A}\big),
        \quad
        0<r\leq1.
\]
Since $A$ is arbitrary, assuming that $s\notin\Z_{\geq0}$, we obtain the
asymptotic expansion:
\be\label{eq: asym of J s r}
        J(r,s)
        \sim
        \sum_{j=0}^\infty
        A_j(s)r^{s+j}
        +
        \sum_{k=0}^\infty
        B_k(s)r^{m+k},\quad r\to 0^+
\ee
where
\[
        A_j(s)
        =
        \frac{c^{(j)}(0)\Gamma(m-s-j)}
        {j!\Gamma(s+1)}
\quad\text{and}\quad
        B_k(s)
        =
        \frac{(-1)^k}
        {k!\,\Gamma(s+1)}
        \widehat c(s-m-k).
\]
The coefficients $A_j$ and $B_k$ are
holomorphic on $\{s\in\cH:s\notin \Z\}$.

We first consider the case where $s$ lies in $\ccH$. The poles
\[
        z=-j
        \quad\text{and}\quad
        z=s-m-k
\]
coincide at $s=0$ precisely when
$
        j=m+k.
$

For $0\leq j<m$, the function $A_j(s)$ is
holomorphic and bounded on $\ccH$.

We next consider the colliding terms given by $j=m+k,\, k\in \Z_{\geq0}$. By Stirling's formula
and \eqref{eq: expression of hat c}, there exist bounded holomorphic functions
$a_k,b_k:\ccH\to\C$ such that
\[
        A_{m+k}(s)
        =
        -\frac{\gamma_k}{s}+a_k(s),
        \quad
        B_k(s)
        =
        \frac{\gamma_k}{s}+b_k(s),
\]
where
\[
        \gamma_k
        =
        \frac{(-1)^k c^{(m+k)}(0)}
        {k!(m+k)!}.
\]
Thus, we obtain an asymptotic expansion of the form
\be\label{eq: asym of J s r0}
        J(r,s)
        \sim
        \ln(r,s)
        \sum_{q=0}^{+\infty}\gamma_q r^q
        +
        \sum_{q=0}^{+\infty}
        \bigl(v_q(s)r^q+u_q(s)r^{s+q}\bigr),
\ee
where
\be\label{eq: uniformly bounded of u v 0 q}
        \text{$u_q,v_q$ are holomorphic and bounded on $\ccH$.}
\ee

By the Borel lemma and \eqref{eq: uniformly bounded of u v 0 q}, there exist
$\tilde\a$ and $\tilde\b$, smooth and compactly supported in $r$ on
$[0,1)$ and holomorphic in $s$ on $\ccH$, and
$\tilde{\gamma}\in C^\infty_c([0,1))$, such that
\[
        \tilde\gamma(r)
        \sim
        \sum_{q=0}^\infty \gamma_q r^q,
        \quad
        \tilde\a(r,s)
        \sim
        \sum_{q=0}^\infty u_q(s)r^q,
        \quad
        \tilde\b(r,s)
        \sim
        \sum_{q=0}^\infty v_q(s)r^q .
\]

For any $r>0$, $s\in\ccH$, and $q\in\Z_{\geq0}$, we have
\[
        \frac{\p}{\p \bar{s}}
        \frac{\p^q}{\p r^q}J(r,s)
        =
        \frac{\p^q}{\p r^q}
        \frac{\p}{\p \bar{s}}J(r,s)
        =
        0.
\]

By \eqref{eq: computation of J r s}, for any $r>0$, $s\in\ccH$,
$p=0$ or $1$, and $q\in\Z_{\geq0}$, we have
\[
        \frac{\p^p}{\p s^p}
        \frac{\p^q}{\p r^q}J(r,s)
        =
        \frac{\p^q}{\p r^q}
        \frac{\p^p}{\p s^p}J(r,s)
        =
        \frac{1}{2\pi\sqrt{-1}}
        \int_{L_d}
        \frac{\p^p}{\p s^p}
        \Bigg[
        \prod_{\ell=0}^{q-1}
        \big(s-z-\ell\big)
        h_s(z)r^{s-z-q}
        \Bigg]\,dz.
\]
Since the residue of
$
        \prod_{\ell=0}^{q-1}
        \big(s-z-\ell\big)h_s(z)
$
is determined by the residue of $h_s(z)$, using the estimate
\eqref{eq: estimate of h s} and repeating the argument above, we see that the
termwise partial derivative of the right-hand side of
\eqref{eq: asym of J s r0} gives the asymptotic behavior of
\[
        \frac{\p^p}{\p s^p}
        \frac{\p^q}{\p r^q}J(r,s)
        =
        \frac{\p^q}{\p r^q}
        \frac{\p^p}{\p s^p}J(r,s)
\]
near $r=0$. As a result,
\[
        \gamma(r,s)
        :=
        J(r,s)
        -
        \ln(r,s)\tilde{\gamma}(r)
        -
        \tilde{\a}(r,s)
        -
        r^s\tilde{\b}(r,s)
\]
is smooth in $r$ on $[0,+\infty)$ (in fact, flat at $r=0$) and holomorphic in
$s$ on $\ccH$.

Setting
\[
        a=\tilde{\gamma},\quad b=\tilde{\b},\quad c=\tilde{\a}+\gamma,
\]
finishes the proof of item \eqref{lem 45 item 2}. The proof of item \eqref{lem45 item1} is similar.
\end{proof}

Now we have:
\begin{prop}\label{thm41}
Let $U\subset \C^n$ be open and $C\in S_{\sm}(U\times U)$. Let $\eta\in C_C^{\infty}([0,1))$. Then there
exist polynomials
\[
        \alpha_k(\tilde{\bfy},d\tilde{\bfy}),
    \tilde\alpha_k(\tilde{\bfy},d\tilde{\bfy}),\tilde{\tilde\alpha}_k(\tilde{\bfy},d\tilde{\bfy}),\quad
        k=0,\ldots,l,
\]
and smooth differential forms
\[
        \beta_k,\tilde\beta_k, \tilde{\tilde\beta}_k
        \in \Gamma(\widetilde E\boxtimes\widetilde E),
        \quad
        k=0,\ldots,l,
\]
depending holomorphically on $s\in\ccH$, such that, for $s\in\ccH$,
\begin{equation}\label{thm41eq0}
\begin{aligned}
        &\frac{1}{\Gamma(s+1)}
        \int_0^{+\infty}
        t^s\eta(t)e^{-\frac{\rho^2}{2t}}\,C
                      =
        \sum_{k=0}^{l}
        \alpha_k(\tilde{\bfy},d\tilde{\bfy})\,\beta_k
+\tilde\alpha_k(\tilde{\bfy},d\tilde{\bfy})\,\tilde\beta_k
        \ln(\rho^2,s)+\tilde{\tilde\alpha}_k(\tilde{\bfy},d\tilde{\bfy})\,\tilde{\tilde\beta}_k\rho^{2s},
\end{aligned}
\end{equation}
where
\[
        \tilde{\bfy}
        =
        (\tilde y_1,\ldots,\tilde y_n),
        \quad
        \tilde y_i
        =
        \frac{2(\bar z_i-\bar w_i)}
        {\rho^2(\bfz,\bar{\bfz},\bfw,\bar{\bfw})}.
\]
Here the integral is taken over the $t$-variable. More precisely, if
\[
        w=\alpha+\beta dt,
        \quad
        \alpha,\beta\in
        C^\infty\big((0,+\infty);
        \Gamma(\widetilde E\boxtimes\widetilde E)\big),
\]
then
\[
        \int_0^{+\infty} w
        :=
        \int_0^{+\infty}\beta\, dt .
\]
An analogous expression also holds on any disk of radius $1/2$ centered at a
positive integer $j$ with
$\ln(\rho^2,s)$ in \eqref{thm41eq0} replaced respectively by
$\ln(\rho^2,s-j)$.

\end{prop}

\begin{proof}
 It suffices to consider the case where
\begin{equation}\label{thm41eq1}
        t^s\eta C_1
        =
        a(\bfy,d\bfy)t^s b_1
\quad\text{or}\quad
        t^s\eta C_2
        =
        a(\bfy,d\bfy)t^s b_2\,dt,
\end{equation}
where $a$ is a monomial of degree $m\geq0$, and
\[
        b_1,b_2\in
        C_c^\infty\big([0,+\infty),
        \Gamma(\widetilde E\boxtimes\widetilde E)\big).
\]
Put
$
        u=\frac{\rho^2}{2t}.
$
We have,  for some polynomials $a_1$ and $a_2$,
\[
        a(\bfy,d\bfy)
        =
        a\big(u\tilde{\bfy},d(u\tilde{\bfy})\big)
        =
        a_1(\tilde{\bfy},d\tilde{\bfy})u^m
        +
        a_2(\tilde{\bfy},d\tilde{\bfy})u^{m-1}\,\Big(\frac{\rho d\rho-u dt}{t}\Big),
        \quad m\geq 0 .
\]
Thus,
\[\ba
        &\quad a(\bfy,d\bfy)dt
        =
        a_1(\tilde{\bfy},d\tilde{\bfy})u^{m}dt
        +
        a_2(\tilde{\bfy},d\tilde{\bfy})(\rho\, d\rho)\,u^{m-1}\,\frac{dt}{t},
        \quad m\geq 0 .
\ea\]
Thus, by \Cref{lem: integrate good part} 
\[\int a(\bfy, d\bfy)t^s b_1=a_2(\tilde{\bfy},d\tilde{\bfy})\rho^{2m}2^{-m}\int_0^\infty e^{-\frac{\rho^2}{2t}} t^{s-m-1}b_1 dt\]
and
\[\ba\int a(\bfy, d\bfy)t^s b_2 dt&=a_1(\tilde{\bfy},d\tilde{\bfy})\rho^{2m}2^{-m}\int_0^\infty e^{-\frac{\rho^2}{2t}} t^{s-m}b_2 dt,\\
&+a_2(\tilde{\bfy},d\tilde{\bfy})(\rho d\rho)\rho^{2m-2}2^{-m+1}\int_0^\infty e^{-\frac{\rho^2}{2t}} t^{s-m-2}b_2 dt,\\
&=a_1(\tilde{\bfy},d\tilde{\bfy})\rho^{2m}2^{-m}\int_0^\infty e^{-\frac{\rho^2}{2t}} t^{s-m-1}(tb_2) dt,\\
&+a_2(\tilde{\bfy},d\tilde{\bfy})(\rho d\rho)\rho^{2(m-1)}2^{-m+1}\int_0^\infty e^{-\frac{\rho^2}{2t}} t^{s-(m-1)-1}b_2 dt,
\ea\]
are of the form \eqref{thm41eq0}.
(Also note that, when $m=0$, we have $a_2=0.$)

\end{proof}

    \def\Bl{\mathrm{Bl}}
	\def\ta{{\tilde{\a}}}
	\def\tb{{\tilde{\b}}}
	\def\ty{{\tilde{y}}}
	
	Let $U\subset\C^n$ be an open set and $\triangle:=\{(\bfz,\bfz):\bfz\in U\}$ denotes the diagonal of $U\times U$. The blow up of $U\times U$ is denoted by $\Bl_{\triangle}(U \times U)$, and we have a canonical map $p:\Bl_{\triangle}(U \times U)\rightarrow U\times U$, such that $p^{-1}(\triangle)$ is the exceptional divisor on $\Bl_{\triangle}(U \times U)$.
	
	The following proposition is the main result of this subsection:
	\begin{prop}\label{lem44}
		Let $C\in S_{\sm}(U\times U)$. Set
		\[
		\mathcal P_C(s)
		:=
		\frac{1}{\Gamma(s+1)}
		\int_0^{+\infty} t^s\eta(t)e^{-\frac{\rho^2}{2t}}C .
		\]
		Then the following holds. 
		
		\begin{enumerate}[(1)]
			\item For $s\in\cH$, $\mathcal P_C(s)$ and all its holomorphic
			derivatives have generalized divisorial type singularities along
			$p^{-1}(\triangle)$, in the sense of \Cref{defn of extended divisor}.

			\item The pullback
			\[
			\triangle^*\mathcal P_C(s)
			:=
			\frac{1}{\Gamma(s+1)}
			\lim_{\epsilon\to0}
			\triangle^*
			\left(
			\int_{\epsilon}^{\infty}
			t^s\eta(t)e^{-\frac{\rho^2}{2t}}C
			\right)
			\]
			is a smooth bundle-valued differential form on $U$, depending
			holomorphically on $s\in\cH$. 
		\end{enumerate}
		Here
		``holomorphic derivatives'' means derivatives with respect to the $U\times U$ component, not with respect to the
		zeta parameter $\bfs$.
	\end{prop}
	
	\begin{proof}
		The proof is the zeta-regularized analogue of
		\cite[Proposition 3.17]{WYFeynman}. We work only on $\ccH$; the proofs of the corresponding statements for the balls of radius $1/2$ centered at positive integers are similar. By \Cref{thm41}, for $s\in\ccH$ we have
		a local expansion of the form
		\[
		\mathcal P_C(s)
		=
		 \sum_{k=0}^{l}
        \alpha_k(\tilde{\bfy},d\tilde{\bfy})\,\beta_k
+\tilde\alpha_k(\tilde{\bfy},d\tilde{\bfy})\,\tilde\beta_k
        \ln(\rho^2,s)+\tilde{\tilde\alpha}_k(\tilde{\bfy},d\tilde{\bfy})\,\tilde{\tilde\beta}_k\rho^{2s}.
		\]

		It remains only to check that $\alpha_k,\tilde{\alpha}_k$ appearing in this expansion
		have the asserted singularity type after pulling back to
		$\Bl_{\triangle}(U\times U)$. This is local on the blow-up. Hence, as in
		\cite[Proposition 3.17]{WYFeynman}, we may work near $(0,0)\in\C^n\times\C^n$.
		On the chart of $\Bl_{\triangle}(\C^n\times\C^n)$ where
		$z_1-w_1$ is the exceptional coordinate, we use coordinates
		\[
		z_1-w_1,\quad
		\lambda_2,\ldots,\lambda_n,\quad
		w_1,\ldots,w_n,
		\quad
		z_i-w_i=(z_1-w_1)\lambda_i,\quad i\ge2.
		\]
		By the proof of \cite[Lemma 2.25]{WYFeynman}, after shrinking the coordinate
		neighborhood if necessary,
		\[
		\rho^2
		=
		|z_1-w_1|^2\,F
		\]
		on this chart, where $F$ is a positive smooth function. Therefore
		\[
		\rho^{2s}
		=
		|z_1-w_1|^{2s}F^s,
		\]
		with $F^s$ smooth and holomorphic in $s$. Moreover,
		\[
		\ln(\rho^2,s)
		=
		F^s\ln(|z_1-w_1|^2,s)+\ln(F,s).
		\]
		The second term is smooth, while the first one is a generalized logarithmic
		factor.
		
		On the same chart, each component of $\tilde{\bfy}$ has the form
		\[
		\tilde y_i
		=
		\frac{\bar z_i-\bar w_i}{\rho^2}
		=
		\frac{1}{z_1-w_1}\,F_i,
		\]
		where $F_i$ is smooth. Hence $\tilde y_i$ has divisorial type singularities
		along $p^{-1}(\triangle)$. The same is true for $d\tilde y_i$, since
		\[
		d\tilde y_i
		=
		-\frac{d(z_1-w_1)}{(z_1-w_1)^2}F_i
		+
		\frac{1}{z_1-w_1}dF_i .
		\]
		It follows that every term in the above expansion has generalized divisorial
		type singularities along $p^{-1}(\triangle)$. 
		
		For holomorphic derivatives, we use the same observation as in
		\cite[Proposition 3.17]{WYFeynman}: if $C\in S_{\sm}(U\times U)$, then
		\[
		e^{\frac{\rho^2}{2t}}
		\partial_{z_i}
		\big(e^{-\frac{\rho^2}{2t}}C\big),
		\quad
		e^{\frac{\rho^2}{2t}}
		\partial_{w_i}
		\big(e^{-\frac{\rho^2}{2t}}C\big)
		\]
		have again regular expressions. 
		
		Finally, on the diagonal we have $\rho=0$, and the pullback of every factor
		$(\bar z_i-\bar w_i)/t$ and $d((\bar z_i-\bar w_i)/t)$ is zero. Thus
		$\triangle^*C$ is a smooth $t$-dependent form extending to $t=0$. Since
		$\Re(s)\ge0$, the integral
		\[
		\frac{1}{\Gamma(s+1)}
		\int_0^{+\infty} t^s\eta(t)\triangle^*C
		\]
		converges and depends holomorphically on $s$. The same argument applies after
		taking holomorphic derivatives. This proves the proposition.
	\end{proof}

	\subsection{Local generalized Cauchy principal values}
	
	We will prove 
	\begin{thm}\label{thm-a2}
		Given the following data:
		\begin{enumerate}[(a)]
			\item   A compactly supported $(m+n,m+n)$ differential form $\alpha$ on $\mathbb{C}^{m+n}$ with generalized divisorial type singularities  along $z_1 \cdots z_m=0$ in the sense of \Cref{defn of extended divisor} above.
			\item A holomorphic function $z^{\bfj}$, where $\bfj \in\mathbb{Z}_{>0}^m$.
			\item A nowhere vanishing smooth function $f$ on a neighborhood of support of $\a$.
			
		\end{enumerate}
		Then the following limit exists for any $\bfs\in \cH^\ell$:
		\be\label{thm-a-eq}\Psi(\bfs):=\lim_{\delta\to0}\int_{|f\bfz^\bfj|>\delta}\a .\ee
		Moreover, the limit is independent of the choice of $\bfj$ and smooth function $f$.  We will also denote the limit as
		\[\dashint_{\C^{m+n}}\a\]
		Lastly, $\Psi$ can be extended to be a holomorphic function  on $\cH^\ell$.
	\end{thm}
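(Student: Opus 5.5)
By linearity and a partition of unity, I reduce to a single term $\alpha=\bfz^{-\bfi}\,g(\bfz,\bfs)$, with $g$ one of the admissible products $\ln^{\bfL}(\bfz,\bfk,\bfs)\,r^{\bfL'}(\bfz,\bfl,\bfs)\,\gamma(\bfz,\bfs)$ from \Cref{defn of extended divisor}. I treat first the region $\ccH^\ell$ and then each disc $|\bfs-\bfj|<\frac12$, where the shifted logarithms $\ln^{\bfL}(\bfz,\bfk,\bfs-\bfj)$ play the same role. The plan follows the non-zeta case of \cite{WYFeynman} and adds uniformity in $\bfs$. I use polar coordinates $z_a=\rho_ae^{\sqrt{-1}\theta_a}$ for $a\le m$. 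Since $\Re(L'_a(\bfs))\ge0$, the factor $r^{\bfL'}$ equals $\prod_a\rho_a^{2l_aL'_a(\bfs)}$ and is bounded. Each generalized logarithm $\ln(\rho_a^2,L_a(\bfs))$ is bounded by $C(1+|\ln\rho_a|)$, uniformly for $\bfs$ in compact subsets of the relevant region; this holds because $(u^{s}-1)/s$ is controlled by $|\ln u|\,\sup_{0\le\sigma\le1}u^{\sigma\Re s}$ and $u\le1$ on the support.

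Step one is the angular cancellation. I Taylor expand $\gamma$ in $(z_a,\bar z_a)_{a\le m}$ to an order exceeding $|\bfi|$. The monomial $z^{\mathbf p}\bar z^{\mathbf q}$ times $\bfz^{-\bfi}$ has angular factor $e^{\sqrt{-1}\langle\mathbf p-\mathbf q-\bfi,\theta\rangle}$, and the radial factors (logs and $r$) do not depend on $\theta$. The remainder term is therefore $O(\prod\rho_a^{N})$ times logarithms, hence absolutely integrable, and its integral is holomorphic in $\bfs$ by dominated convergence and Morera. Each Taylor term contributes a product of one-variable radial integrals $\int_{0}^{R}\rho^{c}\,(\ln(\rho^2,\cdot))^{k}\rho^{2lL'(\bfs)}\,d\rho$ together with an angular integral. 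The angular integral vanishes unless the exponent is zero; in that case the radial power is $\rho^{2q_a+1}$ after including the Jacobian, which is integrable. The singular terms are those with nonzero angular frequency, and for them the exclusion region is the obstacle.

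Step two is the independence of the cutoff. The region $|f\bfz^{\bfj}|>\delta$ is not a product of discs, so the angular cancellation is not exact on it. I rewrite the region as $\sum_a j_a\ln\rho_a>\ln\delta-\ln|f|$. Since $f$ is smooth and nowhere zero, I Taylor expand $\ln|f|$ in the $z_a$ as well. For a fixed radial vector $(\rho_a)$, the boundary then depends on the angles only through terms of order $O(\rho)$. The difference between the integral over this region and the integral over the $\theta$-independent region $\sum j_a\ln\rho_a>\ln\delta-\ln|f(0,\cdot)|$ is supported in a shell of logarithmic width $O(\max\rho_a)$. On that shell the integrand is at most $\prod\rho_a^{-i_a}\cdot$(logs). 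I expect this to be the main obstacle. The plan is to iterate the expansion: each order of the angle-dependent correction carries a Fourier mode that cancels against the $\bfz^{-\bfi}$ modes unless it is paired with enough powers of $\rho$. This is the same bookkeeping as in \cite[Appendix]{WYFeynman}, with the generalized logarithms carried along as bounded-by-log weights. Once the region is angle-independent, the nonzero modes integrate to exactly zero for every $\delta$. The limit $\delta\to0$ then exists and equals the integral of the zero-mode part, which is absolutely convergent. This shows independence of both $\bfj$ and $f$.

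Step three is holomorphy. For every $\delta>0$ the truncated integral $\Psi_\delta(\bfs)$ is holomorphic on the interior. The bounds above show $|\Psi_\delta-\Psi|\le C\,\delta^{\epsilon}$, or $C(\ln\delta)^{K}\delta^{\epsilon}$, uniformly for $\bfs$ in compact subsets of each chart region of $\cH^\ell$. Uniform limits of holomorphic functions are holomorphic, and continuity up to $\Re s=0$ follows in the same way. On overlaps of $\ccH^\ell$ with the discs, the two representations define the same function, since both equal the limit \eqref{thm-a-eq}. This gives a holomorphic extension of $\Psi$ to all of $\cH^\ell$.
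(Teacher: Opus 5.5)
Your Step two is the core of the theorem, and as written it does not go through when $m\ge 2$. The cutoff hypersurface $\{|f\bfz^{\bfj}|=\delta\}$ constrains the radii only through $\prod_a\rho_a^{j_a}|f|=\delta$, so it does not shrink to the origin. For every $\delta$ it contains points where, for example, $\rho_1\sim\delta^{1/j_1}$ while $\rho_2$ is of order one. At such points the angle-dependent correction $\ln|f|-\ln|f(0,z'')|$ is of order one. So the shell separating $\{|f\bfz^{\bfj}|>\delta\}$ from the torus-invariant region $\{|f(0,z'')\bfz^{\bfj}|>\delta\}$ is not thin, and there is no small parameter in which to ``iterate the expansion.''

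More fundamentally, to offset $\prod_a\rho_a^{-i_a}$ you need vanishing of order $i_a$ in each variable separately. An expansion of $\ln|f|$ (or of $\gamma$) around $z_1=\cdots=z_m=0$ only gives remainders of size $(\max_a\rho_a)^K$. Such remainders are not integrable against $\bfz^{-\bfi}$ near a component $\{z_a=0\}$ away from the origin. The same defect appears in your Step one claim that the Taylor remainder is $O(\prod_a\rho_a^N)$. That one is repairable: use that the $k$-th Fourier mode in $\theta_a$ of a smooth function is $\rho_a^{|k|}$ times a smooth function of $\rho_a^2$, or use the per-variable splitting of \Cref{lem-a5}. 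Step two, however, needs a genuinely new argument, for instance freezing $f$ only in the small variables, one variable at a time, while controlling the mixed terms. (For $m=1$ your shell argument does work, since the cutoff curve then shrinks to the origin.) The citation does not fill the gap: the results of \cite[Appendix B]{WYFeynman} that this paper uses are boundary-vanishing statements combined with Stokes' theorem, not comparisons of cutoff regions. Step three's uniform rate $|\Psi_\delta-\Psi|\le C\delta^{\epsilon}$ depends entirely on this missing estimate.

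The paper never compares two cutoff regions. It argues by induction on $(|\bfi|,|\bfk|)$ in lexicographic order. It applies Stokes' theorem to $\ln(\bfz,\bfk,\bfs)\,\psi g\,b$ with $b=\frac{1}{1-i_1+l_1s_1}\,r(\bfz,\bfl,\bfs)\,\bfz^{-\bfi}z_1\,d\bar z_1\wedge d\bfz_{\hat 1}\wedge d\bar{\bfz}_{\hat 1}$. This recovers the integrand up to terms of strictly lower type. All dependence on the cutoff is thereby pushed into boundary integrals over $\{|f\bfz^{\bfj}|=\delta\}$, which tend to zero by \Cref{prop-a7}. That proposition rests on exact torus symmetry on $S^{\bfj}_\delta$ for the low Taylor terms (\Cref{lema4}) and a Herrera--Lieberman-type bound for the rest (\Cref{lema3}). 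The base case $i_a\le 1$ is Lebesgue integrable. Existence of the limit, independence of $f$ and $\bfj$, and holomorphy in $\bfs$ then all propagate through the induction. You should either adopt this reduction, or give a complete proof that
\[
\int_{\{|f\bfz^{\bfj}|>\delta\}}\alpha-\int_{\{|f(0,z'')\bfz^{\bfj}|>\delta\}}\alpha\longrightarrow 0,
\]
with a rate uniform for $\bfs$ in compact sets.
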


   For simplicity, from now on we work only in the case
\[
        m=\ell
        \quad\text{and}\quad
        \bfL=\bfL'=\id .
\]
The general case follows by essentially the same argument.
    We also work only on $\ccH^m$; the proofs of the corresponding statements for the balls of radius $1/2$ centered at points in $\Z_{\geq0}^m\setminus\{0\}$ are similar.

	Let $$B:=\{\bfz\in \C^{m+n}:\sup_{1\leq i\leq m+n}|z_i|\leq1\}.$$
	For each $\bfj\in \Z_{>0}^m$, let $$S^\bfj_\delta:=\big\{\bfz\in B:|\bfz^\bfj|=\delta\big\}\quad\text{ and } \quad B_\delta^\bfj:=\big\{\bfz\in B:|\bfz^\bfj|\in(\delta,1)\big\}.$$
	\def\hi{{\hat{i}}}

	Similar to \cite[Lemma B.4]{WYFeynman}(or \cite[Lemma 6.4]{herrera1971residues}), we have
	\begin{lem}\label{lema3}
		Let $\phi\in L^{+\infty}(\C^{m+n})$, then for any $\bfi,\bfk,\bfl\in \Z_{\geq0}^m,\bfj\in\Z_{>0}^m$, any $\bfs\in \ccH^m$:
		\be\ba\label{lema3eq1}
		\lim_{\delta\to0}\int_{S^\bfj_\delta} \phi(\bfz,\bar\bfz)\ln(\bfz,\bfk,\bfs)r(\bfz,\bfl,\bfs) &dz_i\wedge d\bfz_{\hi}\wedge d\bar\bfz_{\hi}=0,\\
		\lim_{\delta\to0}\int_{S^\bfj_\delta} \phi(\bfz,\bar\bfz) \ln(\bfz,\bfk,\bfs)r(\bfz,\bfl,\bfs)&d\bar{z}_i\wedge d\bfz_{\hi}\wedge d\bar\bfz_{\hi}=0.
		\ea\ee
		Here $d\bfz_{\hi}:=dz_1\wedge\cdots \wedge dz_{i-1}\wedge dz_{i+1}\cdots \wedge dz_{m+n}$ and $d\bar{\bfz}_{\hi}:=\overline{d\bfz_{\hi}},\, i\in\{1,2,\cdots,m\}.$ 
	\end{lem} 
	\begin{proof}
		We may as well assume that \be\label{eq: phi support in B}
        \text{$\phi$ is supported on $B.$}
        \ee
		For $\mathbf{j} = (j_1, \dots, j_m)$, set $|\mathbf{j}| := j_1 + \dots + j_m$.

		Note that if $|z|\le 1$ and $\Re(s)\ge 0$, then, by the mean value theorem
		for convex domains (viewing $|z|^{2s}$ as a function of $s$),
		\be\label{bounded by log}
		\Big|\frac{|z|^{2s}-1}{s}\Big|
		\le
		\big|\ln(|z|^2)\big|.
		\ee

Consider the function
\[
        \phi_{\bfs}(\bfz,\bar\bfz)
        :=
        \phi(\bfz,\bar\bfz)r(\bfz,\bfl,\bfs)\frac{\ln(\bfz,\bfk,\bfs)}{\ln(\bfz,\bfk)}
\]
Then 
      \[\phi(\bfz,\bar\bfz)\bfz^{-\bfi}\ln(\bfz,\bfk,\bfs)r(\bfz,\bfl,\bfs)=\phi_\bfs(\bfz,\bar\bfz)\bfz^{-\bfi}\ln(\bfz,\bfk)\]

        By \eqref{eq: phi support in B} and \eqref{bounded by log}, we have
        \[\phi_{\bfs}(\bfz,\bar\bfz)\in L^\infty(\C^{m+n}).\] The lemma then follows from \cite[Lemma B.4]{WYFeynman}.
		
	\end{proof}
	The argument in \cite[Lemma~B.5]{WYFeynman} applies verbatim, with
$\ln(\bfz,\bfk)$ replaced by $\ln(\bfz,\bfk,\bfs)$, giving the lemma
below. The main idea is to use symmetry to prove the vanishing of the relevant
integrals. A key observation is that $\ln(\bfz,\bfk,\bfs)$ and $r(\bfz,\bfl,\bfs)$ are
invariant under the $S^1$-action
$
        (e^{i\theta},\bfz)\mapsto e^{i\theta}\bfz .
$
	
	\begin{lem}\label{lema4}
		Let $g \in C^{\infty}\big(\C^{m+n}\big)$ be independent of $z_l,\, l\in\{1,2.\cdots,m\}$, and let
		$\bfi,\bfk,\bfl \in \Z_{\geq0}^m$, $\bfj\in\Z_{>0}^m$, $\bfs\in \ccH^m$. Then,
		for all $a,b \in \Z_{\geq0}$ such that $a+b<i_l$, for each $\delta>0$,
		the following holds:
		\begin{enumerate}[(1)]
			\item
			\[
			\int_{S_\delta^\bfj}
			z_l^a \bar{z}_l^b \bfz^{-\bfi}\ln(\bfz,\bfk,\bfs)r(\bfz,\bfl,\bfs)\, g\,
			d\bar{z}_l \wedge d\bfz_{\hat{l}} \wedge d\bar{\bfz}_{\hat{l}}
			=0;
			\]
			\item
			if $b\geq1$, then
			\[
			\int_{S_\delta^\bfj}
			z_l^a \bar{z}_l^b \bfz^{-\bfi}\ln(\bfz,\bfk,\bfs)r(\bfz,\bfl,\bfs)\, g\,
			dz_l \wedge d\bfz_{\hat{l}} \wedge d\bar{\bfz}_{\hat{l}}
			=0.
			\]
		\end{enumerate}
			Here $d\bfz_{\hat{l}}:=dz_1\wedge\cdots \wedge dz_{l-1}\wedge dz_{l+1}\cdots \wedge dz_{m+n}$ and $d\bar{\bfz}_{\hat{l}}:=\overline{d\bfz_{\hat{l}}}.$ 
	\end{lem}

	We have
	\begin{lem}[Lemma B.6 in \cite{WYFeynman}]\label{lem-a5}
		Let $k\in C^{\infty}(\C^{m+n})$, and $\bfi\in\Z_{\geq0}^m$. There exists a decomposition
		$$
		\begin{aligned}
			& k(\bfz,\bar{\bfz})=\sum_{a+b<i_l} z_{l}^a \bar{z}_{l}^b g_{a, b}^l(\bfz_{\hat{l}}, \bar{\bfz}_{\hat{l}})+K^l(\bfz,\bar{\bfz}), \\
			& K^l(\bfz,\bar{\bfz})=\sum_{\bfk+\bfj=\bfi}\bfz^\bfj \cdot \bar{\bfz}^\bfk \cdot K^l_{\bfj, \bfk}(\bfz,\bar{\bfz})
		\end{aligned}
		$$
		such that $g_{a, b}^l$ and $K_{\bfj, \bfk}^l$ are smooth functions.
	\end{lem}

	Proceeding as in \cite[Proposition 6.5]{herrera1971residues}, using \Cref{lema3}-\Cref{lem-a5}, we have:
	\begin{prop}\label{prop-a5}
		For any $\bfi,\bfk,\bfl \in \Z_{\geq0}^m,\bfj\in\Z_{>0}^m,\bfs\in \ccH^m$ and $g \in C^{\infty}\left(\mathbb{C}^{m+n}\right)$,
		we have
		\be\label{prop-a-eq3}
		\lim _{\delta \rightarrow 0} \int_{S_\delta^\bfj} \bfz^{-\bfi} \ln(\bfz,\bfk,\bfs)r(\bfz,\bfl,\bfs)g d \bar{z}_1 \wedge d \bfz_{\hat{1}} \wedge d \bar{\bfz}_{\hat{1}}=0
		\ee
		and
		\be\label{prop-a-eq4}
		\lim _{\delta \rightarrow 0} \int_{S_\delta^\bfj} \bfz^{-\bfi} \ln(\bfz,\bfk,\bfs)\bar{z}_1 r(\bfz,\bfl,\bfs)g d {z}_1 \wedge d \bfz_{\hat{1}} \wedge d \bar{\bfz}_{\hat{1}}=0.
		\ee

	\end{prop} 
	
	We have
	\begin{cor}\label{cor-a5}
		For any $\bfi,\bfk,\bfl \in \Z_{\geq 0}^m$, $\bfj \in \Z_{>0}^m,\bfs\in\ccH^m$, and $g \in C_C^{\infty}(\C^{m+n})$, we have
		\be\label{cor-a-eq3}
		\lim_{\delta \to 0} 
		\int_{|\bfz^\bfj|=\delta} 
		\bfz^{-\bfi}\,\ln(\bfz,\bfk,\bfs)r(\bfz,\bfl,\bfs)\, g\;
		d\bar z_1 \wedge d\bfz_{\hat 1} \wedge d\bar{\bfz}_{\hat 1}
		=0,
		\ee
		and
		\be\label{cor-a-eq4}
		\lim_{\delta \to 0} 
		\int_{|\bfz^\bfj|=\delta} 
		\bfz^{-\bfi}\,\ln(\bfz,\bfk,\bfs)\,\bar z_1 r(\bfz,\bfl,\bfs)g\;
		d z_1 \wedge d\bfz_{\hat 1} \wedge d\bar{\bfz}_{\hat 1}
		=0.
		\ee
	\end{cor}
	
	\begin{proof}
		We may assume $g \in C_c^{\infty}(B)$.  
		Indeed, if not, there exists $N>1$ such that
		\[
		\mathrm{supp}(g) \subset
		B_N:=\{\bfz\in\C^{m+n}:\sup_{1\le j\le m+n}|z_j|<N\}.
		\]
		In this case, one verifies that \Cref{prop-a5} continues to hold when the
		set $B$ is replaced by $B_N$.
		
		Now assume $g \in C_c^{\infty}(B)$. Then
		\[
		\int_{|\bfz^\bfj|=\delta}
		\bfz^{-\bfi}\,\ln(\bfz,\bfk,\bfs)r(\bfz,\bfl,\bfs)\, g\;
		d\bar z_1 \wedge d\bfz_{\hat 1} \wedge d\bar{\bfz}_{\hat 1}
		=
		\int_{S_\delta}
		\bfz^{-\bfi}\,\ln(\bfz,\bfk,\bfs)r(\bfz,\bfl,\bfs)\, g\;
		d\bar z_1 \wedge d\bfz_{\hat 1} \wedge d\bar{\bfz}_{\hat 1},
		\]
		so \eqref{cor-a-eq3} follows directly.  
		The argument for \eqref{cor-a-eq4} is analogous.
	\end{proof}
	
	Proceeding as in \cite[Proposition~B.9]{WYFeynman}, it follows from
	\Cref{cor-a5} that
	
	\begin{prop}\label{prop-a7}
		Let $f$ be a real nowhere–vanishing smooth function on $\C^{m+n}$, and let
		$\bfj \in \Z_{>0}^m$, $\bfi,\bfk,\bfl \in \Z_{\ge0}^m$, $\bfs\in\ccH^m$.
		Then, for any $g \in C_c^{\infty}(\C^{m+n})$, we have
		\be\label{prop-a6-eq2}
		\lim_{\delta \to 0}
		\int_{|f\bfz^\bfj|=\delta}
		\bfz^{-\bfi}\,\ln(\bfz,\bfk,\bfs)r(\bfz,\bfl,\bfs)\, g\;
		d\bar z_1 \wedge d\bfz_{\hat 1} \wedge d\bar{\bfz}_{\hat 1}
		=0 .
		\ee
	\end{prop}
	
	\def\bfl{{\mathbf{l}}}
	It is straightforward to check that \Cref{thm-a2}, in the case where
$\bfL=\id$ and $\bfs\in\ccH^m$, follows directly from the result below:
	\begin{prop}\label{prop-a8}
		Let $\bfi,\bfk,\bfl \in \Z_{\ge0}^m$, $\bfs\in \ccH^m$ and $\bfj \in \Z_{>0}^m$.
		Let $f \in C^{\infty}(\C^{m+n})$ be a real nowhere–vanishing function.
		Then the limit
		\be\label{prop-a8-eq0}
		\Psi(\bfs):=\lim_{\delta \to 0}
		\int_{|f\bfz^\bfj|>\delta}
		\bfz^{-\bfi}\,\ln(\bfz,\bfk,\bfs)r(\bfz,\bfl,\bfs)\,\psi g\;
		d\bfz \wedge d\bar{\bfz}
		\ee
		exists and is independent of the choice of $f$ and $\bfj$, for every
		$g \in C_c^{\infty}(\C^{m+n})$ and every
		$\psi\in C^{\infty}(\cH^m\times \C^{m+n})$ which is holomorphic in  $\ccH^m$.
		
		Moreover, $\Psi$ can be extended to a holomorphic function on $\ccH^m$.
	\end{prop}
	
	\begin{proof}
  We call the integrand
$
        \bfz^{-\bfi}\,\ln(\bfz,\bfk,\bfs)r(\bfz,\bfl,\bfs)\,\psi g
$
a $(\bfi,\bfk)$-type (generalized) divisor-type singularity.

		If $0\le i_l\le1$ for $1\le l\le m$, then for $\bfs\in\ccH^m$
		\[
		\bfz^{-\bfi}r(\bfz,\bfl,\bfs)\,\ln(\bfz,\bfk,\bfs)\,\psi g
		\]
		is Lebesgue integrable on $\C^{m+n}$, so the limit
		\eqref{prop-a8-eq0} exists and is independent of $f$ and $\bfj$.
		Moreover, the derivatives of the integrand with respect to $\bfs$ are also
		Lebesgue integrable. Thus $\Phi$ is holomorphic in $\ccH^m$.
		
		We introduce a lexicographic order on $\Z_{\ge0}\times\Z_{\ge0}$ by
		\[
		(p_1,p_2)\le(q_1,q_2)
		\quad\Longleftrightarrow\quad
		\big(p_1<q_1\big)
		\ \text{or}\ 
		\big(p_1=q_1 \text{ and } p_2\le q_2\big).
		\]
		
		Assume inductively that the proposition holds for all pairs
		$(|\bfi|,|\bfk|)<(p_1,p_2)$.
		We prove it for $(|\bfi|,|\bfk|)=(p_1,p_2)$.
		
		The case $0\le i_l\le1$ is already handled above, so we may assume
		$i_1\ge2$.
		
		Consider
		\[
		b=
		\frac{1}{1-i_1+l_1s_1}\,
		r(\bfz,\bfl,\bfs)\bfz^{-\bfi} z_1\,
		d\bar z_1\wedge d\bfz_{\hat1}\wedge d\bar{\bfz}_{\hat1}.
		\]
		
		We compute:
		\[
		\begin{aligned}
			&\quad
			\int_{|f\bfz^\bfj|>\delta}
			r(\bfz,\bfl,\bfs)\bfz^{-\bfi}\ln(\bfz,\bfk,\bfs)\,\psi g\;
			d\bfz\wedge d\bar{\bfz}
			\\
			&=
			\int_{|f\bfz^\bfj|>\delta}
			d\Big(\ln(\bfz,\bfk,\bfs)r(\bfz,\bfl,\bfs)\,\psi g\, b\Big)
			-
			\int_{|f\bfz^\bfj|>\delta}
			(\partial_{z_1}\ln(\bfz,\bfk,\bfs))\,r(\bfz,\bfl,\bfs)\psi g\;dz_1\wedge b
			\\&-
			\int_{|f\bfz^\bfj|>\delta}
			\ln(\bfz,\bfk,\bfs)\,\partial_{z_1}(r(\bfz,\bfl,\bfs)\psi g)\;dz_1\wedge b
			\\
			&=
			\int_{|f\bfz^\bfj|=\delta}
			r(\bfz,\bfl,\bfs)\ln(\bfz,\bfk,\bfs)\,\psi g\, b
			-
			\int_{|f\bfz^\bfj|>\delta}
			(\partial_{z_1}\ln(\bfz,\bfk,\bfs))\,r(\bfz,\bfl,\bfs)\psi g\;dz_1\wedge b
			\\&-
			\int_{|f\bfz^\bfj|>\delta}
			\ln(\bfz,\bfk,\bfs)\,\partial_{z_1}(r(\bfz,\bfl,\bfs)\psi g)\;dz_1\wedge b
			\\
			&=: I_1(\delta,\bfs)+I_2(\delta,\bfs)+I_3(\delta,\bfs).
		\end{aligned}
		\]
		By \Cref{prop-a7},
		\[
		\lim_{\delta\to0}I_1(\delta,\bfs)=0.
		\]
        
		For $I_2(\delta,\bfs)$, the integrand is of $(\bfi,\bfk')$-type  with $|\bfk'|\leq|\bfk|-1$
		(since $\p_{z}\big(\frac{|z|^{2s}-1}{s}\big)^k
		=k\big(\frac{|z|^{2s}-1}{s}\big)^{k-1}|z|^{2s}z^{-1}$).
		By the induction hypothesis,
		$\lim_{\delta\to0}I_2(\delta,\bfs)$ exists, is independent of $f$
		and $\bfj$, and is holomorphic in $\bfs\in\ccH^m$.
		
		For $I_3(\delta,\bfs)$, the integrand is of $(\bfi',\bfk)$–type with $|\bfi'|\leq|\bfi|-1$.
		Again by the induction hypothesis,
		$\lim_{\delta\to0}I_3(\delta,\bfs)$ exists, is independent of $f$
		and $\bfj$, and is holomorphic in $\bfs\in\ccH^m$.
		
		Since $\Z_{\ge0}\times\Z_{\ge0}$ is well ordered by the above
		lexicographic order, the claim follows by transfinite induction.
	\end{proof}

	\subsection{Global generalized Cauchy principal values}
	
	We now pass from the local Cauchy principal value theorem to the global form.

	\begin{defn}\label{defn:smooth defining function global}
		Let $X$ be a complex manifold with a simple normal crossing divisor $D$.
		A \emph{smooth defining function} of $D$ is a smooth function
		\[
		h:X\longrightarrow \R_{\ge0}
		\]
		such that $h^{-1}(0)=D$, and for each point $p\in D$, there exists a
		coordinate neighborhood $U$ of $p$, compatible with $D$, a positive
		smooth function $f\in C^{\infty}(U)$, and a holomorphic function $g$ on $U$, such that
		\[
		h|_U=f\,g\bar g,
		\quad
		g^{-1}(0)=D\cap U.
		\]
	\end{defn}
	
	\begin{defn}\label{defn: compatible coordinate change}
		Let $X$ be a complex manifold with a simple normal crossing divisor $D$. A coordinate chart \textbf{compatible with} $D$ is an open subset $U\subset X$ together with a biholomorphic map \[\varphi:U\rightarrow V\subset\mathbb{C}^{m+n},\] where $V$ is an open subset of $\mathbb{C}^{m+n}$, and $m,n$ are non-negative integers, such that $\varphi(D\cap U)=(\{z_1\cdots z_m=0\}\times \mathbb{C}^{n})\cap V$. We say that a property \textbf{holds locally}, if it holds in any coordinate chart compatible with $D$.
	\end{defn}

	\begin{defn}\label{divisorial type singularities}
		 Let $X$ be a closed complex manifold and let $D\subset X$ be a simple
	normal crossing divisor.  We say that a differential form
	$
	\alpha(\bfs)\in \Omega^\bullet(X\setminus D)
	$ parametrized by $\bfs\in\cH^\ell$ for some $\ell\in\Z_{>0}$
	has \emph{global generalized divisorial type singularities along $D$}, if this property holds in
	every coordinate chart compatible with $D$, in the sense of
	\Cref{defn of extended divisor}.
	\end{defn}
    
    \begin{defn}\label{defn:global generalized cpv}
		Let $X$ be a closed complex manifold with a simple normal crossing divisor
		$D$, and let $\alpha(\bfs)$ be a differential form on
		$X\setminus D$ with global generalized divisorial type singularities along
		$D$.  Choose a smooth defining function $h$ of $D$.  We define the
		generalized Cauchy principal value of $\alpha(\bfs)$ by
		\[
		\dashint_X \alpha(\bfs)
		:=
		\lim_{\epsilon\to0}
		\int_{\{h>\epsilon\}}\alpha(\bfs),
		\]
		provided the limit exists.
	\end{defn}
	
	\begin{prop}[Global generalized Cauchy principal value]
		\label{prop:global generalized cpv}
		Let $X$ be a closed complex manifold and let $D\subset X$ be a simple
		normal crossing divisor.  Suppose that $\alpha(\bfs)$ is a top-degree
		differential form on $X\setminus D$ with global generalized divisorial type
		singularities along $D$, depending holomorphically on $\bfs\in\cH^\ell$.
		Then, for any smooth defining function $h$ of $D$, the limit
		\[
		\lim_{\epsilon\to0}
		\int_{\{h>\epsilon\}}\alpha(\bfs)
		\]
		exists and defines a holomorphic function on $\cH^\ell$.
		Moreover, this holomorphic function is independent of the choice of the
		smooth defining function $h$.
	\end{prop}
	
	\begin{proof}
		Choose a finite coordinate cover $\{U_i\}$ of $X$ compatible with the
		simple normal crossing divisor $D$, and choose a partition of unity
		$\{\rho_i\}$ subordinate to this cover. By assumption, each
		$\rho_i\alpha(\bfs)$ has compact support in $U_i$ and admits a local
		generalized divisorial expression of the type considered in \Cref{thm-a2}.
		Therefore
		\[
		\dashint_{U_i}\rho_i\alpha(\bfs)
		\]
		is well-defined and depends holomorphically on $\bfs\in\cH^\ell$. Since the cover
		is finite, the sum
		\[
		\sum_i
		\dashint_{U_i}\rho_i\alpha(\bfs)
		\]
		is again holomorphic on $\cH^\ell$.
		
		The proof that the limit  is independent of $h$, the open cover, and the partition of unity, follows exactly the same argument as in \cite[Definition B.15]{WYFeynman}.
	\end{proof}

	\subsection{Proof of the main theorem}
	
	\begin{proof}[Proof of \Cref{thm:zeta graph integral holomorphic continuation}]\def\cU{{\mathcal{U}}}

		\Cref{lem44} shows that, the kernel
		$P^{s_e}$ has generalized divisorial type singularities along the diagonal.
		Therefore, after
		pulling back to the Fulton-MacPherson compactification
		$\widetilde{\Conf}_{\vG_0}(M)$, the form
		\[
		\check{W}(\vG,\Phi,\bfs),\quad\bfs\in\cH^{|\vG_1|}
		\]
		has generalized divisorial type singularities along
		$
		\widetilde{\Conf}_{\vG_0}(M)\setminus \Conf_{\vG_0}(M).
		$
		 Indeed, this follows from
		\Cref{lem44}, the stability of generalized divisorial type singularities under
		holomorphic pullback, and the fact that the operators $I_v$ associated with the vertices are holomorphic differential operators. This is exactly the same argument as in
		the proof of \cite[Theorem 2.21]{WYFeynman}, with $P$ replaced by $P^{s_e}$.
		
		By  \Cref{prop:global generalized cpv}, the limit
		\[
		\dashint_{\widetilde{\Conf}_{\vG_0}(M)}
		\check{W}(\vG,\Phi,\bfs).
		\]
		is well-defined and depends holomorphically on $\bfs\in\cH^{|\vG_1|}$.

		Finally, suppose that $\Re(s_e)\gg0$ for every $e\in\vG_1$. In this
		region $\check{W}(\vG,\Phi,\bfs)$ is  Lebesgue integrable  over $\Conf_{\vG_0}(M)$. Hence
		\[
		W^{\mathrm{CPV}}(\vG,\Phi,\bfs)
		=
		\int_{\Conf_{\vG_0}(M)}
		\check{W}(\vG,\Phi,\bfs).
		\]
	\end{proof}

	\section{A Fubini-type theorem for graph integrals}
	\label{sec:zeta fubini theorem}
	
	In \cite{WYFeynman}, the Feynman graph integral was defined by the Cauchy
	principal value over the Fulton-MacPherson compactification of the
	configuration space. This may be viewed as first integrating the
	Schwinger-time variables to form the propagators, and then integrating over
	the configuration space.
	
	By contrast, the heat-kernel formulation discussed in \Cref{sec: heat formulation} keeps
	the Schwinger parameters. One first integrates over the
	configuration variables, and obtains a smooth form on the compactified
	Schwinger space.
	
	In this section we compare these two orders of integration by means of
	zeta-regularization, and prove the following Fubini-type theorem.

	\begin{thm}[Fubini-ype theorem for Feynman graph integral]
		\label{thm:zeta fubini theorem}
		For every decorated directed graph $\vG$,  the holomorphic function $W^{\mathrm{ZF}}(\vG,\Phi,\bfs)$, initially defined by
\eqref{eq: Psi ZF} for $\Re(s_e)\gg0$, admits an analytic continuation
to the interior of $\mathbb{H}^{|\vG_1|}$ that extends continuously to
$\mathbb{H}^{|\vG_1|}$. Moreover, 
		\[
		W^{\mathrm{ZF}}(\vG,\Phi,\bfs)=W^{\mathrm{HK}}(\vG,\Phi,\bfs)
		=
		W^{\mathrm{CPV}}(\vG,\Phi,\bfs),
		\quad
		\bfs\in\cH^{|\vG_1|}.
		\]
		In particular, setting $\bfs=0$, we have
		\[
		W^{\mathrm{ZF}}(\vG,\Phi)=W^{\mathrm{HK}}(\vG,\Phi)
		=
		W^{\mathrm{CPV}}(\vG,\Phi).
		\]
	\end{thm}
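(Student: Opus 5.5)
The strategy is the one announced after \Cref{thm:CPV HK equivalence}: compare three holomorphic functions of $\bfs$ on $\cH^{|\vG_1|}$, show they agree when every $\Re(s_e)$ is large, and then use uniqueness of analytic continuation. The three functions are $W^{\mathrm{HK}}(\vG,\Phi,\bfs)$ from \eqref{def:zeta heat kernel graph integral}, $W^{\mathrm{CPV}}(\vG,\Phi,\bfs)$ from \Cref{thm:zeta graph integral holomorphic continuation}, and the Lebesgue integral $W^{\mathrm{ZF}}(\vG,\Phi,\bfs)$ of \eqref{eq: Psi ZF}, which is only defined for $\Re(s_e)\gg0$. By \Cref{prop:zeta heat kernel holomorphic}, $W^{\mathrm{HK}}(\vG,\Phi,-)$ is holomorphic on the interior of $\cH^{|\vG_1|}$ and continuous up to the boundary; the same dominated-convergence argument, using \Cref{prop:W-hat-large-time-decay} together with $|t_e^{s_e}|\le \max(1,t_e^{\Re s_e})$, gives continuity on all of $\cH^{|\vG_1|}$. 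So it suffices to prove
\[
W^{\mathrm{ZF}}(\vG,\Phi,\bfs)=W^{\mathrm{HK}}(\vG,\Phi,\bfs)=W^{\mathrm{CPV}}(\vG,\Phi,\bfs)
\]
on a nonempty open set $\{\Re(s_e)>K\ \forall e\}$. This equality shows that $W^{\mathrm{HK}}$ is the required analytic continuation of $W^{\mathrm{ZF}}$. The identity theorem in several variables, applied on the connected interior, then propagates the equalities to the interior, and continuity propagates them to the boundary.

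\textbf{Step 1: the large-$\Re(s)$ equalities.} Take $K>\dim M+k+2$, with $k$ the maximal number of holomorphic derivatives that the decorations $I_v$ place on any half-edge. By \Cref{lemma in zeta}, every $\iota_{\partial_{t_e}}P^{s_e}_{t_e}$ then has all the relevant derivatives bounded by $C e^{-c t_e}$ uniformly on $[0,\infty)\times M\times M$. Hence the top $dt$-component of $\widetilde W(\vG,\Phi,\bfs)$ is bounded by an integrable function on $(0,\infty)^{|\vG_1|}\times M^{|\vG_0|}$, and Fubini--Tonelli applies. Integrating $M^{|\vG_0|}$ first, the prefactors $\prod_e t_e^{s_e}/\Gamma(s_e+1)$ do not involve the $M$-variables, so the inner integral is $\prod_e \frac{t_e^{s_e}}{\Gamma(s_e+1)}\widehat W(\vG,\Phi)$ on $(0,\infty)^{|\vG_1|}$. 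Since $(0,\infty)^{|\vG_1|}$ has full measure in $\resch$, this gives $W^{\mathrm{ZF}}=W^{\mathrm{HK}}$. Integrating the $t$-variables first instead yields $\check W(\vG,\Phi,\bfs)$ on $\Conf_{\vG_0}(M)$, and the diagonals have measure zero. The resulting integral agrees with $W^{\mathrm{CPV}}(\vG,\Phi,\bfs)$ by the last assertion of \Cref{thm:zeta graph integral holomorphic continuation}.

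\textbf{Step 2: the value at $\bfs=0$.} Setting $\bfs=0$, \Cref{prop:zeta heat kernel holomorphic} gives $W^{\mathrm{HK}}(\vG,\Phi,\mathbf 0)=\int\widehat W(\vG,\Phi)$, which equals $W^{\mathrm{HK}}(\vG,\Phi)$ by the proof of \Cref{thm:graph integral current}. Also $W^{\mathrm{CPV}}(\vG,\Phi,0)=W^{\mathrm{CPV}}(\vG,\Phi)$ by \Cref{thm:zeta graph integral holomorphic continuation}, and $W^{\mathrm{ZF}}(\vG,\Phi):=W^{\mathrm{ZF}}(\vG,\Phi,\mathbf 0)$ by definition.

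\textbf{Main obstacle.} The analytic-continuation bookkeeping is formal; the delicate point is Step 1. I must verify that the domination from \Cref{lemma in zeta} survives the holomorphic derivatives in the $I_v$ and the $dt$-free components $t^{s}H_t$ (only the top Schwinger-degree part contributes), and that for large $\Re s$ the form $\check W(\vG,\Phi,\bfs)$ is genuinely Lebesgue integrable on $\Conf_{\vG_0}(M)$, so that the principal value of \Cref{prop:global generalized cpv} coincides with the ordinary integral. I would check the latter directly. For $\Re s$ large, the zeta-regularized propagator $P^{s}$ is $C^{k}$ across the diagonal, because in the expansion of \Cref{thm41} the singular $\tilde{\bfy}$ factors are compensated by powers $\rho^{2s}$. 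Therefore no cutoff is needed and every limit $\epsilon\to0$ is trivial.
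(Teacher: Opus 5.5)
Your proposal is correct and follows the paper's own proof. Both arguments use the same steps: holomorphy of $W^{\mathrm{HK}}$ and $W^{\mathrm{CPV}}$ on $\cH^{|\vG_1|}$, equality of all three functions for $\Re(s_e)\gg0$ by Fubini, and then uniqueness of analytic continuation, with $W^{\mathrm{HK}}$ serving as the continuation of $W^{\mathrm{ZF}}$.

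One small remark on your ``main obstacle'': Lebesgue integrability of $\check W(\vG,\Phi,\bfs)$ for large $\Re(s_e)$ already follows from your Step 1 domination via \Cref{lemma in zeta} and Tonelli, provided $k$ bounds the total derivative order at both endpoints of an edge. So the appeal to the expansion of \Cref{thm41} is unnecessary, and it is also imprecise, since not every term there carries a factor $\rho^{2s}$.
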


	\begin{proof}
		$W^{\mathrm{HK}}(\vG,\Phi,\bfs)
		$ and
		$W^{\mathrm{CPV}}(\vG,\Phi,\bfs) $ are holomorphic on $\cH^{|\vG_1|}$: the heat-kernel side by
		\Cref{prop:zeta heat kernel holomorphic}, and the Cauchy principal value side
		by \Cref{thm:zeta graph integral holomorphic continuation}. When
		$\Re(s_e)\gg0$ for all $e\in\vG_1$, all integrals are
		absolutely convergent. Thus Fubini's theorem gives
		\[
		W^{\mathrm{ZF}}(\vG,\Phi,\bfs)=W^{\mathrm{HK}}(\vG,\Phi,\bfs)
		=
		W^{\mathrm{CPV}}(\vG,\Phi,\bfs); \quad\Re(s_e)\gg0, \forall e.
		\]
This in particular shows that $W^{\mathrm{ZF}}(\vG,\Phi,\bfs)$ admits an analytic continuation
to the interior of $\cH^{|\vG_1|}$ that extends continuously to
$\cH^{|\vG_1|}.$ 
        The theorem then follows from the uniqueness of analytic continuation. 
	\end{proof}

	\section{Holomorphicity of Feynman graph integrals.}\label{holomorphic gauge anomaly}
	
	\subsection{Holomorphicity and integrals over boundaries of compactified
		Schwinger spaces.}
	Throughout the section,  the notation $\bar\partial$ is used according to the context. Let $\vG$ be a holomorphic decorated directed graph.
	By \Cref{thm:graph integral current}, $W(\vG,\cdot)$ is a well-defined
	current on $M^{|\vG_0|}$. It therefore makes sense to study its
	holomorphicity in the sense of currents. More precisely, we will study whether
	\be\label{holomorphicity}
	\bar\partial W(\vG,\Phi):=W(\vG,\dbs\Phi)=0,\quad\forall \Phi\in\Omega^\bullet(M^{|\vG_0|}),
	\ee
	where $\dbs$ means that the Koszul signs are dictated.
	
	Unfortunately,  \eqref{holomorphicity}  is not true in general even if the graph is holomorphic decorated.
	The failure of $\left( \ref{holomorphicity} \right)$ is called (gauge)
	anomalies by physicists. We will show these anomalies can be
	computed by integrals over boundaries of compactified Schwinger spaces.
	
	For $0<\epsilon<L\leq+\infty$, define
	\[
	W_\ep^{L}(\vG,\Phi)
	:
	=\int_{(\epsilon,L)^{|\vG_1|}}
	\int_{M^{|\vG_0|}}
	\widetilde W(\vG,\Phi)=\int_{(\epsilon,L)^{|\vG_1|}}
	\widehat W(\vG,\Phi)\]
	and define
	\[
	W_0^{L}(\vG,\Phi)
	:
	=\lim_{\ep\to0}\int_{\widetilde{(\epsilon,L)^{|\vG_1|}}}
	\widehat W(\vG,\Phi)\]
	Then $W_0^{+\infty}(\vG,\Phi)=W(\vG,\Phi).$
	
	\begin{prop}\label{prop: anomaly IBP}
		\label{differential transfer}Given a holomorphic decorated directed graph ${\vec{\Gamma}}$, and $\Phi \in \Omega^{*} (M^{|
			{\vec{\Gamma}}_0 |})$, we have the following:
		
		\begin{enumerate}[(1)]
			\item
			For $0<L<\infty$,\begin{equation}
				(\bar{\partial} W_0^L) ({\vec{\Gamma}}, -) = (- 1)^{| {\vec{\Gamma}}_1 |}
				\int_{\partial \widetilde{[0, L]^{| {\vec{\Gamma}}_1 |}}}
				\widehat W(\vG,-) .
				\label{QME}
			\end{equation}
			\item
			\[ (\bar{\partial} W_0^{+ \infty}) ({\vec{\Gamma}}, -) = \lim_{L\to\infty}(- 1)^{| {\vec{\Gamma}}_1 |} \int_{\partial \widetilde{[0, L]^{|
						{\vec{\Gamma}}_1 |}}}  \widehat W(\vG,-). \]
		\end{enumerate}
		
	\end{prop}
	
	\begin{proof}\def\id{{\mathrm{id}}}

		We prove the second inequality only.  
		On the open interval $(0,+\infty)$, the Schwinger propagator is smooth and satisfies
		\be\label{eq:derivative dt Pt}
		\Big(d^{(0,\infty)}+\bar\partial\otimes\id+\id\otimes\bar\partial\Big)P_t=0.
		\ee
		Therefore, by Stokes' formula and \eqref{eq:derivative dt Pt}, we have
		\be\label{eq: 53}
		\widehat W(\vG,\dbs\Phi)
		=
		d^{(0,\infty)^{|\vG_1|}}\widehat W(\vG,\Phi).
		\ee
		Here $d^{(0,\infty)^{|\vG_1|}}$ denotes the de Rham differential on
		$(0,\infty)^{|\vG_1|}$.
		
		By \Cref{smooth extension to schwinger3}, both
		$\widehat W(\vG,\Phi)$ and $\widehat W(\vG,\dbs\Phi)$ extend smoothly to the
		compactified Schwinger space. Thus the identity \eqref{eq: 53} extends to
		$\resch$:
		\be\label{eq: 54}
		\widehat W(\vG,\dbs\Phi)
		=
		d^{\resch}\widehat W(\vG,\Phi).
		\ee
		Here $d^{\resch}$ denotes the de Rham differential on $\resch$.
		
		Applying Stokes' theorem and using the large-time exponential decay of the
		Schwinger graph integrand (\Cref{prop:W-hat-large-time-decay}), we obtain
		\[
		\begin{aligned}
			&\quad(\bar\partial W)(\vG,\Phi)
			=(-1)^{|\vG_1|}
			\int_{\resch}
			\widehat W(\vG,\bp^{\mathrm{sign}}\Phi)=\lim_{L\to\infty}(-1)^{|\vG_1|}
			\int_{\widetilde{[0,L]^{|\vG_1|}}}
			\widehat W(\vG,\bp^{\mathrm{sign}}\Phi)\\&=\lim_{L\to\infty}
			(-1)^{|\vG_1|}
			\int_{\widetilde{[0,L]^{|\vG_1|}}}
			d^{\resch}\widehat W(\vG,\Phi)=\lim_{L\to\infty}
			(-1)^{|\vG_1|}
			\int_{\partial\widetilde{[0,L]^{|\vG_1|}}}
			\widehat W(\vG,\Phi).
		\end{aligned}
		\]
		
	\end{proof}
	
	Let's describe the boundaries of ompactified Schwinger space $\resch$ in detail.
	
	Given ${\vec{\Gamma}}'$ is a subgraph of ${\vec{\Gamma}}$, then both ${\vec{\Gamma}}'_1 = \{ e'_1,
	\ldots e'_{| {\vec{\Gamma}}'_1 |} \}$ and ${\vec{\Gamma}}_1 = \{ e_1, \ldots, e_{| {\vec{\Gamma}}_1 |}
	\}$ are ordered sets. Assume ${\vec{\Gamma}}_1 \backslash {\vec{\Gamma}}_1' = \big\{ e_{i_1},
	\ldots, e_{i_{| {\vec{\Gamma}}_1 | - | {\vec{\Gamma}}_1' |}} \big\}$, then there exists a
	unique permutation \[
	\big\{ e_{i_1}, \ldots, e_{i_{| {\vec{\Gamma}}_1 | - | {\vec{\Gamma}}_1'
			|}}, e'_1, \ldots e'_{| {\vec{\Gamma}}'_1 |} \big\} \rightarrow \{ e_1, \ldots, e_{|
		{\vec{\Gamma}}_1 |} \}.\]
	We denote the sign of this permutation by $$(- 1)^{\sigma
		({\vec{\Gamma}}', {\vec{\Gamma}} / {\vec{\Gamma}}')},$$ where $${\vec{\Gamma}} / {\vec{\Gamma}}'$$  is the graph obtained by contracting $\vG'$ to one vertex.
	
	The boundaries $\partial \widetilde{[0,L]^{|\vG_1|}}$ has the following
	decomposition:
	\[ \partial \widetilde{[0, L]^{| {\vec{\Gamma}}_1 |}} = \Big( - \partial_0
	\widetilde{[0, L]^{| {\vec{\Gamma}}_1 |}} \Big) \cup \partial_L \widetilde{[0,
		L]^{| {\vec{\Gamma}}_1 |}}, \]
	where $\partial_0 \widetilde{[0, +\infty)^{| {\vec{\Gamma}}_1 |}}$ (resp. $\partial_L
	\widetilde{[0, L]^{| {\vec{\Gamma}}_1 |}}$) describe the boundary components near the
	origin (resp. away from the origin).
To describe the boundary in greater detail, we introduce the following notation.

\def\forein{1}
\if\forein0
For $e\in\vG_1$, let $\mathfrak{s}_e=(\vG_1,e)$, and set
\[
C_{\vG}^{\mathbb R}
:=
\bigcup_{e\in\vG_1}
\Big(\prod_{e'\in\vG_1}\sq_{e'}\Big)
\circ\kappa_{\mathfrak{s}_e}
\big(C_{\mathfrak{s}_e}\big)
\subset\resch.
\]
Recall that $\mathfrak{s}_e$ is a marked nested sequence of length $1$,
as introduced in \Cref{defn: nested sequence}, and that
$\kappa_{\mathfrak{s}_e}$ is the lifted map described in
\Cref{prop: kappa s}.\fi

Using the coordinates introduced in \Cref{chart on real} or
\cite[\S~3.1]{wang2025feynman}, we find that the proper smooth function
\[
        (0,\infty)^{|\vG_1|}
        \longrightarrow
        (0,\infty),
        \qquad
        (t_e)_{e\in\vG_1}
        \longmapsto
        \big(\sum_{e\in\vG_1}t_e^2\big)^{1/2}
\]
admits a smooth extension to $\resch$, which is also proper.
We denote this extended function by $\varrho_{\vG_1}$. Moreover, $\varrho^{-1}_{\vG_1}(0)$ is a manifold with corner of dimension
$|\vG_1|-1$.

    Then we have decomposition
	\[ \left\{\begin{array}{l}
		\partial_0 \widetilde{[0, L]^{| {\vec{\Gamma}}_1 |}} = \bigcup_{{\vec{\Gamma}}' \subseteq
			{\vec{\Gamma}}} (- 1)^{\sigma ({\vec{\Gamma}}', {\vec{\Gamma}} / {\vec{\Gamma}}')} \varrho_{\vG_1'}^{-1}(0) \times \widetilde{[0, + L]^{|
				{\vec{\Gamma}}_1 \backslash {\vec{\Gamma}}_1' |}}\\
		\partial_L \widetilde{[0, L]^{| {\vec{\Gamma}}_1 |}} = \bigcup_{e \in {\vec{\Gamma}}_1}
		(- 1)^{| e |} \{ L \} \times  \widetilde{[0, L]^{| \vG_1 \setminus \{e\} |}}
	\end{array}\right. . \]
	{By the decomposition above and 
\Cref{prop:W-hat-large-time-decay}, together with the decomposition
\[
        P_t=-dt\wedge(\bar\partial^*\otimes1)H_t+H_t
\]
and the convergence $H_t\to\mathcal H$ as $t\to\infty$, we obtain the
following proposition.

\begin{prop}\label{prop:large-time-boundary}
For each $e\in\vG_1$, there exist finitely many differential forms
$H_{e,k}$, obtained from holomorphic derivatives of the harmonic projection
kernel $\mathcal H$, such that
\be\label{eq: Large L has no contribution}
        \lim_{L\to\infty}
        \int_{\partial_L\widetilde{[0,L]^{|\vG_1|}}}
        \widehat W(\vG,\Phi)
        =
        \sum_{e\in\vG_1}
        \sum_k
        W\bigl(\vG\setminus\{e\},\,H_{e,k}\wedge\Phi\bigr).
\ee
Here $\vG\setminus\{e\}$ denotes the graph obtained from $\vG$ by removing
the edge $e$ while keeping all vertices. For each pair $(e,k)$, the graph
$\vG\setminus\{e\}$ is decorated with possibly different holomorphic
Lagrangian densities.
\end{prop}
	
	\begin{rem}
		If the reader is familar with Batalin-Vilkovisky
		formalism{ \cite{batalin1981gauge,batalin1983generalized,schwarz1993geometry,costellorenormalization,alexandrov1997geometry}},
		it should obvious that   \eqref{QME}  can be used to
		compute the failure of quantum master equation:
		\[ \Big( Q I + \frac{1}{2} \{ I, I \} + \hbar \Delta_{\tmop{BV}} I \Big)
		e^{\frac{1}{\hbar} I} = O e^{\frac{1}{\hbar} I} . \]
		More specifically, the term $(\bar{\partial} W_0^L) ({\vec{\Gamma}}, -)$
		corresponds to $(Q I) e^{\frac{1}{\hbar} I}$, the term
		\[ \int_{\partial_L \widetilde{[0, L]^{| {\vec{\Gamma}}_1 |}}}
		\int_{M^{| {\vec{\Gamma}}_0 |}} \widetilde{W} ({\vec{\Gamma}}, -) \]
		corresponds to $\left( \frac{1}{2} \{ I, I \} + \hbar \Delta_{\tmop{BV}} I
		\right) e^{\frac{1}{\hbar} I}$, and the term
		\[ \int_{\partial_0 \widetilde{[0, L]^{| {\vec{\Gamma}}_1 |}}}
		\int_{M^{| {\vec{\Gamma}}_0 |}} \widetilde{W} ({\vec{\Gamma}}, -) \]
		corresponds to the anomaly $O e^{\frac{1}{\hbar} I}$.
	\end{rem}

	\subsection{Anomaly integrals}\label{Laman graph integral}

	By \eqref{eq: Large L has no contribution}, it remains to study the integrals
	over
	$
	\partial_0 \widetilde{[0,L]^{|\vG_1|}}
	$
	in detail. Given a subgraph ${\vec{\Gamma}}' \subseteq {\vec{\Gamma}}$,
	we consider
	\[ \int_{\varrho_{\vG_1'}^{-1}(0) \times
		\widetilde{[0, + L]^{| {\vec{\Gamma}}_1 \backslash {\vec{\Gamma}}_1' |}}}
	\int_{M^{| {\vec{\Gamma}}_0 |}} \widetilde{W} ({\vec{\Gamma}}, -). \]
	 We first
	concentrate on the case ${\vec{\Gamma}}' = {\vec{\Gamma}}$.  In the
	following, we set:
	\[ O_{{\vec{\Gamma}}} (\Phi) := \int_{\varrho_{\vG_1}^{-1}(0)} \int_{M^{|
			{\vec{\Gamma}}_0 |}} \widetilde{W} ({\vec{\Gamma}}, \Phi), \text{\quad for }   \Phi
	\in \Omega^{*}_c (M^{| {\vec{\Gamma}}_0 |}) . \]
If we take
\[
        E=\Lambda^\bullet T^*_{1,0}M .
\]
Then
\[
        \Gamma(\widetilde E)
        =
        \Omega^\bullet(M).
\]
For every $k\ge1$, let $ \pr_i:M^k\to M,  i=1,\ldots,k, $ be the projection onto the $i$-th factor. There is a natural embedding
\[
        \Omega^\bullet(M)^{\otimes k}
        \hookrightarrow
        \Omega^\bullet(M^k),
        \quad
        \alpha_1\otimes\cdots\otimes\alpha_k
        \longmapsto
        \pr_1^*\alpha_1\wedge\cdots\wedge \pr_k^*\alpha_k .
\]
Therefore a Dolbeault holomorphic Lagrangian density of degree $k$ canonically
extends to an operator
\[\Omega^\bullet(M^{k})\longrightarrow \Gamma(K_M\otimes \Lambda^\bullet T^*_{0,1}M)\hookrightarrow
        \Omega^\bullet(M).\]   More generally, if $l\ge k$, then a degree-$k$ Dolbeault holomorphic Lagrangian density canonically extends to an operator \[ \Omega^\bullet(M^l) \longrightarrow \Omega^\bullet(M^{l-k+1}) \] once one specifies the $k$ components of $M^l$ on which the density acts. We shall use these extensions in the following theorems.

	\begin{thm}\label{thm:anomaly operator kahler}
		
		Let ${\vec{\Gamma}}$ be a decorated directed graph (not necessary connected), then
		there exists a degree-$|\vG_0|$ Dolbeault holomorphic Lagrangian density 
		\[
		D_{\vG}:\Omega^\bullet(M^{|\vG_0|})\longrightarrow
        \Omega^\bullet(M)
		\]
	 such that
		\[
		O_{\vG}(\Phi)
		=
		\int_M
		D_{\vG}\Phi.
		\]
	\end{thm}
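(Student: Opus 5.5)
The proof is a scaling argument in the fiber directions. The facet $\varrho_{\vG_1}^{-1}(0)$ of $\resch$ is where all Schwinger parameters go to zero at a common rate. In the chart of \Cref{chart on real} with the length-one marked nested sequence $\mathfrak{s}=(\vG_1,e_1)$, it is the locus $\gamma_{e_1}=0$, and its points are the rescaled ratios $(t_e/t_{e_1})$. Near this facet we write $t_e=\lambda u_e$, with $\lambda=t_{e_1}\to0$ and $u$ ranging over the interior of the facet. We will show that the restriction of $\widehat W(\vG,\Phi)$ to the facet is obtained from the leading term of an expansion of $\widetilde W$ in which the configuration variables are also rescaled around the small diagonal.

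Concretely, by \Cref{smooth extension to schwinger2} we may first push forward only to $M\times\resch$, keeping the base point $p_{1}$ (the first vertex). The preimage of $M\times\varrho_{\vG_1}^{-1}(0)$ in $(M^{|\vG_0|}\times\C^{|\vG_1|})_\vG$ is the union of the dominant transforms $\widetilde G_{S\subseteq\vG_1}$. By \Cref{vanishing at critical points}, the integrand is flat along every component except $\widetilde G_{\vG_1\subseteq\vG_1}$, i.e. the dominant transform of the small diagonal $\triangle_{\vG_1}\times D_{\vG_1}$. Hence only the exceptional divisor over the small diagonal contributes. Its fiber over a point $p\in M$ is (a blow-up of) $(T_pM)^{|\vG_0|-1}$ modulo no further scaling, coordinatized by $\bfz_v=p+\sqrt{\lambda}\,\xi_v$ in local holomorphic coordinates.

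On this divisor the restriction of $\prod\tau_e^N\,\widetilde W$ is computed from the leading term of each $P_{t_e}$ under this scaling. By \Cref{thm: Propogator has regular expression}, that term is a flat-space Gaussian $e^{-|\xi_{h(e)}-\xi_{t(e)}|^2_{g_p}/2u_e}$ times a polynomial in $(\bar\xi_{h}-\bar\xi_t)/u_e$ and their differentials, with coefficients given by the values at $p$ of smooth data. The holomorphic vertex operators $I_v$ act by $\partial_{\xi}$-derivatives. $\Phi$ enters only through finitely many of its holomorphic-direction Taylor coefficients at the diagonal: the Gaussian weights $\xi$, and the $(0,\bullet)$ degree counting is set by the pure $d\bar\xi$ factors. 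Only a bounded number of $\partial_{z}$-derivatives of $\Phi$ survive, since higher ones come with positive powers of $\sqrt\lambda$ and vanish on $\lambda=0$; and $\bar z$-dependence drops by $S^1$-equivariance, as in \Cref{descent smooth function}.

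Integrating out $\xi\in\C^{n(|\vG_0|-1)}$ and $u\in\varrho^{-1}_{\vG_1}(0)$ is then absolutely convergent, by the smoothness of \Cref{smooth extension to schwinger2}. The result is a form on $M$ of the shape $\sum_{\bfj} c_{\bfj}(p)\,\partial^{\bfj}\Phi|_{\triangle}$, where the coefficients $c_{\bfj}$ are Dolbeault forms built from the local geometric data (metric, connection and decorations $I_v$ at $p$), and each $\partial^{\bfj}$ is a holomorphic multi-derivative in the $|\vG_0|$ factors. This is exactly a degree-$|\vG_0|$ Dolbeault holomorphic Lagrangian density $D_\vG$ applied to $\Phi$, giving $O_\vG(\Phi)=\int_M D_\vG\Phi$.

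The main obstacle is the coordinate independence of $D_\vG$, i.e. gluing the local expressions into a global operator. I would handle it by defining $D_\vG$ intrinsically as the composition of (i) restriction of the integrand to the exceptional divisor over the small diagonal and (ii) the fiber integral over that divisor, which is a bundle over $M\times\{\text{facet}\}$. Step (i) is canonical by \Cref{smooth extension of integrand1}, and step (ii) is canonical by \Cref{pushforward is smooth2}. Locality, meaning that $D_\vG\Phi$ depends only on finite jets of $\Phi$ in holomorphic normal directions, then follows from the scaling argument above.
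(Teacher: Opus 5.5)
\textbf{Gap: disconnected graphs.} The statement explicitly allows disconnected $\vG$, and your argument does not cover that case. The step ``only the exceptional divisor over the small diagonal contributes'' needs $\widetilde G_{\vG_1\subseteq\vG_1}$ to lie over the small diagonal, i.e.\ $\triangle_{\vG_1}$ must be the small diagonal. This holds only for connected $\vG$. If $\vG$ has $c\ge 2$ components, then $\triangle_{\vG_1}\cong M^{c}$. Nothing makes the integrand flat over the part of $\widetilde G_{\vG_1\subseteq\vG_1}$ away from the small diagonal, since no Gaussian factor links different components. Your fiber-integral construction would then give an operator into $\Omega^\bullet(M^{c})$, not into $\Omega^\bullet(M)$.

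The missing idea is that $O_\vG=0$ for disconnected graphs. Take $\vG=\vG'\sqcup\vG''$ and split test forms $\Phi=\Phi'\wedge\Phi''$. Up to sign, the restriction of $\widehat W(\vG,\Phi)$ to $\varrho_{\vG_1}^{-1}(0)$ is pulled back from $\varrho_{\vG'_1}^{-1}(0)\times\varrho_{\vG''_1}^{-1}(0)$. That product has dimension one less than the facet, so the top-degree part vanishes. Density of split forms and continuity of $O_\vG$ then give $O_\vG=0$ (\Cref{prop: anomaly vanishes on disconnected graph}), and $D_\vG=0$ works.

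\textbf{Connected case: two steps to restate.} Your geometric picture is right here, but two steps are stated incorrectly.

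First, the facet value is not the restriction of $\prod_e\tau_e^N\widetilde W$ to the exceptional divisor. In the chart where $\tau_{e_1}=\gamma$, the factor $\prod_e\tau_e^N$ vanishes to order $N|\vG_1|$ along $D_{\vG_1}$, so for large $N$ that restriction is simply zero. As in the proof of \Cref{descent smooth function}, the facet value is a holomorphic Taylor coefficient in $\gamma$. You must show that extracting this coefficient commutes with $p^{\vG}_{1*}$: $p^{\vG}_1$ is a submersion off the crossings, and the integrand is flat at them.

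Second, the real scaling $\bfz_v=p+\sqrt{\lambda}\,\xi_v$ treats $\bfX$ and $\bar\bfX$ identically, so it cannot be what kills the anti-holomorphic normal Taylor terms. The actual mechanism is the telescoping identity $\bar\bfz_j-\bar\bfz_1=\sum_{e\in P_j}\pm t_e\bfy_e$ along a path $P_j$, which again uses connectivity. It puts every $\bar\bfX$-monomial into $t_e$ times an admissible regular expression. The fiber integral of such an expression is smooth on $\resch\times U$, so with the extra $t_e$ it vanishes on the facet (\Cref{lem:anti-holomorphic-ideal-gains-time}, \Cref{thm:regular-expression-pushforward}). Your $S^1$-weight remark is this identity seen in holomorphic blow-up coordinates $\bfx=\gamma\xi$, $\bar\bfx=\bar\gamma\bar\xi$, not in the real coordinates you wrote.

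\textbf{Comparison with the paper.} The paper avoids your limit-exchange and base-change issues. It shows $O_\vG$ is a current supported on the small diagonal, using Gaussian decay. It then applies H\"ormander's finite-jet theorem (\Cref{thm: finite jet}) to reduce to normal Taylor polynomials.

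Finally, gluing is not the real obstacle. A partition of unity pulled back from the first factor commutes with $\partial_{\bfz_v}$ at fixed $\bfz_1$. Holomorphic chart changes preserve holomorphic differential operators. So the local formulas patch directly.
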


	It follows from \Cref{thm:anomaly operator kahler} that
	\begin{prop}\label{prop:boundary reduction kahler}
		Let $\vG$ be a connected decorated directed graph, and let
$\vG'\subseteq\vG$ be a subgraph.  Then there exists a
degree-$|\vG'_0|$ Dolbeault holomorphic Lagrangian density
\[
        D_{\vG'}:
        \Omega^\bullet(M^{|\vG_0|})
        \longrightarrow
        \Omega^\bullet(M^{|(\vG/\vG')_0|}),
\]
acting on the factors indexed by $\vG'_0$, such that
\begin{equation}\label{eq:boundary reduction kahler}
        \int_{\varrho_{\vG_1'}^{-1}(0)
        \times
        \widetilde{[0,+\infty)^{|\vG_1\setminus\vG'_1|}}}
        \widehat W(\vG,\Phi)
        =
   W\big(\vG/\vG',D_{\vG'}\Phi\big).
\end{equation}

	\end{prop}
	
	\begin{proof}
		The proof follows exactly as in \cite[Proposition 11]{wang2025feynman}, using
		\Cref{thm:anomaly operator kahler} in place of
		\cite[Theorem 6]{wang2025feynman}. 
	\end{proof}
	Thus, together with \eqref{eq: Large L has no contribution}, we immediately have
	\begin{cor}\label{cor:kahler anomaly quotient graphs}
		Let $\vG$ be a connected decorated directed graph. Then
		\[
		(\bar\partial W)(\vG,\Phi)
		= (-1)^{|\vG_1|}\Big(\sum_{e\in\vG_1}
        \sum_k
        W\bigl(\vG\setminus\{e\},\,H_{e,k}\wedge\Phi\bigr)+
\sum_{\vG'\subseteq\vG}
		(-1)^{\sigma(\vG',\vG/\vG')}
		W\big(\vG/\vG',D_{\vG'}\Phi\big)\Big).
		\]
        Here $\vG/\vG'$ is the graph obtained by contracting $\vG'$ to one vertex.
	\end{cor}
	
	\subsection{Proof of \Cref{thm:anomaly operator kahler}}
	First, we recall:
		\begin{thm}[Theorem 2.3.5 in \cite{hormander1983analysis}]\label{thm: finite jet}
			Let $Z\subset M$ be a smooth embedded submanifold, with $M$ compact.  Let
			$W\in \mathcal D'(M)$ satisfy
			\[
			\operatorname{supp}W\subset Z .
			\]
			Then there exists $N\geq 0$ such that $W(f)$ depends only on the normal
			$N$-jet of $f$ along $Z$.  More precisely, in a tubular coordinate chart
			\[
			(\bfx,\bfz)\in \mathbb R^k\times \mathbb R^\ell,
			\quad
			Z=\{\bfz=0\},
			\]
			we write, for $\bfi=(i_1,\ldots,i_\ell)\in\mathbb N^\ell$,
			\[
			\partial_\bfz^{\bfi}
			=
			\frac{\partial^{|\bfi|}}
			{\partial z_1^{i_1}\cdots \partial z_\ell^{i_\ell}} .
			\]
			If
			$
			\partial_\bfz^{\bfi} f(x,0)=0,
			 \forall |\bfi|\leq N
			$
			in every such chart, then
			\[
			W(f)=0 .
			\]
		\end{thm}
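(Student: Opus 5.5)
The plan is the classical cutoff argument: finite order of $W$ plus Taylor's formula in the normal directions. First, since $M$ is compact, the distribution $W$ has finite order. That is, there exist $N\ge 0$ and $C>0$ with
\[
|W(f)|\le C\,\|f\|_{C^N(M)}\quad\text{for all } f\in C^\infty(M).
\]
This follows from the continuity of $W$ on the Fréchet space $C^\infty(M)$, whose topology is defined by the $C^k$ seminorms. I claim this same $N$ works.

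Using a finite partition of unity subordinate to tubular coordinate charts $(\bfx,\bfz)\in\mathbb R^k\times\mathbb R^\ell$ with $Z=\{\bfz=0\}$, together with charts disjoint from $Z$, we reduce to the following situation. The function $f$ is compactly supported in one tubular chart and satisfies $\partial_\bfz^{\bfi}f(\bfx,0)=0$ for $|\bfi|\le N$. (For charts missing $Z$ there is nothing to do, since $W$ vanishes there by the support condition.) Multiplying by a partition function preserves the jet hypothesis by the Leibniz rule, so the reduction is legitimate. If $W$ is a current, or acts on bundle-valued sections, we argue componentwise.

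Now fix $\chi\in C_c^\infty(\mathbb R)$ with $\chi=1$ near $0$ and $\operatorname{supp}\chi\subset(-1,1)$, and set $\chi_\epsilon(\bfz)=\chi(|\bfz|/\epsilon)$. The function $(1-\chi_\epsilon)f$ vanishes on a neighbourhood of $Z\supset\operatorname{supp}W$, so $W((1-\chi_\epsilon)f)=0$. Hence $W(f)=W(\chi_\epsilon f)$ for every $\epsilon>0$, and it suffices to show $\|\chi_\epsilon f\|_{C^N}=O(\epsilon)$ as $\epsilon\to0$.

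This estimate is the only real step. Differentiating the hypothesis tangentially gives $\partial_\bfx^\beta\partial_\bfz^\gamma f(\bfx,0)=0$ whenever $|\gamma|\le N$. Taylor's formula in $\bfz$ with remainder, using $C^{N+1}$ bounds on $f$ (which is smooth), then yields
\[
|\partial_\bfx^\beta\partial_\bfz^\gamma f(\bfx,\bfz)|\le C|\bfz|^{N+1-|\gamma|}\quad\text{for } |\beta|+|\gamma|\le N.
\]
On the other hand, $|\partial_\bfz^{\delta}\chi_\epsilon|\le C\epsilon^{-|\delta|}$, and this is supported in $|\bfz|\le\epsilon$. By the Leibniz rule, each term of $\partial_\bfx^\beta\partial_\bfz^\gamma(\chi_\epsilon f)$ is therefore bounded by $C\epsilon^{-|\delta|}\epsilon^{N+1-|\gamma|+|\delta|}\le C\epsilon$. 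Hence $|W(f)|\le C'\epsilon$ for all $\epsilon$, and so $W(f)=0$.

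The main care is in the Leibniz and Taylor bookkeeping, in particular remembering that all mixed derivatives of order $\le N$ vanish on $Z$, not only the purely normal ones. Otherwise I expect the argument to be routine.
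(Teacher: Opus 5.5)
Your argument is correct, and it is the standard proof behind the result the paper cites from H\"ormander; the paper gives no proof of its own. As in that source, you get finite order of $W$ from compactness of $M$, then play the cutoff $\chi_\epsilon$, with $|\partial^\delta\chi_\epsilon|\le C\epsilon^{-|\delta|}$, against the Taylor vanishing of $f$ to order $N+1$ along $Z$. One small adjustment is needed in the localization, since $Z$ need not be closed: cover the compact set $\operatorname{supp}W$ (rather than $Z$) by finitely many tubular charts, and use $M\setminus\operatorname{supp}W$ as the remaining open set.
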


 To apply \Cref{thm: finite jet} in the proof of
\Cref{thm:anomaly operator kahler}, we establish
\Cref{prop:anomaly-current} and
\Cref{prop: anomaly support near diagonal} below.
    \begin{prop}\label{prop:anomaly-current}
Let $\vG$ be a decorated directed graph. The boundary anomaly
\[
        O_{\vG}(\Phi)
        =
        \int_{\varrho_{\vG_1}^{-1}(0)}
        \widehat W(\vG,\Phi),
        \quad
        \Phi\in\Omega^\bullet(M^{|\vG_0|}),
\]
defines a current on $M^{|\vG_0|}$. 
\end{prop}

\begin{proof}
This follows from the continuity of the map
$
        \Phi\longmapsto \widehat{W}(\vG,\Phi),
$
established in \Cref{smooth extension to schwinger3}, together with the
compactness of $\varrho^{-1}_{\vG_1}(0)$.

\end{proof}

The proposition below reduces the proof of
\Cref{thm:anomaly operator kahler} to the connected-graph case.
\begin{prop}\label{prop: anomaly vanishes on disconnected graph}
	If $\vG$ is disconnected, then $O_{\vG}(-)=0.$
\end{prop}
	\begin{proof}
			Indeed, suppose that
		\[
		\vG=\vG'\sqcup\vG''
		\]
		is a disjoint union. For decomposable test forms
		\[
		\Phi=\Phi'\wedge\Phi'',
		\quad
		\Phi'\in\Omega^\bullet(M^{|\vG'_0|}),\quad
		\Phi''\in\Omega^\bullet(M^{|\vG''_0|}),
		\]
		the graph form splits, up to the Koszul sign, as
		\[
		\widetilde W(\vG,\Phi)
		=
		\widetilde W(\vG',\Phi')\wedge
		\widetilde W(\vG'',\Phi'').
		\]
		Such forms span a dense subspace of
		$\Omega^\bullet(M^{|\vG_0|})$, so it is enough to consider them.
		\def\bfu{\mathbf{u}}
		
		In fact, the contribution of a disconnected graph to the boundary face
		$\varrho^{-1}_{\vG_1}(0)$ is $0$. The boundary face $\varrho^{-1}_{\vG_1}(0)$ corresponds to the simultaneous
		collapse of all Schwinger parameters $t_e$, $e\in\vG_1$, with one common
		scale. Under the natural map
		\[
		\mathrm{p}:
		\widetilde{(0,\infty)^{|\vG_1|}}
		\longrightarrow
		\widetilde{(0,\infty)^{|\vG'_1|}}
		\times
		\widetilde{(0,\infty)^{|\vG''_1|}},
		\]
		this face is mapped into the product boundary; more precisely,
		\[
		{\mathrm{p}}(\varrho_{\vG_1}^{-1}(0))
		\subset
		\varrho_{\vG_1'}^{-1}(0)\times \varrho_{\vG_1''}^{-1}(0)
		.
		\]
Let
\[
        I_{\vG_1}: \varrho^{-1}_{\vG_1}(0)\longrightarrow\resch
\]
be the inclusion map. Then by the definition of $\widehat{W}$, one has
\[
        I_{\vG_1}^*\widehat W(\vG,\Phi)
        =
        \pm\,
        {\mathrm{p}}^*
        \Big(
                I_{\vG_1'}^*\widehat W(\vG',\Phi')
                \wedge
                I_{\vG_1''}^*\widehat W(\vG'',\Phi'')
        \Big).
\]
		However,
		\[
		\dim \varrho_{\vG_1}^{-1}(0)
		=
		\dim \varrho_{\vG_1'}^{-1}(0)
		+
		\dim \varrho_{\vG_1''}^{-1}(0)+1.
		\]
		Therefore its
		integral over $\varrho^{-1}_{\vG_1}(0)$ vanishes.
	\end{proof}

\begin{prop}\label{prop: anomaly support near diagonal}
	If $\vG$ is connected, then $O_{\vG}(-)$ is support on the diagonal of the small diagonal $\triangle_{\vG}:=\{(p_1,\cdots,p_{\vG_0})\in M^{|\vG_0|}:p_1=\cdots p_{\vG_0}\}$.
\end{prop}
\begin{proof}
			Since $\vG$ is connected,  by the triangle inequality,
	if $\operatorname{supp}\Phi$ is disjoint from the small diagonal, then by
	continuity of $\sum_{e\in\vG_1}\rho_e^2$ and compactness of $\operatorname{supp}\Phi$ there exists $c>0$ such that on $\operatorname{supp}\Phi$,
	\[
	\sum_{e\in\vG_1}\rho_e^2\ge c.
	\]
	As a result, on $\{\varrho_{\vG_1}=\ep\}$, for some constant $c'>0$ \[
	\sum_{e\in\vG_1}\frac{\rho^2_e}{t_e}\geq \frac{\sum_{e\in\vG_1}\rho_e^2}{\max_{e'\in\vG_1}{t_{e'}}}\geq \frac{c'}{\ep}
	\]
	Hence, on $\{\varrho=\ep\}$, the Gaussian factors in the propagators give
	\[
	\big|\widetilde W(\vG,\Phi)\big|
	\le
	C e^{-c''/\ep}
	\]
	for some constant $c''>0$. Therefore the restriction of the boundary integrand to
	$\varrho_{\vG_1}^{-1}(0)$ vanishes, and $O_{\vG}(\Phi)=0$.	
\end{proof}

  By \Cref{thm: Propogator has regular expression}, the Feynman graph weight
$\widetilde W(\vG,\Phi)$ is locally of the following form:
		\begin{defn}[Admissible regular expression]\label{defn:admissible-regular-expression}
Let $\vG$ be a connected directed graph. Let $(U,\bfz)$ be a holomorphic
coordinate chart of $M$, and let $n=\dim_\C(M)$. We write
\[
        \bfZ=(\bfz_1,\bfz_2,\ldots,\bfz_{|\vG_0|})
        \in U^{|\vG_0|}.
\]
For each edge $e\in\vG_1$, set
\[
        \bfy_e
        :=
        \frac{\bar\bfz_{h(e)}-\bar\bfz_{t(e)}}{t_e}.
\]
An admissible regular expression on $U$ is a differential form on
$(0,\infty)^{|\vG_1|}\times M^{|\vG_0|}$ which can be locally expressed as a
finite linear combination of terms of the form
\[
        \exp\Big(
        -\sum_{e\in\vG_1}
        \frac{\rho^2(\bfz_{h(e)},\bfz_{t(e)})}{2t_e}
        \Big)
        \prod_{e\in\vG_1}
        \bfy_e^{\bfk_e}d\bfy_e^{\bfl_e}
        \wedge
        A(\bft,\bfZ,\bar\bfZ),
\]
where
\[
        \bfk_e\in\mathbb Z_{\ge0}^n,
        \qquad
        \bfl_e\in\{0,1\}^n,
        \qquad
        \bft=(t_e)_{e\in\vG_1},
\]
and $A\in\Omega^\bullet\big((0,\infty)^{|\vG_1|}\times U^{|\vG_0|}\big)$ is
smooth up to $\resch\times U^{|\vG_0|}$. We denote the space of admissible
regular expressions on $U$ by
\[
        \mathcal{RE}_{\vG}^{\mathrm{adm}}(U).
\]
Here
\[
        \bfy_e^{\bfk_e}
        :=
        \prod_{i=1}^n y_{e,i}^{k_{e,i}},
        \qquad
        d\bfy_e^{\bfl_e}
        :=
        \bigwedge_{\ell_{e,i}=1}dy_{e,i},
\]
with the wedge product ordered increasingly in $i$.
\end{defn}

	The result below is a local version of
\Cref{smooth extension to schwinger2}, and follows from a similar argument. It will be needed in the proof of
\Cref{thm:anomaly operator kahler}.\begin{thm}\label{thm:regular-expression-pushforward}
			Let
			$
			\Theta\in \mathcal{RE}_{\vG}^{\mathrm{adm}}(U),
			$
			then the fiber integral
			\[
			\int_{U^{|\vG_0|-1}}
			\Theta
			\]
			extends smoothly to  $\resch\times U$.
		\end{thm}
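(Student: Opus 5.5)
We follow the proof of \Cref{smooth extension to schwinger2}, working with the local model $U^{|\vG_0|}\times\C^{|\vG_1|}$ in place of $M^{|\vG_0|}\times\C^{|\vG_1|}$. The fiber integral is taken over $U^{|\vG_0|-1}$, i.e.\ all vertex factors except the first. The relevant map is therefore the local analogue of $p_1^{\vG}$,
\[
p_1^{\vG}:\bigl(U^{|\vG_0|}\times\C^{|\vG_1|}\bigr)_{\vG}\longrightarrow U\times\cpsch,
\]
where the wonderful compactification is taken with respect to the building set $\{\triangle_S\times\mathcal{Z}_{S'}\}$ restricted to $U^{|\vG_0|}$.

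\textbf{Step 1: smooth extension after complexification.} For a single term of $\Theta$, the factor attached to an edge $e$ is $e^{-\rho_e^2/2t_e}\bfy_e^{\bfk_e}d\bfy_e^{\bfl_e}$. This is a regular expression for the edge $e$, so \Cref{propogator is meromorphic} applies via the maps $p_e$ of \Cref{smooth extension of integrand}. Hence for $N$ large, $\tau_e^N\sq_e^*$ of this factor extends smoothly to $\mathrm{Bl}_{\triangle\times\{0\}}$ and is flat along the closure of $(U\times U\setminus\triangle)\times\{0\}$. The remaining factor $A$ is smooth on $\resch\times U^{|\vG_0|}$, so its pullback under $\prod_e\sq_e$ and the blow-down is smooth on the iterated blow-up. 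Multiplying everything together, $\bigl(\prod_e\tau_e^N\bigr)\bigl(\prod_e\sq_e\bigr)^*\Theta$ extends smoothly to $\bigl(U^{|\vG_0|}\times\C^{|\vG_1|}\bigr)_{\vG}$.

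\textbf{Step 2: flatness along the critical locus.} We repeat \Cref{vanishing at critical points} verbatim. Any component $\widetilde{G}'\neq\widetilde G_{S\subseteq S}$ of $(p_1^{\vG})^*(U\times D_S)$ has a generic point where some edge $e\in S'\setminus S$ has distinct endpoints while $\tau_e=0$. At such a point the edge factor is flat, so the integrand is flat along $\widetilde G'$. By \Cref{technical corollary}, whose proof is local and includes the coordinate hypersurfaces of $U$, every critical point of $p_1^{\vG}$ lies on two distinct components of some $(p_1^{\vG})^*(U\times D_S)$. One of these is not $\widetilde G_{S\subseteq S}$, so the integrand is flat along $\Crit(p_1^{\vG})$.

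\textbf{Step 3: pushforward and descent.} The main obstacle is that $U$ is not compact, so $p_1^{\vG}$ need not be proper and \Cref{pushforward is smooth2} cannot be applied directly. I would handle this with a cutoff. It suffices to prove smoothness over each relatively compact $V\Subset U$ in the first factor. Take a cutoff $\chi\in C_c^\infty(U^{|\vG_0|-1})$ in the remaining variables; since smoothness is local, I would either assume $\Theta$ has compact support in the $\bfZ$ variables, as it does in the intended application where it is multiplied by a compactly supported test form, or treat $\chi\Theta$ together with the part where the last $|\vG_0|-1$ points are far from the first. In the latter part at least one edge along a path from vertex $1$ has distinct endpoints near $\tau=0$, so that part is flat and harmless. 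After the cutoff, the integrand is compactly supported over compact subsets of $U\times\cpsch$, and \Cref{pushforward is smooth} (the compactly supported version) applies. It yields that
\[
\Bigl(\prod_e\tau_e^N\Bigr)\Bigl(\prod_e\sq_e\Bigr)^*\int_{U^{|\vG_0|-1}}\Theta
\]
is smooth on $U\times\cpsch$. Finally, \Cref{descent of differential forms}, applied with $U$ as an extra parameter (the Schwarz–Poénaru argument is unaffected by a trivial factor), shows that $\int_{U^{|\vG_0|-1}}\Theta$ extends smoothly to $\resch\times U$.
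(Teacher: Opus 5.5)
Your proposal is correct and is essentially the paper's proof: the paper only says this is the local version of \Cref{smooth extension to schwinger2}, proved by the same argument. Your Steps 1--3 are exactly that argument: smooth extension of $\bigl(\prod_e\tau_e^N\bigr)\bigl(\prod_e\mathrm{sq}_e\bigr)^*\Theta$ to the wonderful compactification, flatness along $\operatorname{Crit}(p_1^{\vG})$ via semistability, and then \Cref{pushforward is smooth} followed by \Cref{descent of differential forms}.

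Your first resolution of the non-properness of $p_1^{\vG}$ is the right one. Assuming $\Theta$ compactly supported in $\bfZ$ is an implicit hypothesis without which the fiber integral need not even converge, and it is supplied in the application by $\eta$ and $\Phi_{\bfJ,\bfI}$. Your alternative (the far-away part is ``flat and harmless'') would not work without an extra decay assumption, since flatness at $\tau_e=0$ for one edge says nothing about convergence over the non-compact ends of $U^{|\vG_0|-1}$ at positive times. Also, in Step 2 the edge with distinct endpoints and $\tau_e=0$ should lie in $S\setminus S'$, where $S'\subsetneq S$, not in $S'\setminus S$.
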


We introduce two subspaces of $\mathcal{RE}_{\vG}^{\mathrm{adm}}(U)$ below;
their relation is described in
\Cref{lem:anti-holomorphic-ideal-gains-time}.
    \begin{defn}\label{defn:normal-and-schwinger-ideals}
		Let $\vG$ be a connected directed graph. We work in the coordinates introduced in
\Cref{defn:admissible-regular-expression}. We denote by
\[
        \overline{\mathfrak m}_{\Delta}
        \mathcal{RE}_{\vG}^{\mathrm{adm}}(U)
\]
the class of finite sums of expressions of the form
\[
        \chi\cdot\,
        (\bar\bfz_j-\bar\bfz_1)^{\bfk}\,\Theta,
        \quad
        2\le j\le |\vG_0|,
        \quad
        \Theta\in\mathcal{RE}_{\vG}^{\mathrm{adm}}(U),
\]
where $\bfk=(k_1,\cdots, k_n)\in\mathbb Z_{\ge0}^n$ with $|\bfk|:=k_1+\cdots +k_n>0$, and
$\chi\in C_c^\infty(U^{|\vG_0|})$.

Similarly, we denote by
\[
        \mathfrak t_{\vG}
        \mathcal{RE}_{\vG}^{\mathrm{adm}}(U)
\]
the class of finite sums of expressions of the form
\[
        t_e\,\Theta_e,
        \quad
        e\in\vG_1,
        \quad
        \Theta_e\in\mathcal{RE}_{\vG}^{\mathrm{adm}}(U).
\]
	\end{defn}
		
	\begin{lem}\label{lem:anti-holomorphic-ideal-gains-time}
		If $\vG$  is connected, one has
		\[
		\overline{\mathfrak m}_{\Delta}
		\mathcal{RE}_{\vG}^{\mathrm{adm}}(U)
		\subset
		\mathfrak t_{\vG}
		\mathcal{RE}_{\vG}^{\mathrm{adm}}(U).
		\]
	\end{lem}

		\begin{proof}
			Since $\vG$ is connected, for each $j=2,\ldots,|\vG_0|$ we may choose a
			path $P_j$ in the underlying unoriented graph from $1$ to $j$. Along this
			path, we have the telescoping identity
			\[
			\bar\bfz_j-\bar\bfz_1
			=
			\sum_{e\in P_j}
			\varepsilon_{j,e}
			\big(\bar\bfz_{h(e)}-\bar\bfz_{t(e)}\big),
			\]
			where $\varepsilon_{j,e}\in\{1,-1\}$ depends only on whether the orientation
			of $e$ agrees with the chosen path direction.
			
			Hence
			\[
			\chi\cdot(\bar\bfz_j-\bar\bfz_1)^{\bfk}\Theta
			=
			\chi\cdot\big(\sum_{e\in P_j}
			\varepsilon_{j,e}\,t_e\bfy_e\big)^{\bfk}\Theta \in \mathfrak t_{\vG}
		\mathcal{RE}_{\vG}^{\mathrm{adm}}(U).
			\]	
		\end{proof}

	\def\bfJ{{\mathbf{J}}}

	\begin{proof}[Proof of \Cref{thm:anomaly operator kahler}]
By \Cref{prop: anomaly vanishes on disconnected graph}, we may assume that $\vG$ is a connected decorated directed graph.
Let $(U,\bfz)$ be a coordinate chart of $M$. We use it to introduce local
coordinates on $M^{|\vG_0|}$ near the small diagonal as follows. We keep
$\bfz_1$ as the coordinate on the first copy and set
\[
        \bfx_k:=\bfz_k-\bfz_1,
        \quad 2\le k\le |\vG_0|.
\]
Then the small diagonal is given locally by
\[
        \bfx_2=\cdots=\bfx_{|\vG_0|}=0 .
\]

By using a partition of unity, we may assume that the projection of
$\operatorname{supp}(\Phi)$ onto the first copy is contained in a compact
subset \[C\subset U.\]
By \Cref{prop:anomaly-current} and
\Cref{prop: anomaly support near diagonal}, $O_{\vG}$ is a current supported
on the small diagonal.  Thus by \Cref{thm: finite jet}, there exists $N\in\mathbb N$ such that
$O_{\vG}(\Phi)$ depends only on the $N$-jet of $\Phi$ in the normal
directions to the small diagonal.
Let $\eta:\mathbb R\to\mathbb R_+$ be a bump function such that
$\eta\equiv 1$ near $0$ and $\operatorname{supp}\eta$ is sufficiently
small. For the purpose of computing $O_{\vG}(\Phi)$, we may replace
$\Phi$ by the localized finite normal Taylor polynomial 
\[
        \eta\Big(\sum_{i=2}^{|\vG_0|}|\bfx_i|^2\Big)
        \sum_{\{\bfI,\bfJ\in \Z_{\geq0}^{n(|\vG_0|-1)}:|\bfJ|+|\bfI|\le N\}}
        \bfX^{\bfJ}\bar\bfX^{\bfI}\,
        \Phi_{\bfJ,\bfI}(\bfz_1,\bar\bfz_1),
\]
where $\bfX:=(\bfx_2,\cdots,\bfx_{|\vG_0|})$.
Here $\Phi_{\bfJ,\bfI}$ has compact support inside $C\subset U,$ so that this replacement has the same $N$-jet as $\Phi$. As long as the support of $\eta$ is small enough, the expressions above also have compact support inside $U^{|\vG_0|}.$

		By \Cref{thm: Propogator has regular expression}, each Taylor monomial above gives terms of the
		form
		\[
		\Theta\wedge
		\eta \bfX^\bfJ\bar\bfX^\bfI\,\Phi_{\bfJ,\bfI}(\bfz_1,\bar\bfz_1),
		\quad
		\Theta\in\mathcal{RE}_{\vG}^{\mathrm{adm}}(U).
		\]
		
		Suppose first that $|\bfI|>0$. Then the factor
		$\bar\bfX^\bfI$ contains at least one anti-holomorphic normal coordinate.
		By \Cref{lem:anti-holomorphic-ideal-gains-time}, it is a finite sum of terms
		\[
		t_e\,\Theta_e,
		\quad
		\Theta_e\in\mathcal{RE}_{\vG}^{\mathrm{adm}}(U).
		\]
	Applying \Cref{thm:regular-expression-pushforward}, each
		\[
		\int_{U^{|\vG_0|-1}}\eta\Theta_e
		\]
		extends smoothly to $\resch\times U$. Therefore
		\[
		\int_{U^{|\vG_0|-1}} t_e\,\eta\Theta_e
		=
		t_e
		\int_{U^{|\vG_0|-1}}\eta\Theta_e
		\]
		vanishes after restriction to the boundary face where all
		$t_e$, $e\in\vG_1$, collapse. Hence every monomial with
		$
		|\bfI|>0
		$
		makes no contribution to $O_{\vG}(\Phi)$. Thus only the terms with $\bfI=0$ can contribute. Therefore the
		local contribution to $O_{\vG}(\Phi)$ is a finite linear combination of
		\[
		\Phi_{\bfJ,0}(\bfz_1,\bar\bfz_1)
		=
		\left.
		\partial_{\bfz_2}^{\bfj_2}\cdots
		\partial_{\bfz_{|\vG_0|}}^{\bfj_{|\vG_0|}}
		\Phi
		\right|_{\bfx_2=\cdots=\bfx_{|\vG_0|}=0},
		\quad
		|\bfJ|=|(\bfj_2,\cdots,\bfj_{|\vG_0|})|\le N.
		\]
		The coefficients of this finite linear combination are obtained by applying
		\Cref{thm:regular-expression-pushforward} to the corresponding admissible
		regular expressions with the monomial $\eta\bfX^\bfJ$, then integrate over $\varrho_{\vG_1}^{-1}(0)$. Hence these coefficients
		depend smoothly on  $\bfz_1,\bar\bfz_1$. This proves the theorem.
	
	\end{proof}

		\bibliography{lib}
		\bibliographystyle{plain}
	\end{document}